\newtheorem{theorem}{Theorem}[section]
\newtheorem{lemma}[theorem]{Lemma}
\newtheorem{corollary}[theorem]{Corollary}
\newtheorem{definition}[theorem]{Definition}
\newtheorem{fact}[theorem]{Fact}
\newtheorem{proposition}[theorem]{Proposition}
\newcommand{\N}{\ensuremath{\mathbb{N}}}
\newcommand{\R}{\ensuremath{\mathbb{R}}}
\newcommand{\Z}{\ensuremath{\mathbb{Z}}}
\newcommand{\lat}{\mathcal{L}}
\newcommand{\M}{\mathcal{M}}
\renewcommand{\epsilon}{\varepsilon}
\newcommand{\poly}{\mathrm{poly}}
\DeclareMathOperator*{\argmin}{arg\,min}
\DeclareMathOperator{\dist}{dist}
\renewcommand{\vec}[1]{\ensuremath{\boldsymbol{#1}}}
\newcommand{\basis}{\ensuremath{\mathbf{B}}}
\newcommand{\gs}[1]{\ensuremath{\widetilde{\vec{#1}}}}
\DeclarePairedDelimiter\floor{\lfloor}{\rfloor}
\DeclarePairedDelimiter\length{\lVert}{\rVert}
\newcommand{\pmax}{{p}_{\mathsf{max}}}
\newcommand{\pcol}{{p}_{\mathsf{col}}}
\title{
	Just Take the Average!\protect\\
	An Embarrassingly Simple $2^n$-Time Algorithm for SVP (and CVP)}
	\author{Divesh Aggarwal\\
		Centre for Quantum Technologies\\
		\texttt{dcsdiva@nus.edu.sg}
		\and Noah Stephens-Davidowitz\thanks{Supported by the National Science Foundation (NSF) under Grant No.~CCF-1320188, and the Defense Advanced Research Projects Agency (DARPA) and Army Research
			Office (ARO) under Contract No.~W911NF-15-C-0236.}\\
		New York University\\ 
		\texttt{noahsd@gmail.com}}
\date{}
\begin{document}

\maketitle

\begin{abstract}
	We show a $2^{n+o(n)}$-time (and space) algorithm for the Shortest Vector Problem on lattices (SVP) that works by repeatedly running an embarrassingly simple ``pair and average'' sieving-like procedure on a list of lattice vectors. This matches the running time (and space) of the current fastest known algorithm, due to Aggarwal, Dadush, Regev, and Stephens-Davidowitz (ADRS, in \emph{STOC}, 2015), with a far simpler algorithm. Our algorithm is in fact a modification of the ADRS algorithm, with a certain careful rejection sampling step removed.
	
	The correctness of our algorithm follows from a more general ``meta-theorem,'' showing that such rejection sampling steps are unnecessary for a certain class of algorithms and use cases. In particular, this also applies to the related $2^{n + o(n)}$-time algorithm for the Closest Vector Problem (CVP), due to Aggarwal, Dadush, and Stephens-Davidowitz (ADS, in \emph{FOCS}, 2015), yielding a similar embarrassingly simple algorithm for $\gamma$-approximate CVP for any $\gamma = 1+2^{-o(n/\log n)}$. (We can also remove the rejection sampling procedure from the $2^{n+o(n)}$-time ADS algorithm for \emph{exact} CVP, but the resulting algorithm is still quite complicated.)
\end{abstract}

\section{Introduction}

A lattice $\lat \subset \R^n$ is the set of all integer linear combinations of some
linearly independent basis vectors $\vec{b}_1,\dots,\vec{b}_n \in \R^n$,
\[
\lat := \Big\{ \sum_{i=1}^n z_i \vec{b}_i \ : \ z_i \in \Z \Big\}
\; .
\]

The two most important computational problems on lattices are the Shortest Vector Problem (SVP) and the Closest Vector Problem (CVP). Given a basis $\vec{b}_1,\dots,\vec{b}_n$ for a lattice $\lat \subset \R^n$,
SVP asks us to find a shortest non-zero vector in $\lat$, and CVP asks us to find a closest lattice vector to some target vector $\vec{t} \in \R^n$. (Throughout this paper, we define distance in terms of the Euclidean, or $\ell_2$, norm.) CVP seems to be the harder of the two problems, as there is an efficient reduction from CVP to SVP that preserves the dimension $n$~\cite{GMSS99}, but both problems are known to be NP-hard~\cite{Boas81,Ajtai-SVP-hard}. They are even known to be hard to approximate for certain approximation factors~\cite{Mic01svp,DKRS03,Khot05svp,HRsvp}.

Algorithms for solving these problems, both exactly and over a wide range of approximation factors, have found innumerable applications since the founding work by Lenstra, Lenstra, and Lov{\'a}sz in 1982~\cite{LLL82}. (E.g.,~\cite{LLL82,Lenstra83,Shamir84,Kannan87,Buda89}.) More recently, following the celebrated work of Ajtai~\cite{Ajtai96} and Regev~\cite{oded05}, a long series of works has resulted in many cryptographic constructions whose security is based on the assumed \emph{worst-case} hardness of approximating these (or closely related) problems. (See~\cite{chris_survey} for a survey of such constructions.) And, some of these constructions are now nearing widespread deployment. (See, e.g., \cite{NIST_quantum,new_hope,frodo}.)

Nearly all of the fastest known algorithms for lattice problems---either approximate or exact---work via a reduction to either exact SVP or exact CVP (typically in a lower dimension). Even the fastest known polynomial-time algorithms (which solve lattice problems only up to large approximation factors) work by solving exact SVP on low-dimensional sublattices~\cite{Schnorr87,GN08,MW16}. Therefore, algorithms for exact lattice problems are of particular importance, both theoretically and practically (and both for direct applications and to aid in the selection of parameters for cryptography). Indeed, much work has gone into improving the running time of these algorithms (e.g.,~\cite{Kannan87,AKS01,AKS02,PS09,MV10,MV13}), culminating in $2^{n +o(n)}$-time algorithms for both problems based on the technique of \emph{discrete Gaussian sampling}, from joint work with Dadush and Regev~\cite{ADRS15} and follow-up work with Dadush~\cite{ADS15}.

In order to explain our contribution, we first give a high-level description of the SVP algorithm from~\cite{ADRS15}. (The presentation below does not represent the way that we typically view the~\cite{ADRS15} algorithm.)

\subsection{Sieving by averages}

One can think of the SVP algorithm from~\cite{ADRS15} as a rather strange variant of \emph{randomized sieving}. Recall that the celebrated randomized sieving technique due to Ajtai, Kumar, and Sivakumar~\cite{AKS01} starts out with a list of $2^{O(n)}$ not-too-long random vectors $\vec{X}_1,\ldots, \vec{X}_M$ sampled from some efficiently samplable distribution. The sieving algorithm then repeatedly (1) searches for pairs of vectors $(\vec{X}_i, \vec{X}_j)$ that happen to be remarkably close together; and then (2) replaces the old list of vectors with the differences of these pairs $\vec{X}_i - \vec{X}_j$.

The~\cite{ADRS15} algorithm similarly starts with a randomly chosen collection of $2^{n + o(n)}$ not-too-long vectors $\vec{X}_1,\ldots, \vec{X}_M$ and repeatedly (1) selects pairs of vectors according to some rule; and (2) replaces the old list of vectors with some new vectors generated from these pairs. However, instead of taking the differences $\vec{X}_i - \vec{X}_j$ of pairs $(\vec{X}_i, \vec{X}_j)$, the~\cite{ADRS15} algorithm takes \emph{averages}, $(\vec{X}_i + \vec{X}_j)/2$.

Notice that the average $(\vec{X}_i + \vec{X}_j)/2$ of two lattice vectors $\vec{X}_i, \vec{X}_j \in \lat$ will not generally be in the lattice. In fact, this average will be in the lattice if and only if the two vectors are equivalent mod $2\lat$, i.e., $\vec{X}_i \equiv \vec{X}_j \bmod 2\lat$. Therefore, at a minimum, the~\cite{ADRS15} algorithm should select pairs that lie in the same \emph{coset} mod $2\lat$. (Notice that there are $2^n$ possible cosets.) I.e., the simplest possible version of ``sieving by averages'' just repeats Procedure~\ref{proc:pair_and_average} many times (starting with a list of $2^{n+o(n)}$ vectors, which is sufficient to guarantee that we can pair nearly every vector with a different unique vector in the same coset). The~\cite{ADRS15} algorithm is more complicated than this, but it still only uses the cosets of the vectors mod $2\lat$ when it decides which vectors to pair.

\begin{algorithm}
	\SetKwInOut{Input}{Input}
	\SetKwInOut{Output}{Output}
	
	\underline{Pair\_and\_Average} $(\vec{X}_1,\ldots, \vec{X}_M)$\\
	\Input{\ \ List of vectors $\vec{X}_i \in \lat - \vec{t}$}
	\Output{\ \ List of vectors $\vec{Y}_i \in \lat - \vec{t}$}
	\For{each unpaired vector $\vec{X}_i$}
	{
	\If{there exists an unpaired vector $\vec{X}_j$ with $\vec{X}_j \equiv \vec{X}_i \bmod 2\lat$}{add $(\vec{X}_i + \vec{X}_j)/2$ to the output}	
	}
	\caption{\label{proc:pair_and_average}The basic ``pair and average'' procedure, which computes the averages $(\vec{X}_i + \vec{X}_j)/2$ of disjoint pairs $(\vec{X}_i, \vec{X}_j)$ satisfying $\vec{X}_i \equiv \vec{X}_j \bmod 2\lat$.}
\end{algorithm}

It might seem like a rather big sacrifice to only look at a vector's coset, essentially ignoring all geometric information. For example, the~\cite{ADRS15} algorithm (and our variant) is likely to miss many opportunities to pair two vectors whose average is very short.
But, in exchange for this sacrifice, we get very strong control over the distribution of the vectors at each step. In particular, before applying Procedure~\ref{proc:pair_and_average}, the~\cite{ADRS15} algorithm uses a careful \emph{rejection sampling} procedure over the cosets to guarantee that at each step of the algorithm, the vectors are distributed as independent samples from a distribution that we understand very well (the discrete Gaussian, which we describe in Section~\ref{sec:techniques}). I.e., at each step, the algorithm randomly throws away many of the vectors in each coset according to some rule that depends only on the list of cosets, and it only runs Procedure~\ref{proc:pair_and_average} on the remaining vectors, as shown in Procedure~\ref{proc:reject}. This rejection sampling procedure is so selective that, though the algorithm starts out with $2^{n + o(n)}$ vectors, it typically finishes with only about $2^{n/2}$ vectors.

\begin{algorithm}
	\SetKwInOut{Input}{Input}
	\SetKwInOut{Output}{Output}
	
	\underline{Reject-and-Average Sieve} $(\ell; \vec{X}_1,\ldots, \vec{X}_M)$\\
	\Input{\ \ Number of steps $\ell$, list of vectors $\vec{X}_i \in \lat - \vec{t}$}
	\Output{\ \ List of vectors $\vec{Y}_i \in \lat - \vec{t}$}
	\For{$i = 1,\ldots, \ell$}
	{
		\For{$j = 1,\ldots, M$}{set $\vec{c}_j$ to be the coset of $\vec{X}_j$ mod $2\lat$}
		$\{j_1,\ldots, j_m\} \leftarrow f(\vec{c}_1,\ldots, \vec{c}_M)$\\
		$(\vec{X}_1,\ldots, \vec{X}_{M'}) \leftarrow \text{Pair\_and\_Average}(\vec{X}_{j_1},\ldots, \vec{X}_{j_m})$\\
		$M \leftarrow M'$
	}
	output $(\vec{X}_1,\ldots, \vec{X}_{M})$.
	\caption{\label{proc:reject}The ``reject and average'' sieving procedure, which repeatedly applies some rejection sampling procedure $f$ according to the cosets of the $\vec{X}_i$ mod $2\lat$ and then applies Procedure~\ref{proc:pair_and_average} (Pair\_and\_Average) to the ``accepted'' vectors. Here, $f$ is some (possibly randomized) function that maps a list of cosets $(\vec{c}_1,\ldots, \vec{c}_M)$ mod $2\lat$ to a set of ``accepted'' indices $\{j_1,\ldots, j_m\} \subseteq \{1,\ldots, M\}$.}
\end{algorithm}

This does seem quite wasteful (since the algorithm typically throws away the vast majority of its input vectors) and a bit naive (since the algorithm ignores, e.g., the lengths of the vectors). But, because we get such good control over the output distribution, the result is still the fastest known algorithm for SVP.\footnote{There are various ``heuristic'' sieving algorithms for SVP that run significantly faster (e.g., in time $(3/2)^{n/2}$~\cite{BDGL16}) but do not have formal proofs of correctness. One of the reasons that these algorithms lack proofs is because we do not understand their output distributions.} (The~\cite{ADS15} algorithm for CVP relies on the same core idea, plus a rather complicated recursive procedure that converts an approximate CVP algorithm with certain special properties into an exact CVP algorithm.)

\subsection{Our contribution}

Our main contribution is to show that the rejection sampling procedure used in the~\cite{ADRS15} algorithm is unnecessary! Indeed, informally, we show that ``any collection of vectors that can be found via such a procedure (when the input vectors are sampled independently from an appropriate distribution) can also be found without it.'' (We make this precise in Theorem~\ref{thm:meta_thm}.) In particular, the SVP algorithm in~\cite{ADRS15} can be replaced by an extremely simple algorithm, which starts with a list of $2^{n +o(n)}$ vectors sampled from the right distribution and then just runs Procedure~\ref{proc:pair_and_average} repeatedly. (Equivalently, it runs Procedure~\ref{proc:reject} with $f$ taken to be the trivial function that always outputs all indices, $\{1,\ldots, M\}$.)

\begin{theorem}[SVP, informal]
	\label{thm:SVP_intro}
	There is a $2^{n + o(n)}$-time (and space) algorithm for SVP that starts with $2^{n + o(n)}$ vectors sampled from the same distribution as the~\cite{ADRS15} algorithm and then simply applies Procedure~\ref{proc:pair_and_average} repeatedly, $\ell = O(\log n)$ times.
\end{theorem}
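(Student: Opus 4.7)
The plan is to reduce the correctness analysis of the stripped-down sieve to that of the full reject-and-average sieve of~\cite{ADRS15}, via the meta-theorem (Theorem~\ref{thm:meta_thm}) that asserts the rejection step can always be dropped. First I would choose initial parameters: sample $M_0 = 2^{n+o(n)}$ independent vectors from a discrete Gaussian over $\lat$ with parameter $s_0$ above the smoothing parameter of $2\lat$, so that the induced distribution on cosets in $\lat/2\lat$ is statistically close to uniform. The parameter $s_0$ is chosen so that after $\ell = O(\log n)$ halvings in the square of the parameter, the final parameter $s_\ell = s_0/2^{\ell/2}$ drops below the $\sqrt{2/3}\cdot\lambda_1(\lat)$ threshold used in~\cite{ADRS15}, ensuring that a discrete Gaussian sample at parameter $s_\ell$ equals a shortest nonzero lattice vector with at least inverse-polynomial probability.

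Next I would analyze a single application of Procedure~\ref{proc:pair_and_average} on independent discrete Gaussian inputs. The standard ``Gaussian collision'' fact says that if $\vec{X},\vec{X}'$ are independent discrete Gaussian samples on $\lat$ with parameter $s$, conditioned on lying in a common coset $\vec{c}\in\lat/2\lat$, then their average $(\vec{X}+\vec{X}')/2$ is itself a discrete Gaussian, on the shifted lattice $\lat + \vec{c}/2$ at parameter $s/\sqrt{2}$. So under an (unjustified) i.i.d. assumption, iterating $\ell$ rounds would yield vectors marginally distributed as a discrete Gaussian at parameter $s_\ell$. Counting-wise, a coset with $k$ vectors contributes $\lfloor k/2\rfloor$ averages; since smoothing makes the initial cosets roughly uniformly populated with $2^{o(n)}$ vectors apiece, almost every vector gets paired, and we retain $M_{i+1}\geq M_i/2 - o(M_i)$ vectors per round, ending with $M_\ell = 2^{n+o(n)}$ vectors.

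The main obstacle, and where the meta-theorem does its work, is that after even one round the outputs are \emph{not} independent: the matching inside each coset couples them, and across rounds these dependencies compound. The rejection sampling in~\cite{ADRS15} was specifically designed to restore independence, at the price of discarding most vectors. My task is to show that the induced correlations cannot hurt us for SVP. The cleanest route is to couple the two procedures: for any monotone success event $E$ (here, ``some output vector equals $\pm\vec{v}$ for a shortest nonzero $\vec{v}\in\lat$''), the meta-theorem would let me conclude that the probability that our algorithm realizes $E$ is at least a $1/\poly(n)$ fraction of the probability that the rejection-sampled algorithm realizes $E$, uniformly over the choice of the rejection function $f$. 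Since the analysis of~\cite{ADRS15} already establishes $\Pr[E]\geq 1 - 2^{-\Omega(n)}$ for a suitable $f$, the meta-theorem transfers this guarantee to our bare algorithm (up to a polynomial loss absorbed by the $2^{o(n)}$ slack in $M_0$).

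The running time and space are both $2^{n+o(n)}$: we manipulate at most $M_0 = 2^{n+o(n)}$ vectors throughout; each Pair\_and\_Average step can be implemented in time $\tilde{O}(M_0\cdot \poly(n))$ by bucketing on cosets mod $2\lat$ (which have $\poly(n)$-size representatives given a basis); and there are only $\ell = O(\log n)$ rounds. I expect the core technical difficulty to lie entirely in establishing the meta-theorem itself: proving that matching-induced correlations cannot depress the marginal probability of any individual output vector landing on a target lattice point, compared with the idealized rejection-sampled version, by more than a polynomial factor, uniformly over all admissible rejection functions $f$.
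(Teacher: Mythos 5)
Your high-level plan --- transfer the success guarantee from the rejection-sampled ADRS15 algorithm to the bare Pair\_and\_Average sieve via the meta-theorem --- matches the paper's. But two of your stepping stones are wrong in ways that would prevent you from actually proving the meta-theorem, which you correctly identify as where all the work is.

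First, your ``standard Gaussian collision fact'' is not correct. If $\vec{X},\vec{X}'$ are independent samples from $D_{2\lat+\vec{c},s}$, their average $(\vec{X}+\vec{X}')/2$ is \emph{not} a discrete Gaussian on any shift of $\lat$. Its support is $\lat$ itself (since $\vec{c}\in\lat$), and, by Lemma~\ref{lem:one_coset}, the conditional distribution \emph{within} each coset of $2\lat$ is Gaussian with parameter $s/\sqrt{2}$, but the \emph{weights} on the cosets are $\propto\rho_{s/\sqrt2}(2\lat+\vec{c}+\vec{d})$, which are not proportional to the Gaussian masses $\rho_{s/\sqrt2}(2\lat+\vec{d})$. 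This non-Gaussianity at the coset level is precisely what the ADRS15 rejection step is engineered to correct (see Item~\ref{item:squared} of Corollary~\ref{cor:mixtures}). The paper's crucial observation is that this weaker property --- being a \emph{mixture of independent Gaussians over the cosets of $2\lat$} (Definition~\ref{def:mixed_gaussian}) --- is \emph{preserved} by Procedure~\ref{proc:pair_and_average} and by any coset-only rejection $f$, and that Procedure~\ref{proc:pair_and_average} also preserves domination of coset distributions. Without that structural invariant you have no handle on the distribution after even one round, and your ``unjustified i.i.d.\ assumption'' stays unjustified.

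Second, you substantially underestimate the meta-theorem. Theorem~\ref{thm:meta_thm} gives \emph{exact} domination: for every target multiset $S$, $\Pr[S\subseteq\mathcal{A}'(\cdots)]\geq\Pr[S\subseteq\mathcal{A}(\cdots)]$ with no $1/\poly(n)$ loss, uniformly over $f$. This follows cleanly from the invariant above together with the trivial observation (Fact~\ref{fact:monotone}) that passing to a sublist can only decrease the multiset probabilities. Aiming for a lossy inequality is not merely suboptimal; the natural way to prove even a lossy version is to first discover the mixture-of-Gaussians structure, at which point you get the sharp statement for free. Relatedly, the probability that one sample from $D_{\lat,s_\ell}$ at the right parameter is a shortest nonzero vector is $1.38^{-n-o(n)}$ (Proposition~\ref{prop:SVPtoDGS}), not inverse-polynomial; the argument needs the final output to dominate $\geq 2^{n/2}=(\sqrt{2})^n>1.4^n$ i.i.d.\ Gaussians (Corollary~\ref{cor:pipeline}), not just $\poly(n)$ of them. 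The $\sqrt{2/3}\cdot\lambda_1$ threshold you cite does not appear here; the relevant parameter is $s\approx\sqrt{2^{0.198}\pi e/n}\cdot\lambda_1(\lat)$.
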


The situation for CVP is, alas, more complicated because Procedure~\ref{proc:reject} is not the most difficult part of the \emph{exact} CVP algorithm from~\cite{ADS15}. Indeed, while this algorithm does run Procedure~\ref{proc:reject} and we \emph{do} show that we can remove the rejection sampling procedure, the resulting algorithm retains the complicated recursive structure of the original~\cite{ADS15} algorithm. However,~\cite{ADS15} also shows a much simpler non-recursive version of their algorithm that solves CVP up to an extremely good approximation factor. If we are willing to settle for such an algorithm, then we get the same result for CVP.

\begin{theorem}[CVP, informal]
	\label{thm:CVP_intro}
	There is a $2^{n + o(n)}$-time (and space) algorithm that approximates CVP up to an approximation factor $\gamma$  for any $\gamma = 1+2^{-o(n/\log n)}$ that starts with $2^{n + o(n)}$ vectors from the same distribution as the~\cite{ADS15} algorithm and then simply applies Procedure~\ref{proc:pair_and_average} repeatedly, $\ell = o(n/\log n)$ times.
\end{theorem}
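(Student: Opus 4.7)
The plan is to obtain Theorem~\ref{thm:CVP_intro} by applying the ``meta-theorem'' (Theorem~\ref{thm:meta_thm}) to the \emph{non-recursive} approximate CVP algorithm from~\cite{ADS15}, which as the introduction notes fits squarely into the reject-and-average framework of Procedure~\ref{proc:reject}. In slogan form: \cite{ADS15} already gives us a reject-and-average algorithm that solves $\gamma$-approximate CVP for $\gamma = 1+2^{-o(n/\log n)}$, and the meta-theorem lets us strip the rejection step off the front of it.

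In more detail, the first step is to recall the non-recursive approximate CVP algorithm of~\cite{ADS15}: given a basis of $\lat$ and a target $\vec{t}$, sample $M = 2^{n+o(n)}$ independent discrete Gaussian vectors over $\lat - \vec{t}$ at a suitable parameter $s$, and then run Procedure~\ref{proc:reject} with the \cite{ADS15} rejection function $f$ (which depends only on the cosets mod $2\lat$) for $\ell = o(n/\log n)$ iterations. Because each averaging step that pairs two independent Gaussians at parameter $s'$ produces a Gaussian at parameter $s'/\sqrt{2}$, the output after $\ell$ steps is i.i.d.\ discrete Gaussian at parameter $s/2^{\ell/2}$ on $\lat - \vec{t}$; taking $\ell = o(n/\log n)$ drives this parameter low enough that the shortest vector $\vec{Y}$ in the output list satisfies $\|\vec{Y}\| \le (1 + 2^{-o(n/\log n)})\dist(\vec{t},\lat)$ with high probability. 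Then I would plug this algorithm into Theorem~\ref{thm:meta_thm}, whose informal content is precisely that the collection of output vectors produced by any reject-and-average algorithm in this framework can also be produced, with comparable probability, by simply running Procedure~\ref{proc:pair_and_average} repeatedly on the unmodified i.i.d.\ input (possibly after increasing $M$ by a subexponential factor to compensate for the fact that we no longer ``thin out'' cosets before pairing). In particular, the event ``the output contains a $\gamma$-approximate closest lattice vector to $\vec{t}$'' transfers from the reject-and-average algorithm to the plain pair-and-average algorithm.

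For the complexity, each call to Procedure~\ref{proc:pair_and_average} can be implemented in time $\widetilde{O}(M)$ by sorting the vectors by their canonical coset representatives mod $2\lat$ (computed via the Hermite normal form of $2\basis$) and pairing adjacent equal cosets; over $\ell = o(n/\log n)$ iterations the list shrinks by at most a factor $2^\ell = 2^{o(n/\log n)}$, so $M = 2^{n+o(n)}$ suffices throughout and the total time and space remain $2^{n+o(n)}$. I expect the main obstacle to be verifying the hypotheses of Theorem~\ref{thm:meta_thm} for this particular use case: one has to check that the ADS15 rejection function and initial Gaussian parameter lie in the regime that the meta-theorem handles, and, crucially, that the statistical ``domination'' provided by the meta-theorem survives $\ell = o(n/\log n)$ compositions. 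This last point is precisely what forces the restriction $\gamma = 1 + 2^{-o(n/\log n)}$ and what prevents the same argument from yielding an equally simple \emph{exact} CVP algorithm---for exact CVP one would need $\ell = \Omega(n/\log n)$, at which point the error bounds from the meta-theorem degrade and one is forced back into the recursive machinery of~\cite{ADS15}.
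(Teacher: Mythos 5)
Your high-level plan is the right one and matches the paper's intended proof: take the non-recursive approximate CVP pipeline of~\cite{ADS15} (which is an instance of Procedure~\ref{proc:reject} with a coset-only rejection function $f$), observe that it produces a $\gamma$-approximate closest vector with high probability, and then invoke Theorem~\ref{thm:meta_thm} to replace $f$ by the trivial function---the paper itself states that Theorem~\ref{thm:meta_thm} together with~\cite{ADS15} ``immediately implies'' Theorem~\ref{thm:CVP_intro}, and the appendix proof (via Corollary~\ref{cor:pipeline} and Proposition~\ref{prop:DGS_approx_CVP}) is just a self-contained recreation of the same argument.

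Two points in your explanation are off, though, and are worth correcting because they suggest a misunderstanding of what the meta-theorem actually delivers. First, you hedge about ``possibly increasing $M$ by a subexponential factor,'' but no increase is needed: Theorem~\ref{thm:meta_thm} asserts that, for the \emph{same} $M$ and \emph{same} input, $\mathcal{A}'$ dominates $\mathcal{A}$, so any success probability of the rejection version is inherited verbatim. Second, and more substantively, you attribute the restriction $\ell = o(n/\log n)$ (and hence $\gamma = 1 + 2^{-o(n/\log n)}$) to ``error bounds from the meta-theorem degrading'' with $\ell$. That is not what happens: the domination statement in Theorem~\ref{thm:meta_thm} is exact and holds for every $\ell \geq 1$ with no degradation whatsoever. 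The real constraint is the sample budget, visible in Theorem~\ref{thm:combiner} and Corollary~\ref{cor:pipeline}: each round of pairing shrinks the guaranteed number of Gaussian samples by a $\mathrm{poly}(n)$ factor (the $32\kappa$ and the Gaussian-mass ratios), so starting from $M = 2^{n+o(n)}$ one can afford only $\ell = o(n/\log n)$ rounds before $M'$ drops below one. The approximation factor then comes from Proposition~\ref{prop:DGS_approx_CVP}, since the final parameter is $s/2^{\ell/2}$. Relatedly, the reason exact CVP does not become simple is \emph{not} that one would need a larger $\ell$ at which the meta-theorem fails; rejection sampling can be removed from the exact~\cite{ADS15} algorithm too, but that algorithm is structurally recursive (it reduces CVP on $\lat$ to CVP on sublattices) for reasons independent of Procedure~\ref{proc:reject}, so removing the rejection step does not flatten it into a simple pipeline.
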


\noindent In practice, such a tiny approximation factor is almost always good enough for applications. 

\subsection{Proof techniques}
\label{sec:techniques}

To describe the technical ideas behind our result, we now define the \emph{discrete Gaussian distribution}, which plays a fundamental role in the algorithms in~\cite{ADRS15,ADS15} and a big part in our analysis. For any vector $\vec{x} \in \R^n$ and parameter $s > 0$, we define its Gaussian mass as
\[
\rho_s(\vec{x}) := \exp(-\pi \|\vec{x}\|^2/s^2)
\; ,
\]
and we extend this definition to a shift of a lattice $\lat \subset \R^n$ with shift vector $\vec{t} \in \R^n$ in the natural way,
\[
\rho_s(\lat - \vec{t}) := \sum_{\vec{y} \in \lat}\rho_s(\vec{y} - \vec{t})
\; .
\]
The \emph{discrete Gaussian distribution} $D_{\lat - \vec{t}, s}$ is the probability distribution over $\lat - \vec{t}$ induced by this measure, given by
\[
\Pr_{\vec{X} \sim D_{\lat - \vec{t}, s}}[\vec{X} = \vec{y} - \vec{t}] := \frac{\rho_s(\vec{y} - \vec{t})}{\rho_s(\lat - \vec{t})}
\]
for any $\vec{y} \in \lat$.

For very large parameters $s> 0$, we can sample from the discrete Gaussian $D_{\lat - \vec{t}, s}$ efficiently~\cite{GPV08,BLPRS13}. (Notice that $D_{\lat - \vec{t}, s}$ tends to concentrate on shorter vectors as the parameter $s > 0$ gets smaller. In particular,~\cite{ADRS15} showed that about $1.38^n$ independent samples from the discrete Gaussian $D_{\lat, s}$ with an appropriately chosen parameter $s$ will contain a shortest non-zero lattice vector with high probability. See Proposition~\ref{prop:SVPtoDGS}.) So, in~\cite{ADRS15,ADS15}, we use Procedure~\ref{proc:reject} with a carefully chosen rejection sampling procedure $f$ in order to convert many independent samples from $D_{\lat - \vec{t}, s}$ with a relatively large parameter $s$ to some smaller number of independent samples from $D_{\lat - \vec{t}, s/2^{\ell/2}}$.

This rejection sampling is certainly necessary if we wish to use Procedure~\ref{proc:pair_and_average} to sample from the discrete Gaussian distribution. Our new observation is that, even when we do not do this rejection sampling, the output of Procedure~\ref{proc:pair_and_average} still has a nice distribution. In particular, if we fix the coset mod $2\lat$ of a pair of discrete Gaussian vectors $(\vec{X}_i, \vec{X}_j)$ with parameter $s > 0$, then their average will be distributed as a \emph{mixture} of discrete Gaussians with parameter $s/\sqrt{2}$ over the cosets of $2\lat$. I.e., while the probability of their average landing in any particular coset will not in general be proportional to the Gaussian mass of the coset, the distribution \emph{inside each coset} will be exactly Gaussian. (See Lemma~\ref{lem:one_coset}.) 

This observation is sufficient to prove that 
no matter what rejection sampling procedure $f$ we use in Procedure~\ref{proc:reject}, if the input consists of independent samples from $D_{\lat -\vec{t}, s}$, the output will always be distributed as some \emph{mixture} of samples from $D_{2\lat + \vec{c} - \vec{t}, s/2^{\ell/2}}$ over the cosets $\vec{c} \in \lat/(2\lat)$. I.e., while the output distribution might distribute weight amongst the cosets differently, if we condition on a fixed number of vectors landing in each coset, the output will always be distributed as independent discrete Gaussian vectors with parameter $s/2^{\ell/2}$. It follows immediately that ``rejection sampling cannot help us.'' In particular, the probability that the output of Procedure~\ref{proc:reject} will contain a particular vector (say a shortest non-zero vector) with any rejection sampling procedure $f$ will never be greater than the probability that we would see that vector without rejection sampling (i.e., when $f$ is the trivial function that outputs $\{1,\ldots, M\}$).\footnote{%
	Notice that this property is far from obvious without the observation that the output distribution is always a mixture of Gaussians over the cosets. For example, if we modified Procedure~\ref{proc:reject} so that $f$ acted on the $\vec{X}_i$ themselves, rather than just their cosets mod $2\lat$, then this property would no longer hold.} 
See Corollary~\ref{cor:mixtures} and Theorem~\ref{thm:meta_thm} for more detail.

\subsection{An open problem---towards a \texorpdfstring{$2^{n/2}$}{2n/2}-time algorithm}

Our result shows that all known applications of the $2^{n + o(n)}$-time discrete Gaussian sampling algorithms in~\cite{ADRS15,ADS15} work just as well if we remove the rejection sampling procedure from these algorithms. This in particular includes the SVP application mentioned in Theorem~\ref{thm:SVP_intro} and the approximate CVP application mentioned in Theorem~\ref{thm:CVP_intro}. (More generally, we can remove the rejection sampling procedure from any application that simply relies on finding a set of vectors with a certain property in the output distribution, such as a shortest non-zero vector, all shortest non-zero vectors, a vector that is close to a shortest lattice vector in $\lat - \vec{t}$, etc.)

However,~\cite{ADRS15} also presents a $2^{n/2+o(n)}$-time algorithm that samples from $D_{\lat - \vec{t}, s}$ as long as the parameter $s > 0$ is not too small. (In particular, we need $s \geq \sqrt{2} \eta_{1/2}(\lat)$, where $\eta_{1/2}(\lat)$ is the \emph{smoothing parameter} of the lattice. See~\cite{ADRS15} or~\cite{NSDthesis} for the details.) This algorithm is similar to the $2^{n + o(n)}$-time algorithms in that it starts with independent discrete Gaussian vectors with some high parameter, and it gradually lowers the parameter using a rejection sampling procedure together with a procedure that takes the averages of pairs of vectors that lie in the same coset modulo some sublattice. But, it fails for smaller parameters specifically because the rejection sampling procedure that it uses must throw out too many vectors in this case. (In~\cite{NSDthesis}, we use a different rejection sampling procedure that never throws away too many vectors, but we do not know how to implement it in $2^{n/2+o(n)}$ time for small parameters $s < \sqrt{2} \eta_{1/2}(\lat)$.) If we could find a suitable variant of this algorithm that works for small parameters, we would be able to solve SVP in $2^{n/2 + o(n)}$ time. 

So, we are naturally very interested in understanding what happens when we simply remove the rejection sampling procedure from this algorithm. And, the fact that this works out so nicely for the $2^{n + o(n)}$-time algorithm works seems quite auspicious! Unfortunately, we are unable to say very much at all about the resulting distribution in the $2^{n/2 + o(n)}$-time case.\footnote{After one step of ``pairing and averaging,'' we know exactly the distribution that we get, and it \emph{is} a weighted combination of Gaussians over the cosets of a certain sublattice! This seems quite auspicious. Unfortunately, the particular sublattice is not the same sublattice that we use to pair the vectors in the next step, and we therefore are unable to say much at all about what happens even after two steps.} So, we leave the study of this distribution as an open problem.

\subsection*{Organization}

In Section~\ref{sec:prelims}, we review a few basic facts necessary to prove our main ``meta-theorem,'' Theorem~\ref{thm:meta_thm}, which shows that ``rejection sampling is unnecessary.'' In Section~\ref{sec:average_stupid}, we finish this proof. In particular, this implies Theorem~\ref{thm:SVP_intro} and~\ref{thm:CVP_intro}. For completeness, in the appendix, we prove these theorems more directly and show the resulting algorithms in full detail.

\subsection*{Acknowledgments}

We thank Oded Regev and Daniel Dadush for many helpful discussions.

\section{Preliminaries}
\label{sec:prelims}

We write $\N := \{0,1,\ldots\}$ for the natural numbers (including zero). We make little to no distinction between a random variable and its distribution. For $\vec{x} = (x_1,\ldots, x_n) \in \R^n$, we write $\|\vec{x}\| := (x_1^2 + \cdots + x_n^2)^{1/2}$ for the Euclidean norm of $\vec{x}$. For any set $S$, we write $S^* := \{(x_1,\ldots, x_M) \ : \ x_i \in S \}$ for the set lists over $S$ of finite length. (The order of elements in a list $\mathcal{M} \in S^*$ will never concern us. We could therefore instead use multisets.)

\subsection{Lattices}

A lattice $\lat \subset \R^n$ is the set of integer linear combinations
\[
\lat := \{ a_1 \vec{b}_1 + \cdots + a_n \vec{b}_n \ : \ a_i \in \Z\}
\]
of some linearly independent basis vectors $\basis := (\vec{b}_1,\ldots, \vec{b}_n)$. We sometimes write $\lat(\basis)$ for the lattice spanned by $\basis$.

We write $\lat/(2\lat)$ for the set of cosets of $\lat$ over $2\lat$. E.g., if $\vec{b}_1,\ldots, \vec{b}_n$ is a basis for $\lat$, then each coset $\vec{c} \in \lat/(2\lat)$ corresponds to a unique vector $a_1 \vec{b}_1 + \cdots + a_n \vec{b}_n$ with $a_i \in \{0,1\}$, and this correspondence is a bijection. Notice that the cosets in $\lat/(2\lat)$ have a group structure under addition that is isomorphic to $\Z_2^n$.
	
\subsection{The discrete Gaussian}

For a parameter $s > 0$ and vector $\vec{x} \in \R^n$, we write
\[
\rho_s(\vec{x}) := \exp(-\pi \|\vec{x}\|^2/s^2)
\]
for the \emph{Gaussian mass of $\vec{x}$ with parameter $s > 0$}. Up to scaling, the Gaussian mass is the unique function on $\R^n$ that is invariant under rotations and a product function. In particular, it satisfies the following nice rotation identity,
\begin{equation}
\label{eq:rotate}
\rho_{s}(\vec{x}) \rho_s(\vec{y}) = \rho_{\sqrt{2} s}(\vec{x} + \vec{y}) \rho_{\sqrt{2} s}(\vec{x} - \vec{y})
\end{equation}
for any parameter $s > 0$ and vectors $\vec{x}, \vec{y} \in \R^n$.
This identity is fundamental to the results of~\cite{ADRS15,ADS15}. (See~\cite{riemann17,NSDthesis} for a more detailed description of this connection and some additional results.)

We extend the Gaussian mass to a shift $\vec{t} \in \R^n$ of a lattice $\lat \subset \R^n$ in the natural way,
\[
\rho_s(\lat - \vec{t}) := \sum_{\vec{y} \in \lat} \rho_s(\vec{y} - \vec{t})
\; ,
\]
and we call this the \emph{Gaussian mass of $\lat - \vec{t}$ with parameter $s$}.

We will need the following identity from~\cite{ADS15}. (See~\cite{riemann17, NSDthesis} for a much more general identity.)

\begin{lemma}
	\label{lem:collision_identity}
	For any lattice $\lat \subset \R^n$, shift $\vec{t}$, and parameter $s > 0$, we have
	\[
	\sum_{\vec{c} \in \lat/(2\lat)} \rho_s(2\lat + \vec{c} - \vec{t})^2 = \rho_{s/\sqrt{2}}(\lat) \rho_{s/\sqrt{2}}(\lat - \vec{t})
	\; .
	\]
\end{lemma}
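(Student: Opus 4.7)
The plan is to expand the square on the left-hand side into a double sum, reindex by a change of variables that turns the constraint $\vec{y}_1 \equiv \vec{y}_2 \bmod 2\lat$ into an unconstrained sum over $\lat \times \lat$, and then apply the rotation identity \eqref{eq:rotate} to split the resulting summand into a product.

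Concretely, I would first observe that as $\vec{c}$ ranges over $\lat/(2\lat)$, the sets $\{(\vec{y}_1,\vec{y}_2) : \vec{y}_1,\vec{y}_2 \in 2\lat+\vec{c}\}$ partition the set of ordered pairs $(\vec{y}_1,\vec{y}_2) \in \lat\times\lat$ with $\vec{y}_1 \equiv \vec{y}_2 \bmod 2\lat$. Expanding the square therefore yields
\[
\sum_{\vec{c}\in\lat/(2\lat)} \rho_s(2\lat+\vec{c}-\vec{t})^2 \;=\; \sum_{\substack{\vec{y}_1,\vec{y}_2 \in \lat \\ \vec{y}_1 \equiv \vec{y}_2 \bmod 2\lat}} \rho_s(\vec{y}_1-\vec{t})\,\rho_s(\vec{y}_2-\vec{t}).
\]
Next I would substitute $\vec{u} := (\vec{y}_1+\vec{y}_2)/2$ and $\vec{v} := (\vec{y}_1-\vec{y}_2)/2$. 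The condition $\vec{y}_1\equiv\vec{y}_2 \bmod 2\lat$ is exactly what is needed to guarantee $\vec{u},\vec{v}\in\lat$, and the map $(\vec{u},\vec{v})\mapsto(\vec{u}+\vec{v},\vec{u}-\vec{v})$ is a bijection onto the set of such pairs.

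With $\vec{x} = \vec{y}_1-\vec{t}$ and $\vec{y}=\vec{y}_2-\vec{t}$, the rotation identity \eqref{eq:rotate} gives
\[
\rho_s(\vec{y}_1-\vec{t})\,\rho_s(\vec{y}_2-\vec{t}) \;=\; \rho_{\sqrt{2}\,s}\!\bigl(2(\vec{u}-\vec{t})\bigr)\,\rho_{\sqrt{2}\,s}(2\vec{v}).
\]
Each of the two factors on the right satisfies $\rho_{\sqrt{2}s}(2\vec{w}) = \exp(-\pi\cdot 4\|\vec{w}\|^2/(2s^2)) = \rho_{s/\sqrt{2}}(\vec{w})$, so the product equals $\rho_{s/\sqrt{2}}(\vec{u}-\vec{t})\,\rho_{s/\sqrt{2}}(\vec{v})$. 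This is the key step that converts the parameter $s$ into $s/\sqrt{2}$ and, crucially, decouples the dependence on $\vec{u}$ from that on $\vec{v}$.

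Finally, since the double sum is now over $(\vec{u},\vec{v}) \in \lat \times \lat$ with a separable summand, it factors as
\[
\Bigl(\sum_{\vec{u}\in\lat}\rho_{s/\sqrt{2}}(\vec{u}-\vec{t})\Bigr)\Bigl(\sum_{\vec{v}\in\lat}\rho_{s/\sqrt{2}}(\vec{v})\Bigr) \;=\; \rho_{s/\sqrt{2}}(\lat-\vec{t})\,\rho_{s/\sqrt{2}}(\lat),
\]
which is the desired right-hand side. The only step that requires a bit of care is verifying the bijection between $\lat\times\lat$ and pairs in $\lat\times\lat$ that agree modulo $2\lat$; everything else is a direct application of \eqref{eq:rotate} and a rescaling of the Gaussian parameter.
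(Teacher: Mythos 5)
Your proof is correct and follows essentially the same route as the paper's: expand the square into a double sum over pairs in a common coset, apply the rotation identity \eqref{eq:rotate} (together with the rescaling $\rho_{\sqrt{2}s}(2\vec{w})=\rho_{s/\sqrt{2}}(\vec{w})$), and reindex to decouple the sum into a product. The paper keeps the outer sum over $\vec{c}$ and substitutes $\vec{w}=\vec{y}_1-\vec{y}_2$, whereas you use the symmetric change of variables $(\vec{u},\vec{v})=((\vec{y}_1+\vec{y}_2)/2,(\vec{y}_1-\vec{y}_2)/2)$ directly, which is a slightly cleaner bookkeeping of the same idea.
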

\begin{proof}
	We have
	\begin{align*}
		\sum_{\vec{c} \in \lat/(2\lat)} \rho_s(2\lat + \vec{c} - \vec{t})^2 
		&= \sum_{\vec{c} \in \lat/(2\lat)} \sum_{\vec{y}_1, \vec{y}_2 \in \lat} \rho_s(2\vec{y}_1 + \vec{c} - \vec{t})\rho_s(2\vec{y}_2 + \vec{c} - \vec{t})\\
		&= \sum_{\vec{c} \in \lat/(2\lat)} \sum_{\vec{y}_1, \vec{y}_2 \in \lat} \rho_{s/\sqrt{2}}(\vec{y}_1 + \vec{y}_2 + \vec{c} - \vec{t})\rho_{s/\sqrt{2}}(\vec{y}_1 - \vec{y}_2)\\
		&= \sum_{\vec{c} \in \lat/(2\lat)} \sum_{\vec{w}, \vec{y}_1 \in \lat} \rho_{s/\sqrt{2}}(2\vec{y}_1 - \vec{w} + \vec{c} - \vec{t})\rho_{s/\sqrt{2}}(\vec{w})\\
		&= \rho_{s/\sqrt{2}}(\lat - \vec{t}) \sum_{\vec{w} \in \lat} \rho_{s/\sqrt{2}}(\vec{w})\\
		&= \rho_{s/\sqrt{2}}(\lat - \vec{t}) \rho_{s/\sqrt{2}}(\lat)
		\; ,
	\end{align*}
	as needed.
\end{proof}

\subsection{Dominating distributions}

Intuitively, we say that some random list $\mathcal{M} \in S^*$ dominates another random list $\mathcal{M}' \in S^*$ if for every fixed list $\mathcal{S} \in S^*$, ``$\mathcal{M}$ is at least as likely to contain $\mathcal{S}$ as a subsequence as $\mathcal{M}'$ is.''

\begin{definition}[Dominating distribution]
	\label{def:dominates}
	For some finite set $S$ (which we identify with $\{1,\ldots, N\}$ without loss of generality) and two random lists $\mathcal{M} := (X_1,\ldots, X_M) \in S^*$ and $\mathcal{M}' := (X_1',\ldots, X_{M'}') \in S^{*}$ (where $M$ and $M'$ might themselves be random variables), we say that $\mathcal{M}$ \emph{dominates} $\mathcal{M}'$ if for any $(k_1,\ldots, k_N) \in \N^N$,
	\[
	\Pr[|\{ j \ : \ X_j = i \}| \geq k_i,\ \forall i] \geq \Pr[|\{ j \ : \ X_j' = i \}| \geq k_i,\ \forall i]
	\; .
	\]
\end{definition}

We note the following basic facts about dominant distributions, which show that domination yields a partial order over random variables on $S^*$, and that this partial order behaves nicely under taking sublists.

\begin{fact}
	\label{fact:monotone}
	For any finite set $S $ and random variable $\mathcal{M} \in S^*$ that dominates some other random variable $\mathcal{M}' \in S^{*}$,
	\begin{enumerate}
		\item $\mathcal{M}$ dominates itself;
		\item if $\mathcal{M}'$ dominates some random variable $\mathcal{M}'' \in S^{*}$, then $\mathcal{M}$ also dominates $\mathcal{M}''$; and
		\item for any function $f : S^* \to S^*$ that maps a list of elements to a sublist, $\mathcal{M}$ dominates $f(\mathcal{M}')$.
	\end{enumerate}
\end{fact}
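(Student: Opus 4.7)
The three claims are routine consequences of unpacking Definition~\ref{def:dominates}. The plan is to introduce, for each deterministic list $\mathcal{L} = (L_1, \ldots, L_m) \in S^*$ and each tuple $\vec{k} = (k_1, \ldots, k_N) \in \N^N$, the event
\[
E_{\vec{k}}(\mathcal{L}) := \bigl\{ |\{ j \ : \ L_j = i \}| \geq k_i \text{ for all } i \in \{1, \ldots, N\} \bigr\},
\]
so that ``$\mathcal{M}$ dominates $\mathcal{M}'$'' is exactly the statement that $\Pr[E_{\vec{k}}(\mathcal{M})] \geq \Pr[E_{\vec{k}}(\mathcal{M}')]$ for every $\vec{k} \in \N^N$. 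Each part then reduces to verifying this pointwise-in-$\vec{k}$ inequality.

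Part~1 is immediate: both sides of the defining inequality refer to the same random variable and therefore coincide. For Part~2, fix $\vec{k} \in \N^N$ and chain the two hypothesized inequalities: $\Pr[E_{\vec{k}}(\mathcal{M})] \geq \Pr[E_{\vec{k}}(\mathcal{M}')] \geq \Pr[E_{\vec{k}}(\mathcal{M}'')]$, so $\mathcal{M}$ dominates $\mathcal{M}''$.

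Part~3 is the only step requiring anything beyond a transitive chain. I would first prove an intermediate claim: any random list $\mathcal{N} \in S^*$ dominates $f(\mathcal{N})$ whenever $f$ maps every list to a sublist. The key observation is deterministic and holds outcome-by-outcome: if $\mathcal{L}' = f(\mathcal{L})$ is a sublist of $\mathcal{L}$, then the multiplicity of each $i \in S$ in $\mathcal{L}'$ is at most its multiplicity in $\mathcal{L}$, so as subsets of the underlying sample space $E_{\vec{k}}(f(\mathcal{N})) \subseteq E_{\vec{k}}(\mathcal{N})$. Taking probabilities gives $\Pr[E_{\vec{k}}(\mathcal{N})] \geq \Pr[E_{\vec{k}}(f(\mathcal{N}))]$ for every $\vec{k}$, i.e.\ $\mathcal{N}$ dominates $f(\mathcal{N})$. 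Applying this with $\mathcal{N} = \mathcal{M}'$, combining with the hypothesis that $\mathcal{M}$ dominates $\mathcal{M}'$, and invoking Part~2 then yields that $\mathcal{M}$ dominates $f(\mathcal{M}')$.

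No step is a genuine obstacle. The one point worth emphasizing is that ``being a sublist'' is a per-outcome, deterministic property, so it converts directly into a probabilistic inequality with no need to reason about the joint distribution of $\mathcal{M}'$ and $f(\mathcal{M}')$; this is what makes Part~3 as painless as Parts~1 and~2.
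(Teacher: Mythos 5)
The paper states this as a ``Fact'' and gives no proof, so there is nothing in the source to compare against; your argument is correct and is the natural one. Part~3 is the only place where anything needs to be said, and your key observation---that ``sublist'' is a per-outcome, deterministic property, so that on the common sample space $E_{\vec{k}}(f(\mathcal{N})) \subseteq E_{\vec{k}}(\mathcal{N})$ and hence $\mathcal{N}$ dominates $f(\mathcal{N})$---together with Part~2 is exactly the right reduction. One small remark worth keeping in mind: when the paper later invokes this fact in the proof of Theorem~\ref{thm:meta_thm}, the rejection function $f$ is allowed to be \emph{randomized}; your argument extends verbatim by including $f$'s internal coins in the outcome $\omega$, since for each fixed choice of coins $f(\cdot;\omega)$ still returns a sublist.
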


\section{No need for rejection!}
\label{sec:average_stupid}

We now show our main observation: if $\vec{X}_1 \in \lat - \vec{t}$ and $\vec{X}_2 \in \lat - \vec{t}$ are sampled from the discrete Gaussian over a fixed coset $2\lat + \vec{c} - \vec{t}$ for some $\vec{c} \in \lat/(2\lat)$, then their average $(\vec{X}_1 + \vec{X}_2)/2$ is distributed as a mixture of Gaussians over the cosets $2\lat + \vec{d} - \vec{t}$ for $\vec{d} \in \lat/(2\lat)$ with parameter lowered by a factor of $\sqrt{2}$.

\begin{lemma}
	\label{lem:one_coset}
For any lattice $\lat \subset \R^n$, shift $\vec{t} \in \R^n$, parameter $s > 0$, coset $\vec{c} \in \lat/(2\lat)$, $s > 0$, and $\vec{y} \in \lat$, we have
\[
\Pr_{\vec{X}_1,\vec{X}_2 \sim D_{2\lat + \vec{c} - \vec{t},s}}[(\vec{X}_1 + \vec{X}_2)/2 = \vec{y} - \vec{t}] = \rho_{s/\sqrt{2}}(\vec{y} - \vec{t}) \cdot \frac{   \rho_{s/\sqrt{2}}(2\lat + \vec{c} + \vec{y})}{\rho_s(2\lat + \vec{c} - \vec{t})^2}
\; .
\]
In particular, for any $\vec{d} \in \lat/(2\lat)$ and $\vec{y} \in 2\lat + \vec{d}$,
\[
\Pr_{\vec{X}_1,\vec{X}_2 \sim D_{2\lat + \vec{c} - \vec{t},s}}[(\vec{X}_1 + \vec{X}_2)/2 = \vec{y} - \vec{t}\ |\ (\vec{X}_1 + \vec{X}_2)/2 \in 2\lat + \vec{d} - \vec{t}] = \frac{\rho_{s/\sqrt{2}}(\vec{y} - \vec{t})}{ \rho_{s/\sqrt{2}}(2 \lat + \vec{d}- \vec{t})}
\; .
\]
\end{lemma}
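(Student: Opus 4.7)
The plan is to expand the joint probability directly from the definition of the discrete Gaussian and then apply the rotation identity~(\ref{eq:rotate}) to decouple the ``center of mass'' part $\vec{y}$ from the ``offset'' part. Concretely, for any $\vec{y} \in \lat$, writing $\vec{X}_1 = \vec{a} - \vec{t}$ and $\vec{X}_2 = \vec{b} - \vec{t}$ with $\vec{a},\vec{b} \in 2\lat + \vec{c}$, the event $(\vec{X}_1 + \vec{X}_2)/2 = \vec{y} - \vec{t}$ is exactly $\vec{a} + \vec{b} = 2\vec{y}$, so
\[
\Pr[(\vec{X}_1+\vec{X}_2)/2 = \vec{y}-\vec{t}] = \frac{1}{\rho_s(2\lat+\vec{c}-\vec{t})^2}\sum_{\vec{a} \in 2\lat+\vec{c}} \rho_s(\vec{a}-\vec{t})\,\rho_s(2\vec{y}-\vec{a}-\vec{t}).
\]
(I would first quickly check the trivial point that $\vec{b} = 2\vec{y} - \vec{a}$ automatically lies in $2\lat + \vec{c}$ whenever $\vec{a}$ does, using $\vec{y} \in \lat$ and $2\vec{c} \in 2\lat$.)

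The key step is to apply the rotation identity~(\ref{eq:rotate}) to $\rho_s(\vec{a}-\vec{t})\,\rho_s(\vec{b}-\vec{t})$ with the parameter on the left-hand side being $s/\sqrt{2}$---that is, in the form $\rho_{s/\sqrt{2}}(\vec{x})\rho_{s/\sqrt{2}}(\vec{y}) = \rho_s(\vec{x}+\vec{y})\rho_s(\vec{x}-\vec{y})$, read from right to left, with $\vec{x}+\vec{y} = \vec{a}-\vec{t}$ and $\vec{x}-\vec{y} = \vec{b}-\vec{t}$. After substituting $\vec{a}+\vec{b} = 2\vec{y}$ and simplifying $\|\vec{a} + \vec{b} - 2\vec{t}\|^2 = 4\|\vec{y}-\vec{t}\|^2$ and $\|\vec{a} - \vec{b}\|^2 = 4\|\vec{a}-\vec{y}\|^2$, this yields the clean factorization
\[
\rho_s(\vec{a}-\vec{t})\,\rho_s(2\vec{y}-\vec{a}-\vec{t}) \;=\; \rho_{s/\sqrt{2}}(\vec{y}-\vec{t})\,\rho_{s/\sqrt{2}}(\vec{a}-\vec{y}).
\]
The factor $\rho_{s/\sqrt{2}}(\vec{y}-\vec{t})$ pulls out of the sum over $\vec{a}$, and the remaining sum is by definition $\rho_{s/\sqrt{2}}(2\lat + \vec{c} - \vec{y})$. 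Finally, since $\vec{y} \in \lat$ implies $2\vec{y} \in 2\lat$ and hence $-\vec{y} \equiv \vec{y} \pmod{2\lat}$, we have $2\lat + \vec{c} - \vec{y} = 2\lat + \vec{c} + \vec{y}$, giving the first formula exactly as stated.

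For the second (conditional) statement, I would observe that when $\vec{y}$ ranges over the fixed coset $2\lat + \vec{d}$, the coset $2\lat + \vec{c} + \vec{y}$ equals the single coset $2\lat + \vec{c} + \vec{d}$, so the factor $\rho_{s/\sqrt{2}}(2\lat+\vec{c}+\vec{y})$ is constant in $\vec{y}$. Summing the first formula over $\vec{y} \in 2\lat + \vec{d}$ therefore gives
\[
\Pr[(\vec{X}_1+\vec{X}_2)/2 \in 2\lat + \vec{d} - \vec{t}] = \frac{\rho_{s/\sqrt{2}}(2\lat+\vec{c}+\vec{d})\,\rho_{s/\sqrt{2}}(2\lat+\vec{d}-\vec{t})}{\rho_s(2\lat+\vec{c}-\vec{t})^2},
\]
and dividing the first formula by this quantity cancels the $\rho_{s/\sqrt{2}}(2\lat+\vec{c}+\vec{d})$ and $\rho_s(2\lat+\vec{c}-\vec{t})^2$ factors, leaving exactly $\rho_{s/\sqrt{2}}(\vec{y}-\vec{t})/\rho_{s/\sqrt{2}}(2\lat+\vec{d}-\vec{t})$ as claimed.

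There isn't really a hard step; the only thing to be careful about is the bookkeeping of cosets---in particular, keeping track of the identification $2\lat + \vec{c} - \vec{y} = 2\lat + \vec{c} + \vec{y}$ (which is why the statement can use either sign) and noting that the coset $2\lat + \vec{c} + \vec{y}$ depends only on $\vec{y} \bmod 2\lat$, which is what makes the conditional distribution collapse to a pure Gaussian on $2\lat + \vec{d} - \vec{t}$.
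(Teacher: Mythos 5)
Your proof is correct and follows essentially the same route as the paper's: expand the joint probability as a sum, apply the rotation identity~\eqref{eq:rotate} to factor out $\rho_{s/\sqrt{2}}(\vec{y}-\vec{t})$, recognize the remaining sum as $\rho_{s/\sqrt{2}}(2\lat+\vec{c}+\vec{y})$, and derive the conditional statement by noting this last factor depends on $\vec{y}$ only through its coset. The only (welcome) addition is that you explicitly verify $\vec{b} = 2\vec{y}-\vec{a}$ stays in $2\lat+\vec{c}$ and record the coset identification $2\lat+\vec{c}-\vec{y} = 2\lat+\vec{c}+\vec{y}$, points the paper leaves implicit.
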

\begin{proof}
	We have
	\begin{align*}
	&\rho_s(2\lat + \vec{c} - \vec{t})^2 \cdot \Pr_{\vec{X}_1,\vec{X}_2 \sim D_{2\lat + \vec{c} - \vec{t},s}}[(\vec{X}_1 + \vec{X}_2)/2 = \vec{y} - \vec{t}] \\
	&\qquad = \sum_{\vec{x} \in 2\lat + \vec{c}}\rho_s(\vec{x} - \vec{t}) \rho_s(2\vec{y} -\vec{x} - \vec{t})\\
	&\qquad = \rho_{s/\sqrt{2}}(\vec{y} -\vec{t}) \sum_{\vec{x} \in 2\lat + \vec{c}} \rho_{s/\sqrt{2}}(\vec{x} - \vec{y}) &\text{(Eq.~\eqref{eq:rotate})}\\
	&\qquad = \rho_{s/\sqrt{2}}(\vec{y} -\vec{t}) \rho_{s/\sqrt{2}}(2\lat + \vec{c} + \vec{y})
	\; ,
	\end{align*}
	as needed. The ``in particular'' then follows from the fact that $\rho_{s/\sqrt{2}}(2\lat + \vec{c} + \vec{y}) = \rho_{s/\sqrt{2}}(2\lat + \vec{c} + \vec{d})$ is constant for $\vec{y} \in 2\lat + \vec{d}$ for some fixed $\vec{d} \in \lat/(2\lat)$.
\end{proof}

\noindent Lemma~\ref{lem:one_coset} motivates the following definition, which captures a key property of the distribution described in Lemma~\ref{lem:one_coset}.

\begin{definition}
	\label{def:mixed_gaussian}
	For a lattice $\lat \subset \R^n$, shift $\vec{t} \in \R^n$, and parameter $s > 0$  we say that the random list $(\vec{X}_1, \ldots, \vec{X}_M) \in (\lat - \vec{t})^*$ is a \emph{mixture of independent Gaussians over $\lat - \vec{t}$ with parameter $s$} if 
	the ``distributions within the cosets of $2\lat$'' are independent Gaussians with parameter $s$. I.e., for any list of cosets $(\vec{c}_1,\ldots, \vec{c}_M)\in ((\lat - \vec{t})/(2\lat))^*$ mod $2\lat$, if we \emph{condition on} $\vec{X}_i \in 2\lat + \vec{c}_i$ for all $i$, then the $\vec{X}_i$ are independent with $\vec{X}_i \sim D_{2\lat + \vec{c}_i, s}$.
	
	We call $(2\lat + \vec{X}_1,\ldots, 2\lat + \vec{X}_M)$ the \emph{coset distribution} of the $\vec{X}_i$. We say that a mixture of independent Gaussians $\mathcal{M}$ over $\lat - \vec{t}$ with parameter $s > 0$ dominates another, $\mathcal{M}'$, if the coset distribution of $\mathcal{M}$ dominates the coset distribution of $\mathcal{M}'$ (as in Definition~\ref{def:dominates}).
\end{definition}

In other words, mixtures of independent Gaussians are exactly the distributions obtained by first sampling $(\vec{c}_1,\ldots, \vec{c}_M) \in ((\lat - \vec{t})/(2\lat))^*$ from some arbitrary coset distributions and then sampling $\vec{X}_i \sim D_{\lat + \vec{c}_i, s}$ independently for each $i$.  We now list some basic facts that follow from what we have done so far.
\begin{corollary}[Properties of mixtures of Gaussians and Procedure~\ref{proc:pair_and_average}]
	\label{cor:mixtures}
	For any lattice $\lat \subset \R^n$, shift $\vec{t} \in \R^n$, and parameter $s > 0$,
	\begin{enumerate}
		\item \label{item:mixture_defined_by_cosets} a mixture of independent Gaussians over $\lat - \vec{t}$ with parameter $s$ is uniquely characterized by its coset distribution;
		\item \label{item:mixtures_preserved} if we apply Procedure~\ref{proc:pair_and_average} to a mixture of independent Gaussians over $\lat - \vec{t}$ with parameter $s$, the result will be a mixture of Gaussians over $\lat - \vec{t}$ with parameter $s/\sqrt{2}$; 
		\item \label{item:dominated_preserved} Procedure~\ref{proc:pair_and_average} preserves domination---i.e., if we apply Procedure~\ref{proc:pair_and_average} to two mixtures $\mathcal{M}, \mathcal{M}'$ of Gaussians over $\lat - \vec{t}$ with parameter $s$ and $\mathcal{M}$ dominates $\mathcal{M}'$, then the output of Procedure~\ref{proc:pair_and_average} on input $\mathcal{M}$ will dominate that of $\mathcal{M}'$; and
		\item \label{item:squared} if $\vec{X}_1, \vec{X}_2$ are a mixture of independent Gaussians over $\lat - \vec{t}$ with parameter $s$ with coset distribution given by $\vec{X}_1 \equiv \vec{X}_2 \bmod 2\lat$ and
		\[
		\Pr[2\lat + \vec{X}_1 + \vec{t} = \vec{c}] = \frac{\rho_s(2\lat + \vec{c} - \vec{t})^2}{\sum_{\vec{d} \in \lat/(2\lat)}\rho_s(2\lat + \vec{d} - \vec{t})^2}
		\; 
		\]
		for any $\vec{c} \in \lat/(2\lat)$,
		then their average $(\vec{X}_1 + \vec{X}_2)/2$ is distributed exactly as $D_{\lat - \vec{t}, s/\sqrt{2}}$.
	\end{enumerate}
\end{corollary}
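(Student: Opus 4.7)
The corollary splits cleanly by item. Item~\ref{item:mixture_defined_by_cosets} is essentially a restatement of Definition~\ref{def:mixed_gaussian}, which fixes the conditional law of $(\vec X_1,\ldots,\vec X_M)$ given the coset list; hence the joint law is determined by its marginal coset distribution.

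For Item~\ref{item:mixtures_preserved}, the plan is to apply Lemma~\ref{lem:one_coset} pair by pair. Conditioning on the input coset list, the mixture property gives independent $\vec X_i\sim D_{2\lat+\vec c_i-\vec t,s}$. Procedure~\ref{proc:pair_and_average} forms disjoint pairs within each input coset, and the ``in particular'' clause of Lemma~\ref{lem:one_coset} says that each pair's average, conditioned on its output coset $\vec d$, is exactly $D_{2\lat+\vec d-\vec t,s/\sqrt 2}$. Because the inputs are independent across pairs, these conditional Gaussians remain mutually independent, so by Definition~\ref{def:mixed_gaussian} the output is a mixture of independent Gaussians with parameter $s/\sqrt 2$, whose coset distribution is an explicit randomization of the input coset distribution.

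For Item~\ref{item:dominated_preserved}, the plan is a coupling argument on top of Item~\ref{item:mixtures_preserved}. Item~\ref{item:mixtures_preserved} shows that, given input coset counts $(k_{\vec c})_{\vec c}$, the output coset list is the multiset union, across $\vec c$, of $\lfloor k_{\vec c}/2\rfloor$ iid samples from a fixed distribution $p_{\vec c}$ on $\lat/(2\lat)$ (read off from Lemma~\ref{lem:one_coset}). Using the hypothesized coset dominance, we couple $\mathcal M$ and $\mathcal M'$ so that $\mathcal M'$'s coset list is a sub-multiset of $\mathcal M$'s; this coupling is immediate in the rejection-sampling applications of the main theorem, where $\mathcal M'$ is literally a (random) sublist of $\mathcal M$. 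Monotonicity of $\lfloor\cdot/2\rfloor$ then gives at least as many pairs from $\mathcal M$ in each input coset, and coupling each $p_{\vec c}$-sample-stream so that $\mathcal M'$'s samples are a prefix of $\mathcal M$'s produces a coupling in which the output coset list of $\mathcal M'$ is a sub-multiset of that of $\mathcal M$. Fact~\ref{fact:monotone} then yields the desired output coset domination.

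For Item~\ref{item:squared}, a direct calculation suffices. Let $Z:=\sum_{\vec c}\rho_s(2\lat+\vec c-\vec t)^2$, so that $\Pr[\vec c] = \rho_s(2\lat+\vec c-\vec t)^2/Z$. Summing the formula of Lemma~\ref{lem:one_coset} weighted by $\Pr[\vec c]$, the $\rho_s(2\lat+\vec c-\vec t)^2$ factors cancel, leaving $\rho_{s/\sqrt 2}(\vec y-\vec t)/Z$ times $\sum_{\vec c}\rho_{s/\sqrt 2}(2\lat+\vec c+\vec y)$. Since $\vec y\in\lat$, the map $\vec c\mapsto\vec c+\vec y$ permutes $\lat/(2\lat)$, so this sum equals $\rho_{s/\sqrt 2}(\lat)$. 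Lemma~\ref{lem:collision_identity} gives $Z=\rho_{s/\sqrt 2}(\lat)\,\rho_{s/\sqrt 2}(\lat-\vec t)$, and the $\rho_{s/\sqrt 2}(\lat)$ factors cancel, leaving precisely the density of $D_{\lat-\vec t,s/\sqrt 2}$. The main subtlety I foresee is the coupling step in Item~\ref{item:dominated_preserved}; the other items are mechanical once Lemma~\ref{lem:one_coset} is in hand.
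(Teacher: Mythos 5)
Items~\ref{item:mixture_defined_by_cosets}, \ref{item:mixtures_preserved}, and~\ref{item:squared} are correct and mirror the paper: Item~\ref{item:mixture_defined_by_cosets} is a restatement of Definition~\ref{def:mixed_gaussian}, Item~\ref{item:mixtures_preserved} follows from the ``in particular'' clause of Lemma~\ref{lem:one_coset} together with independence across disjoint pairs, and Item~\ref{item:squared} is exactly the calculation the paper itself performs via Lemmas~\ref{lem:one_coset} and~\ref{lem:collision_identity}.

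Your argument for Item~\ref{item:dominated_preserved}, however, hinges on a coupling that the domination hypothesis does not supply. Definition~\ref{def:dominates} only compares probabilities of ``rectangular'' events of the form $\{\text{count}_i \geq k_i \ \forall i\}$; this is strictly weaker than stochastic dominance with respect to the sub-multiset order, and so it does not by itself yield a coupling with $\mathcal{M}' \subseteq \mathcal{M}$. Indeed, Item~\ref{item:dominated_preserved} in its stated generality appears to be false. Take $\lat = \Z$, $\vec{t} = \vec{0}$, and the two cosets $2\Z$ and $2\Z + 1$. Let $\mathcal{M}$ have coset-count vector $(2,2)$ or $(0,0)$, each with probability $1/2$, and let $\mathcal{M}'$ have coset-count vector $(2,0)$ or $(0,2)$, each with probability $1/2$. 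Checking Definition~\ref{def:dominates} directly shows $\mathcal{M}$ dominates $\mathcal{M}'$. But Procedure~\ref{proc:pair_and_average} sends $\mathcal{M}'$ to a list of exactly one output vector, whereas it sends $\mathcal{M}$ to zero output vectors with probability $1/2$ and two with probability $1/2$; computing the output-coset weights from Lemma~\ref{lem:one_coset} (here $p_{2\Z+1}$ is exactly uniform and $p_{2\Z}$ is close to it), one finds that the probability of seeing at least one output vector in coset $2\Z$ is strictly smaller under $\mathcal{M}$ than under $\mathcal{M}'$, so the output of $\mathcal{M}$ does not dominate that of $\mathcal{M}'$.

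Your parenthetical observation---that the coupling is immediate when $\mathcal{M}'$ is literally a random sublist of $\mathcal{M}$---is the right one, and that restricted form of Item~\ref{item:dominated_preserved} is both true and sufficient: if $\mathcal{M}' \subseteq \mathcal{M}$ under a coupling, the pairings in Procedure~\ref{proc:pair_and_average} can always be chosen so that every pair formed from $\mathcal{M}'$ is also a pair formed from $\mathcal{M}$, whence the output on $\mathcal{M}'$ is a sub-multiset of the output on $\mathcal{M}$ and Fact~\ref{fact:monotone} gives domination. This is what the proof of Theorem~\ref{thm:meta_thm} actually needs, but one must maintain the sub-multiset coupling between the two processes across every step of the induction, rather than discard it after the first step and appeal to Item~\ref{item:dominated_preserved} for marginal domination alone. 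The paper's one-line justification (``Items~\ref{item:mixtures_preserved} and~\ref{item:dominated_preserved} are immediate consequences of Lemma~\ref{lem:one_coset}'') glosses over exactly this point; your more careful attempt has surfaced a real gap in the stated Item~\ref{item:dominated_preserved}, even though the downstream Theorem~\ref{thm:meta_thm} remains correct once the coupling is threaded through.
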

\begin{proof}
	Item~\ref{item:mixture_defined_by_cosets} follows immediately from the definition of a mixture of Gaussians. Items~\ref{item:mixtures_preserved} and~\ref{item:dominated_preserved} are immediate consequences of Lemma~\ref{lem:one_coset}.
	
	For Item~\ref{item:squared}, we apply Lemma~\ref{lem:one_coset} to see that for any $\vec{y} \in \lat$,
	\begin{align*}
	\Pr[(\vec{X}_1 + \vec{X}_2)/2 = \vec{y} - \vec{t}] &= \frac{\rho_{s/\sqrt{2}}(\vec{y} - \vec{t})}{\sum_{\vec{d} \in \lat/(2\lat)}\rho_s(2\lat + \vec{d} - \vec{t})^2}\sum_{\vec{c} \in \lat/(2\lat)}    \rho_{s/\sqrt{2}}(2\lat + \vec{c} + \vec{y})\\
	&= \frac{\rho_{s/\sqrt{2}}(\vec{y} - \vec{t})}{\sum_{\vec{d} \in \lat/(2\lat)}\rho_s(2\lat + \vec{d} - \vec{t})^2} \cdot \rho_{s/\sqrt{2}}(\lat)
	\; .
	\end{align*}
	The result then follows from Lemma~\ref{lem:collision_identity}. (Indeed, summing the left-hand side and the right-hand side over all $\vec{y} \in \lat$ gives a proof of Lemma~\ref{lem:collision_identity}.)
\end{proof}

In~\cite{ADRS15,ADS15}, we performed a careful rejection sampling procedure $f$ in Procedure~\ref{proc:reject} so that, at each step of the algorithm, the output was distributed exactly as $D_{\lat - \vec{t},s/2^{i/2}}$ (up to some small statistical distance). In particular, we applied the rejection sampling procedure guaranteed by Theorem~\ref{thm:square_sampler} to obtain independent vectors distributed as in Item~\ref{item:squared}, which yield independent Gaussians with a lower parameter when combined as in Procedure~\ref{proc:pair_and_average}. But, Corollary~\ref{cor:mixtures} makes this unnecessary. Indeed, Corollary~\ref{cor:mixtures} shows that ``any collection of vectors that can be found with any rejection sampling procedure can be found without it.'' The following meta-theorem makes this formal.

\begin{theorem}
	\label{thm:meta_thm}
	For any (possibly randomized) rejection function $f$ mapping lists of cosets modulo $2\lat$ to a subset of indices (as in Procedure~\ref{proc:reject}), let $\mathcal{A}$ be the algorithm defined in Procedure~\ref{proc:reject}. Let $\mathcal{A}'$ be the same algorithm with $f$ replaced by the trivial function that just outputs all indices (i.e., $\mathcal{A}'$ just repeatedly runs Procedure~\ref{proc:pair_and_average} with no rejection).
	
	Then, for any lattice $\lat \subset \R^n$, shift vector $\vec{t} \in \R^n$, parameter $s > 0$, if $\mathcal{A}$ and $\mathcal{A}'$ are each called on input $\ell \geq 1$ and a list of $M \geq 2$ independent samples from $D_{\lat - \vec{t}, s}$, the resulting output distributions will be mixtures of independent Gaussians over $\lat - \vec{t}$ with parameter $s/2^{\ell/2}$. Furthermore, the distribution corresponding to $\mathcal{A}'$ will dominate the distribution corresponding to $\mathcal{A}$.  In particular, for any finite set $S \subset \lat - \vec{t}$,
	\[
	\Pr_{\vec{X}_1,\ldots, \vec{X}_M \sim D_{\lat - \vec{t}, s}}[S \subseteq \mathcal{A}(\ell, \vec{X}_1,\ldots, \vec{X}_M)] \leq \Pr_{\vec{X}_1,\ldots, \vec{X}_M \sim D_{\lat - \vec{t}, s}}[S \subseteq \mathcal{A}'(\ell, \vec{X}_1,\ldots, \vec{X}_M)]
	\; .
	\]
\end{theorem}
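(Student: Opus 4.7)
The plan is to prove, by induction on the iteration count $i \in \{0, \ldots, \ell\}$, the joint invariant that both algorithms' intermediate lists are mixtures of independent Gaussians over $\lat - \vec{t}$ with parameter $s/2^{i/2}$ and that $\mathcal{A}'$'s mixture dominates $\mathcal{A}$'s (in the sense of Definition~\ref{def:mixed_gaussian}); the ``in particular'' probability inequality is then a separate deduction from this dominance. The base case $i = 0$ is immediate, since both algorithms hold the same list of $M$ iid $D_{\lat - \vec{t}, s}$-samples, which trivially satisfies both parts of the invariant.

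For the inductive step, I would observe that each iteration decomposes into two sub-steps: first a restriction of the list by a function of cosets (the rejection $f$ for $\mathcal{A}$, the identity for $\mathcal{A}'$), then an application of Procedure~\ref{proc:pair_and_average}. Restricting by a coset-based function preserves the mixture-of-Gaussians property at parameter $s/2^{i/2}$ because the within-coset conditional distributions are left untouched, and Item~\ref{item:mixtures_preserved} of Corollary~\ref{cor:mixtures} then gives the halved parameter $s/2^{(i+1)/2}$ after pair-and-average. For the dominance half of the invariant, Fact~\ref{fact:monotone} Item 3 (applied with $\mathcal{M} = \mathcal{M}'$ equal to $\mathcal{A}$'s list at step $i$) gives that $\mathcal{A}$'s list dominates its own rejected sublist; combined with the inductive hypothesis via transitivity (Fact~\ref{fact:monotone} Item 2), this shows $\mathcal{A}'$'s list at step $i$ dominates $\mathcal{A}$'s post-rejection list. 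Item~\ref{item:dominated_preserved} of Corollary~\ref{cor:mixtures} then propagates this dominance through pair-and-average, closing the induction. Randomized $f$ is handled by conditioning on its internal randomness.

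For the ``in particular'' inequality, I would fix a finite $S \subset \lat - \vec{t}$, set $S_\vec{c} := S \cap (2\lat + \vec{c} - \vec{t})$, and condition on the output's coset counts $(N_\vec{c})_{\vec{c} \in \lat/(2\lat)}$. By the mixture property, given these counts the vectors in each coset $\vec{c}$ are $N_\vec{c}$ iid $D_{2\lat + \vec{c} - \vec{t}, s/2^{\ell/2}}$-samples, so the conditional probability that $S \subseteq \text{output}$ factors as $\prod_\vec{c} q_\vec{c}(N_\vec{c})$, where $q_\vec{c}(n)$ is the probability that $n$ such iid samples cover $S_\vec{c}$. Each $q_\vec{c}$ is non-decreasing in $n$ with $q_\vec{c}(n) \to 1$, so it is the CDF of some $\N$-valued random variable $U_\vec{c}$; taking the $U_\vec{c}$ mutually independent and independent of the algorithms' randomness,
\[
\Pr[S \subseteq \text{output}] = \expect_U\bigl[\Pr[N_\vec{c} \geq U_\vec{c},\ \forall \vec{c} \mid U]\bigr]
\; .
\]
Applying the dominance of coset distributions pointwise in $U$ and integrating then yields the required inequality.

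The main obstacle is precisely this last step: upper-orthant dominance of integer-valued random vectors does not in general order expectations of arbitrary monotone functions, so one cannot invoke a black-box ``dominance implies ordered monotone expectations'' principle. Fortunately, the per-coset product factorization afforded by the mixture-of-Gaussians property is exactly the shape that the CDF-representation trick handles, and it reduces the inequality to the defining property of dominance. The remaining bookkeeping---that $\ell$ halvings of the parameter yield $s/2^{\ell/2}$, that $M \geq 2$ keeps pair-and-average meaningful, and that the iterated pair-and-average never ``runs out'' of vectors in the relevant sense---is routine.
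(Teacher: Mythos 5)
Your proof is correct and follows the same two-part structure as the paper's: an induction showing that both algorithms' intermediate lists remain mixtures of independent Gaussians over $\lat - \vec{t}$ (with $\mathcal{A}'$'s coset distribution dominating $\mathcal{A}$'s) via Fact~\ref{fact:monotone} and Corollary~\ref{cor:mixtures}, followed by a deduction of the probability inequality from the domination. Your CDF-representation argument for that last step is a genuine improvement in rigor, since the paper simply asserts ``the result follows,'' and you are right that upper-orthant dominance of the coset counts does not automatically order expectations of arbitrary monotone functionals --- it is precisely the per-coset product factorization supplied by the mixture-of-Gaussians property that makes the reduction to Definition~\ref{def:dominates} go through.
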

\begin{proof}
	Notice that, since $f$ only acts on the cosets of the $\vec{X}_i$, $f$ ``preserves mixtures of independent Gaussians.'' I.e., if $(\vec{X}_1,\ldots, \vec{X}_{M'})$ is some mixture of independent Gaussians over $\lat - \vec{t}$ with parameter $s' > 0$ and $(j_1,\ldots, j_m) \leftarrow f(2\lat + \vec{X}_1,\ldots, 2\lat + \vec{X}_{M'})$, then $(\vec{X}_{j_1},\ldots, \vec{X}_{j_m})$ is also a mixture of independent Gaussians over $\lat - \vec{t}$ with parameter $s'$. (Notice that this would \emph{not} be true if $f$ acted on vectors, rather than cosets.)
	Similarly, by Item~\ref{item:mixtures_preserved}, Procedure~\ref{proc:pair_and_average} maps mixtures of independent Gaussians over $\lat - \vec{t}$ with parameter $s'$ to mixtures with parameter $s'/\sqrt{2}$. It follows that for both $\mathcal{A}$ and $\mathcal{A}'$, after the $i$th step of the algorithm, the list of vectors is a mixture of Gaussians over $\lat - \vec{t}$ with parameter $s/2^{i/2}$. And, the same holds after the application of $f$ in algorithm $\mathcal{A}$. Therefore, the only question is the coset distributions. 
	
	By Fact~\ref{fact:monotone}, we see that $(\vec{X}_1,\ldots, \vec{X}_M)$ dominates $(\vec{X}_{j_1},\ldots, \vec{X}_{j_m})$. Therefore, by Item~\ref{item:dominated_preserved}, the distribution of vectors corresponding to $\mathcal{A}'$ dominates the distribution of $\mathcal{A}$ after the first step. If we assume for induction that, after the $(i - 1)$st step, the distribution of vectors corresponding to $\mathcal{A}'$ dominates the distribution corresponding to $\mathcal{A}$, then the exact same argument together with another application of Fact~\ref{fact:monotone} shows that the same holds after step $i$. The result follows.
\end{proof}

Theorem~\ref{thm:meta_thm}, together with the corresponding algorithms in~\cite{ADRS15,ADS15}, immediately implies Theorems~\ref{thm:SVP_intro} and~\ref{thm:CVP_intro}. For completeness, we give more direct proofs of these theorems in the appendix, more-or-less recreating the corresponding proofs in~\cite{ADRS15,ADS15}.

\bibliographystyle{alpha}
\newcommand{\etalchar}[1]{$^{#1}$}
\def\cprime{$'$}

\appendix

\section{Additional preliminaries}

We will need some additional preliminaries. We write 
\[
\lambda_1(\lat) := \min_{\vec{y} \in \lat \setminus \{\vec0\}} \|\vec{y}\|
\]
for the length of the shortest non-zero vector in the lattice. And, for a target vector $\vec{t} \in \R^n$, we write 
\[
\dist(\vec{t},\lat ) := \min_{\vec{y} \in \lat} \|\vec{y} - \vec{t}\|
\]
for the distance from $\vec{t}$ to the lattice. Notice that this is the same as the length of the shortest vector in $\lat - \vec{t}$.

\subsection{Some known algorithms}

We will need the famous result of
Lenstra, Lenstra, and Lov{\'a}sz~\cite{LLL82}.

\begin{theorem}[\cite{LLL82}]
	\label{thm:LLL}
	There is an efficient algorithm that take as input a lattice $\lat \subset \R^n$ and outputs $\tilde{\lambda} > 0$ with
	\[
	\lambda_1(\lat) \leq \tilde{d} \leq 2^{n/2} \lambda_1(\lat)
	\; .
	\]
\end{theorem}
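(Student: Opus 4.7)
The plan is to invoke the LLL basis reduction algorithm and output the length of its first basis vector. More precisely, given an input basis $\basis = (\vec{b}_1,\ldots, \vec{b}_n)$ for $\lat$, I would run LLL to obtain in polynomial time a new basis $\basis' = (\vec{b}_1',\ldots, \vec{b}_n')$ of $\lat$ that is \emph{LLL-reduced}: its Gram–Schmidt vectors $\gs{b}_1',\ldots, \gs{b}_n'$ satisfy size reduction $|\mu_{i,j}| \leq 1/2$ for all $j < i$, where $\mu_{i,j} = \inner{\vec{b}_i',\gs{b}_j'}/\length{\gs{b}_j'}^2$, and the Lov{\'a}sz condition $\length{\gs{b}_{i+1}'}^2 \geq (3/4 - \mu_{i+1,i}^2)\length{\gs{b}_i'}^2$. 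Set $\tilde{\lambda} := \length{\vec{b}_1'} = \length{\gs{b}_1'}$.

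The main analytic step is the quality bound. Combining size reduction with Lov{\'a}sz gives $\length{\gs{b}_{i+1}'}^2 \geq (1/2)\length{\gs{b}_i'}^2$, and iterating yields $\length{\gs{b}_1'}^2 \leq 2^{i-1}\length{\gs{b}_i'}^2$ for every $i$. The key geometric fact I would then use is that $\lambda_1(\lat) \geq \min_i \length{\gs{b}_i'}$: indeed, any nonzero $\vec{v} \in \lat$ written as $\vec{v} = \sum_i z_i \vec{b}_i'$ with $z_i \in \Z$ has $\length{\vec{v}} \geq |z_k|\cdot \length{\gs{b}_k'} \geq \length{\gs{b}_k'}$ where $k$ is the largest index with $z_k \neq 0$, because the orthogonal projection of $\vec{v}$ onto $\gs{b}_k'$ has length exactly $|z_k|\cdot \length{\gs{b}_k'}$. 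Combining these two observations,
\[
\tilde{\lambda} = \length{\gs{b}_1'} \leq 2^{(n-1)/2}\min_i \length{\gs{b}_i'} \leq 2^{(n-1)/2}\lambda_1(\lat) \leq 2^{n/2}\lambda_1(\lat),
\]
while $\tilde{\lambda} \geq \lambda_1(\lat)$ trivially since $\vec{b}_1' \in \lat \setminus \{\vec 0\}$.

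The step I expect to be the main obstacle is the polynomial running-time bound for LLL itself; the approximation quality follows almost mechanically from the LLL-reduced conditions, but the termination analysis is the substantive part. I would handle it via the standard potential function $\Phi(\basis') := \prod_{i=1}^{n} \length{\gs{b}_i'}^{2(n-i+1)}$ (or equivalently the product of squared determinants of the leading sublattices). Size reduction leaves $\Phi$ unchanged, while each swap triggered by a Lov{\'a}sz violation decreases $\Phi$ by a multiplicative factor of at least $3/4$. Assuming the input basis is rational with bit length $B$, one shows $\Phi$ starts at an integer bounded by $2^{O(nB)}$ and remains a positive integer, so the number of swaps is $O(n B)$; each iteration uses polynomially many arithmetic operations on numbers of polynomial bit length. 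This establishes polynomial-time termination and completes the proof.
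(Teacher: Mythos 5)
Your proof is correct and is a faithful reconstruction of the standard argument from the cited reference \cite{LLL82}; the paper itself states Theorem~\ref{thm:LLL} as a black-box citation and offers no proof of its own. The only thing worth flagging is cosmetic: the theorem statement in the paper has a typo, introducing the output as $\tilde{\lambda}$ but writing $\tilde{d}$ in the displayed inequality, and you silently (and sensibly) resolved this by using $\tilde{\lambda}$ throughout.
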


We will also need the following celebrated result due to Babai~\cite{Bab86}.

\begin{theorem}[\cite{Bab86}]
	\label{thm:babai}
	There is an efficient algorithm that takes as input a lattice $\lat \subset \R^n$ and a target $\vec{t} \in \R^n$ and outputs $\tilde{d} > 0$ with 
	\[
	\dist(\vec{t}, \lat) \leq \tilde{d} \leq 2^{n/2} \dist(\vec{t}, \lat)
	\; .
	\]
\end{theorem}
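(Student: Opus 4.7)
The plan is to use Babai's nearest plane algorithm on top of an LLL-reduced basis. First I would invoke the LLL algorithm (cf.\ Theorem~\ref{thm:LLL}) to compute an LLL-reduced basis $\vec{b}_1,\ldots,\vec{b}_n$ of $\lat$; writing $\gs{\vec{b}}_1,\ldots,\gs{\vec{b}}_n$ for its Gram--Schmidt orthogonalization, this basis satisfies the crucial inequality $\|\gs{\vec{b}}_i\|^2 \leq 2\|\gs{\vec{b}}_{i+1}\|^2$ for all $i$, hence $\|\gs{\vec{b}}_i\|^2 \leq 2^{j-i}\|\gs{\vec{b}}_j\|^2$ whenever $j \geq i$. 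Next I would execute the nearest plane procedure: initialize $\vec{w}_n := \vec{t}$ and, for $i = n, n-1, \ldots, 1$, set $c_i := \round{\inner{\vec{w}_i, \gs{\vec{b}}_i}/\|\gs{\vec{b}}_i\|^2}$ and $\vec{w}_{i-1} := \vec{w}_i - c_i \vec{b}_i$. Finally, output $\tilde{d} := \|\vec{t} - \vec{v}\|$ where $\vec{v} := \sum_i c_i \vec{b}_i \in \lat$. The lower bound $\tilde{d} \geq \dist(\vec{t},\lat)$ is immediate since $\vec{v}$ is a lattice vector.

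For the upper bound, the rounding step ensures by construction that the component of $\vec{t} - \vec{v}$ along $\gs{\vec{b}}_i/\|\gs{\vec{b}}_i\|$ has magnitude at most $\|\gs{\vec{b}}_i\|/2$ for every $i$. The key task is therefore to control the Gram--Schmidt norms in terms of $\dist(\vec{t}, \lat)$. To do this, I would fix a closest lattice vector $\vec{u}^*$ and, assuming $\vec{u}^* \neq \vec{v}$, take $k$ to be the largest index such that $\vec{u}^* - \vec{v}$ has a non-zero coefficient on $\vec{b}_k$ in the basis expansion. Then $\vec{u}^* - \vec{v}$ lies in $\spn(\gs{\vec{b}}_1,\ldots,\gs{\vec{b}}_k)$ and has an integer (hence magnitude $\geq 1$) coefficient on $\gs{\vec{b}}_k$ in the Gram--Schmidt basis, so its projection onto $\gs{\vec{b}}_k/\|\gs{\vec{b}}_k\|$ has magnitude at least $\|\gs{\vec{b}}_k\|$. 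Combining this with the rounding bound on $\vec{t} - \vec{v}$ via the triangle inequality applied to $\vec{t} - \vec{u}^* = (\vec{t} - \vec{v}) - (\vec{u}^* - \vec{v})$ gives $\|\gs{\vec{b}}_k\| \leq 2 \dist(\vec{t}, \lat)$.

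To finish, I would split $\|\vec{t} - \vec{v}\|^2$ into two parts by Gram--Schmidt component. For $i \leq k$, the rounding bound and the LLL inequality give a total contribution of at most $\tfrac{1}{4} \sum_{i \leq k} \|\gs{\vec{b}}_i\|^2 \leq \tfrac{1}{4}\|\gs{\vec{b}}_k\|^2 \sum_{i=1}^{k} 2^{k-i} \leq (2^k - 1)\, \dist(\vec{t}, \lat)^2$. For $i > k$, the coefficient of $\vec{u}^* - \vec{v}$ along $\gs{\vec{b}}_i$ vanishes, so the component of $\vec{t} - \vec{v}$ in that direction equals the corresponding component of $\vec{t} - \vec{u}^*$, and together these contribute at most $\|\vec{t} - \vec{u}^*\|^2 = \dist(\vec{t}, \lat)^2$. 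Summing yields $\|\vec{t} - \vec{v}\|^2 \leq 2^k \dist(\vec{t}, \lat)^2 \leq 2^n \dist(\vec{t}, \lat)^2$, i.e.\ $\tilde{d} \leq 2^{n/2} \dist(\vec{t}, \lat)$, as required.

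The main obstacle is the last bookkeeping step: naively bounding $\tfrac{1}{4}\sum_{i} \|\gs{\vec{b}}_i\|^2$ using only the crude estimate $\|\gs{\vec{b}}_k\| \leq 2 \dist(\vec{t},\lat)$ would force one to also bound $\|\gs{\vec{b}}_i\|$ for $i > k$, and it is not a priori clear how to do so. The trick---which is the heart of Babai's analysis---is to recognize that the ``large'' Gram--Schmidt directions ($i > k$) are exactly those in which $\vec{u}^*-\vec{v}$ is zero, so that the error $\vec{t}-\vec{v}$ there is \emph{already as small as} $\vec{t}-\vec{u}^*$, rather than being controlled by $\|\gs{\vec{b}}_i\|/2$.
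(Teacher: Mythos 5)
Your argument is a correct and complete reconstruction of Babai's nearest plane analysis; the paper, however, simply cites this result from \cite{Bab86} without giving its own proof, so there is no in-text argument to compare against. All the key steps check out: the rounding invariant bounding each Gram--Schmidt component of $\vec{t}-\vec{v}$ by $\|\gs{\vec{b}}_i\|/2$, the identification of the pivot index $k$ and the triangle-inequality deduction $\|\gs{\vec{b}}_k\| \leq 2\dist(\vec{t},\lat)$, the geometric sum over $i\le k$ using the LLL inequality $\|\gs{\vec{b}}_i\|^2\le 2^{k-i}\|\gs{\vec{b}}_k\|^2$, and the observation that for $i>k$ the components of $\vec{t}-\vec{v}$ coincide with those of $\vec{t}-\vec{u}^*$; these combine to $\|\vec{t}-\vec{v}\|^2\le 2^k\dist(\vec{t},\lat)^2\le 2^n\dist(\vec{t},\lat)^2$ as required. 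The only cosmetic omission is the degenerate case $\vec{u}^*=\vec{v}$, where $\tilde{d}=\dist(\vec{t},\lat)$ and the bound holds trivially.
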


\subsection{The distribution of disjoint pairs}

Recall that Procedure~\ref{proc:pair_and_average} takes the $T_i$ elements from the $i$th coset and converts them into $\floor{T_i/2}$ disjoint pairs. Therefore, for a list $\mathcal{M} := (X_1,\ldots, X_M) \in S^*$ over some finite set $S$, we write $\floor{\mathcal{M}/2}$ for the random variable obtained as in Procedure~\ref{proc:pair_and_average}. I.e., up to ordering (which does not concern us), $\floor{\mathcal{M}/2} := (X_1',\ldots, X_{M'}') \in (S \times S)^*$ is defined by 
\[
|\{ j \ : \ X_j' = (s,s)\}| = \floor{|\{ j \ : \ X_j = s\}|/2}
\]
for each $s \in S$.

	\begin{theorem}[{\cite[Theorem 3.3]{ADRS15}}]
		\label{thm:square_sampler}
		For any probabilities, $p_1,\ldots, p_N \in [0,1]$ with $\sum p_i = 1$, integer $M$, and $\kappa \geq \Omega(\log M)$ (the confidence parameter) with $M \geq 10\kappa^2/\pmax$, let $\mathcal{M} = (X_1,\ldots, X_M) \in \{1,\ldots, N\}^M$ be the distribution obtained by sampling each $X_j$ independently from the distribution that assigns to element $i$ probability $p_i$.  Then, there exists a rejection sampling procedure that, up to statistical distance $\exp(-\Omega(\kappa))$, maps $\mathcal{M}$ to the distribution $\mathcal{M}' := (X_1',\ldots, X_M') \in \{(1,1),\ldots, (N,N) \}^{M'}$ obtained by sampling each pair $X_j$ independently from the distribution that assigns to the pair $(i,i)$ probability $p_i^2/\pcol$, where
		\[
		M' := \left\lceil M \cdot \frac{\pcol}{32 \kappa \pmax} \right\rceil
		\; ,
		\] 
		$\pmax := \max p_i$, and $\pcol := \sum p_i^2$.
	\end{theorem}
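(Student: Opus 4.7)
The plan is to construct an explicit rejection sampling procedure and then bound the statistical distance of its output from the target distribution directly. The procedure is: given $X_1,\ldots,X_M$, form bin counts $T_i := |\{j : X_j = i\}|$; within each bin $i$ arbitrarily pair up samples to form $\lfloor T_i/2 \rfloor$ pairs of type $(i,i)$; independently accept each type-$i$ pair with probability $p_i/\pmax \in [0,1]$, producing an accepted multiset $\mathcal{A}$ with type counts $A_i$; finally, uniformly permute $\mathcal{A}$ and output the first $M'$ pairs, declaring failure if $|\mathcal{A}| < M'$. The ``pair within bins first'' step is what makes the procedure efficient enough: naively pairing random $X_j$'s would succeed only with probability $\pcol$ per pair, whereas pairing inside bins uses all matches and yields a pool of expected size $\Theta(M\pcol/\pmax)$.

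First I would establish concentration. Since $T_i \sim \mathrm{Bin}(M,p_i)$, Chernoff gives $|T_i - Mp_i| \leq O(\sqrt{Mp_i\,\kappa})$ for all $i$ with probability $1-\exp(-\Omega(\kappa))$; a union bound over $i$ is absorbed by the hypothesis $\kappa \geq \Omega(\log M)$. Conditional on $T$, $A_i \sim \mathrm{Bin}(\lfloor T_i/2 \rfloor, p_i/\pmax)$ concentrates around $\mu_i := Mp_i^2/(2\pmax)$ within $O(\sqrt{\mu_i\,\kappa})$, and summing gives $|\mathcal{A}|$ concentrated around $M\pcol/(2\pmax) = 16\kappa M'$. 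The hypothesis $M \geq 10\kappa^2/\pmax$ is exactly what controls these deviations and forces $|\mathcal{A}| \geq 8\kappa M'$ except with probability $\exp(-\Omega(\kappa))$, so the failure event fits inside the overall statistical-distance budget.

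Second, I would couple the procedure's output with $M'$ iid samples from $q_i := p_i^2/\pcol$. Generate $Y_1,\ldots,Y_{M'}$ iid from $q$ independently, with type counts $R_i$; Chernoff on the multinomial $R$ gives $R_i \leq M' q_i + O(\sqrt{M' q_i\,\kappa})$ for all $i$ with probability $1-\exp(-\Omega(\kappa))$, and since $\mu_i \approx 16\kappa M' q_i$ this gives $R \leq A$ componentwise on a high-probability event. Conditioned on $R \leq A$, the uniform-without-replacement sample of $M'$ from $\mathcal{A}$ can be coupled to the sequence $(Y_1,\ldots,Y_{M'})$ by choosing a uniformly random injection $\{1,\ldots,M'\} \hookrightarrow \mathcal{A}$ that sends $j$ to some type-$Y_j$ pair; such an injection exists precisely because $R \leq A$, and it realises both distributions simultaneously on the coupled event.

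The main obstacle is this last coupling step: one must verify that a uniform-without-replacement sample of size $M'$ from a pool of size $\Theta(\kappa M')$ whose empirical distribution is within $O(\sqrt{\kappa/|\mathcal{A}|})$ of $q$ is $\exp(-\Omega(\kappa))$-close in total variation to iid $q$. The factor $16\kappa$ gap between the expected pool size and $M'$ (equivalently, the $32\kappa$ factor in the definition of $M'$) is exactly the slack needed to absorb both the without-replacement correction and the empirical fluctuations $A_i/|\mathcal{A}| - q_i$ inside the $\exp(-\Omega(\kappa))$ target distance; without this $\kappa$-factor slack, the iid structure of the output would degrade and the theorem would be false.
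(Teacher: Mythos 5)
Your procedure (pair within bins, accept each type-$i$ pair with probability $p_i/\pmax$, then downsample the surviving pool $\mathcal{A}$ to exactly $M'$ by a uniform random permutation) is not the one used in~\cite{ADRS15}, and the final downsampling step contains a gap that you correctly identify as the ``main obstacle'' but incorrectly assert can be closed. Drawing $M'$ elements uniformly \emph{without replacement} from a pool of size $|\mathcal{A}| = \Theta(\kappa M')$ yields a multivariate hypergeometric law whose total variation distance from the corresponding multinomial is $\Theta(M'/|\mathcal{A}|) = \Theta(1/\kappa)$---polynomial in $1/\kappa$, not $\exp(-\Omega(\kappa))$. One already sees this in the two-type symmetric case: the hypergeometric has variance smaller than the binomial by a relative factor $\approx M'/|\mathcal{A}|$, and for two near-Gaussian laws with equal means and relative variance gap $\eps$ the TV distance is $\Theta(\eps)$; averaging over the random fluctuations in $\mathcal{A}$ shrinks the deficit by a constant factor but does not change its order. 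Your proposed coupling does not rescue this: a uniformly random injection of $\{1,\ldots,M'\}$ into $\mathcal{A}$ constrained to match a pre-drawn iid sequence $Y\sim q^{M'}$ has a $Z$-marginal that is \emph{not} the uniform-without-replacement law (the weight it places on a subset with type profile $\vec{r}$ is proportional to $\prod_i q_i^{r_i}$ rather than to $\prod_i \binom{A_i}{r_i}$); were it a valid coupling, the two sides would agree surely, forcing TV $=0$, which is absurd.

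The argument in~\cite{ADRS15} sidesteps the without-replacement issue entirely by never downsampling a pool. Instead one draws the target type counts $K=(K_1,\ldots,K_N)\sim\mathrm{Multinomial}(M',q)$ using \emph{fresh} randomness, independent of the $X_j$, and accepts iff $K_i \leq \lfloor T_i/2\rfloor$ for all $i$, in which case one outputs $K_i$ pairs of type $(i,i)$ from bin $i$. If $E$ denotes acceptance, the output's law is $\Pr[E]\,\mathrm{law}(K\,|\,E) + \Pr[\bar E]\cdot\mathrm{fail}$, which is within $\Pr[\bar E]$ of $\mathrm{law}(K)$ in total variation by a one-line calculation. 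The hypotheses $M\geq 10\kappa^2/\pmax$, $\kappa\geq\Omega(\log M)$, and the $32\kappa$-factor slack in $M'$ ensure $\expect[K_i] \leq \expect[\lfloor T_i/2\rfloor]/(16\kappa)$ for every $i$, and Chernoff plus a union bound give $\Pr[\bar E]\leq\exp(-\Omega(\kappa))$. Thus the $\kappa$-slack buys an exponentially small \emph{rejection probability} in a scheme whose accepted output has exactly the right conditional law, rather than (as your scheme would require, impossibly) an exponentially small without-replacement correction.
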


	\begin{corollary}
		\label{cor:square_sampler}
		For any probabilities, $p_1,\ldots, p_N \in [0,1]$ with $\sum p_i = 1$, integer $M$, and $\kappa \geq \Omega(\log M)$ (the confidence parameter) with $M \geq 10\kappa^2/\pmax$, let $\mathcal{M} := (X_1,\ldots, X_M) \in \{1,\ldots, N\}^M$ be the distribution obtained by sampling each $X_j$ independently from the distribution that assigns to element $i$ probability $p_i$, and let $\mathcal{M}' := (X_1',\ldots, X_{M'}')\in \{(1,1),\ldots, (N,N) \}^{M'}$ be the distribution obtained by sampling each pair $X_j'$ independently from the distribution that assigns to the pair $(i,i)$ probability $p_i^2/\pcol$, where 
		\[
		M' := \left\lceil M \cdot \frac{\pcol}{32 \kappa \pmax} \right\rceil
		\; ,
		\] 
		$\pmax := \max p_i$, and $\pcol := \sum p_i^2$.
		Then, $\mathcal{M}$ dominates $\mathcal{M}'$.
	\end{corollary}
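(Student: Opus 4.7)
The plan is to apply Theorem~\ref{thm:square_sampler} and then invoke the monotonicity of domination under sublists (Fact~\ref{fact:monotone}). Concretely, Theorem~\ref{thm:square_sampler} supplies a rejection sampling procedure $g$ which, given $\mathcal{M}$ as input, outputs a random list of pairs $g(\mathcal{M}) \in \{(1,1),\ldots,(N,N)\}^*$ whose distribution lies within total variation distance $\exp(-\Omega(\kappa))$ of $\mathcal{M}'$. The structural feature of $g$ on which everything hinges is that it operates by first forming disjoint pairs out of equal elements of $\mathcal{M}$ and then (randomly) discarding some pairs; in particular, $g$ never introduces symbols not already present in $\mathcal{M}$.

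Next, I would identify each pair $(i,i)$ appearing in $g(\mathcal{M})$ with the multiset $\{i,i\}$ of two copies of the symbol $i$. Under this identification the ``unpacked'' version of $g(\mathcal{M})$ is, for every realization of $\mathcal{M}$, a sub-multiset of $\mathcal{M}$ itself. Item~3 of Fact~\ref{fact:monotone} then immediately gives that $\mathcal{M}$ dominates the unpacked $g(\mathcal{M})$: for any threshold vector $(k_1,\ldots,k_N) \in \N^N$,
\[
\Pr\bigl[\,|\{j : X_j = i\}| \geq k_i \ \forall i\,\bigr] \;\geq\; \Pr\bigl[\,2 \cdot |\{j : g(\mathcal{M})_j = (i,i)\}| \geq k_i \ \forall i\,\bigr].
\]

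The remaining step is to transfer this inequality from $g(\mathcal{M})$ to $\mathcal{M}'$. Since unpacking is a deterministic map on lists of pairs, the total variation distance between the unpacked $g(\mathcal{M})$ and the unpacked $\mathcal{M}'$ is still at most $\exp(-\Omega(\kappa))$, so for any threshold event of the form appearing in Definition~\ref{def:dominates} the probabilities under $g(\mathcal{M})$ and under $\mathcal{M}'$ differ by at most $\exp(-\Omega(\kappa))$. Combined with the sublist domination, this yields the desired domination (more precisely, the domination up to the $\exp(-\Omega(\kappa))$ statistical slack inherited from Theorem~\ref{thm:square_sampler}, which is the form actually required by the applications in~\cite{ADRS15,ADS15}). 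The one mildly delicate point is precisely this statistical slack: it is intrinsic to Theorem~\ref{thm:square_sampler} and cannot be removed without revisiting its proof, so the cleanest reading of the corollary is that the domination holds ``up to statistical distance $\exp(-\Omega(\kappa))$.''
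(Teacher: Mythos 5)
Your argument is correct and matches the (implicit) intended derivation: the paper gives no explicit proof of this corollary, but the natural one is exactly yours --- observe that the rejection sampling procedure of Theorem~\ref{thm:square_sampler}, after pairing, only ever outputs a sub-multiset of $\mathcal{M}$, and then conclude via Fact~\ref{fact:monotone}. Your observation about the $\exp(-\Omega(\kappa))$ statistical slack is also well-taken: as literally stated the corollary drops this caveat, but the proof of Theorem~\ref{thm:combiner} invokes it only ``up to statistical distance $\exp(-\Omega(\kappa))$,'' which is precisely the form that this route through Theorem~\ref{thm:square_sampler} yields.
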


\subsection{Additional facts about the discrete Gaussian} 

We will also need some additional facts about the discrete Gaussian.

\begin{lemma}[\cite{banaszczyk}]
	\label{lem:banaszczyk_growth}
	For any lattice $\lat \subset \R^n$, parameter $s \geq 1$, and shift $\vec{t} \in \R^n$, $\rho_s(\lat - \vec{t}) \leq s^n \rho(\lat)$.
\end{lemma}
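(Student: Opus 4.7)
The plan is to apply the Poisson summation formula to the Gaussian, which is the standard Banaszczyk-style argument. Recall that the Fourier transform of $\rho_s$ is $s^n \rho_{1/s}$, so for any lattice $\lat \subset \R^n$, shift $\vec{t} \in \R^n$, and parameter $s > 0$, Poisson summation gives
\[
\rho_s(\lat - \vec{t}) \;=\; \frac{s^n}{\det(\lat)} \sum_{\vec{w} \in \lat^*} \rho_{1/s}(\vec{w}) \, e^{-2\pi i \inner{\vec{w}, \vec{t}}} \; ,
\]
where $\lat^*$ denotes the dual lattice.

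First I would invoke this identity and then bound the right-hand side using the triangle inequality: since each exponential factor has absolute value $1$ and each summand $\rho_{1/s}(\vec{w})$ is positive, we get
\[
\rho_s(\lat - \vec{t}) \;\leq\; \frac{s^n}{\det(\lat)} \sum_{\vec{w} \in \lat^*} \rho_{1/s}(\vec{w}) \; .
\]
Second, for $s \geq 1$ we have $\rho_{1/s}(\vec{w}) = \exp(-\pi \|\vec{w}\|^2 s^2) \leq \exp(-\pi \|\vec{w}\|^2) = \rho(\vec{w})$ pointwise, so the sum is bounded by $\sum_{\vec{w} \in \lat^*} \rho(\vec{w})$.

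Finally, I would apply Poisson summation once more, this time with parameter $s = 1$ and shift $\vec{t} = \vec{0}$, which yields $\rho(\lat) = \frac{1}{\det(\lat)} \sum_{\vec{w} \in \lat^*} \rho(\vec{w})$. Combining gives $\rho_s(\lat - \vec{t}) \leq s^n \rho(\lat)$, as required. There is no real obstacle here — the entire argument is three lines once one has Poisson summation in hand; the only ``ingredient'' the paper does not state explicitly is that the Fourier transform of $\rho_s$ is $s^n \rho_{1/s}$, which is a standard Gaussian computation and could be cited as folklore or attributed to Banaszczyk alongside the lemma itself.
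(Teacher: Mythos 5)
Your proof is correct and is the standard Poisson-summation argument that appears in Banaszczyk's paper; since the present paper only cites Lemma~\ref{lem:banaszczyk_growth} without reproducing a proof, there is nothing to compare against, and your three-step argument (Poisson summation, triangle inequality and pointwise domination $\rho_{1/s}\le\rho$ for $s\ge 1$, then Poisson summation again at $s=1$, $\vec{t}=\vec{0}$) is exactly the canonical route.
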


The following theorem shows that, if the parameter $s$ is appropriately small, then $D_{\lat - \vec{t}, s} + \vec{t}$ will be an approximate closest vector to $\vec{t}$, with approximation factor roughly $1+\sqrt{n} s/\dist(\vec{t}, \lat)$. (This is a basic consequence of Banaszczyk's celebrated theorem~\cite{banaszczyk}.)

\begin{proposition}[{\cite[Corollary 1.3.11]{NSDthesis}, see also \cite[Corollary 2.8]{ADS15}}]
	\label{prop:DGS_approx_CVP}
	For any lattice $\lat \subset \R^n$, parameter $s > 0$, shift $\vec{t} \in \R^n$, and radius $r > \sqrt{n/(2\pi)} \cdot s$, with $r > \dist(\vec{t}, \lat)$ and
	\[
	r^2 > \dist(\vec{t}, \lat)^2 + \frac{n s^2}{\pi} \cdot \log (2\pi \dist(\vec{t}, \lat)^2/(ns^2))
	\; ,
	\]
	we have
	\[
	\Pr_{\vec{X} \sim D_{\lat - \vec{t}, s}}[\|\vec{X}\| > r] <  (2e)^{n/2+1}\exp(-\pi y^2/2)
	\; ,
	\]
	where $y := \sqrt{r^2 - \dist(\vec{t}, \lat)^2}/s$.
\end{proposition}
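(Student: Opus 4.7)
The plan is to express the probability as a ratio of Gaussian sums,
\[
\Pr_{\vec{X}\sim D_{\lat-\vec{t},s}}[\|\vec{X}\|>r] \;=\; \frac{\rho_s((\lat-\vec{t})\setminus rB_n)}{\rho_s(\lat-\vec{t})},
\]
where $B_n$ denotes the closed Euclidean unit ball, and to bound the numerator and denominator separately. The overall strategy is a standard Banaszczyk-style tail-bound-plus-lower-bound argument, with the hypothesis on $r$ calibrated exactly to control the polynomial prefactor.

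For the denominator, let $\vec{v}^*\in\lat$ be a closest lattice vector to $\vec{t}$, so $\|\vec{v}^*-\vec{t}\| = \dist(\vec{t},\lat) =: d$. Using the expansion $\rho_s(\vec{v}^*-\vec{t}+\vec{x}) = \rho_s(\vec{v}^*-\vec{t})\,\rho_s(\vec{x})\,e^{-2\pi\langle\vec{v}^*-\vec{t},\vec{x}\rangle/s^2}$, I would sum over $\vec{x}\in\lat$, pair each $\vec{x}$ with $-\vec{x}$, and apply $e^z+e^{-z}\geq 2$ to the two exponential factors; this yields the lower bound $\rho_s(\lat-\vec{t}) \geq e^{-\pi d^2/s^2}\,\rho_s(\lat)$, which is sharper than the trivial lower bound $\rho_s(\lat-\vec{t}) \geq e^{-\pi d^2/s^2}$ from retaining only the closest lattice vector.

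For the numerator, I would invoke Banaszczyk's classical Gaussian tail inequality~\cite{banaszczyk} (a close relative of, and provable from, Lemma~\ref{lem:banaszczyk_growth} after rescaling $\lat$ by $1/s$): for any shift and any radius $r\geq\sqrt{n/(2\pi)}\,s$ (our first side condition),
\[
\rho_s((\lat-\vec{t})\setminus rB_n) \;\leq\; \bigl(2\pi e r^2/(ns^2)\bigr)^{n/2}\, e^{-\pi r^2/s^2}\, \rho_s(\lat).
\]
Dividing by the sharper denominator bound gives
\[
\Pr[\|\vec{X}\|>r] \;\leq\; \bigl(2\pi e r^2/(ns^2)\bigr)^{n/2}\, e^{-\pi(r^2-d^2)/s^2} \;=\; \bigl(2\pi e r^2/(ns^2)\bigr)^{n/2}\, e^{-\pi y^2}.
\]

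The final step is to convert this into the target $(2e)^{n/2+1}e^{-\pi y^2/2}$, effectively spending half of the exponential decay factor to absorb the polynomial prefactor in $r$. The hypothesis $r^2 > d^2 + (ns^2/\pi)\log(2\pi d^2/(ns^2))$, rewritten as $e^{\pi y^2/2} > (2\pi d^2/(ns^2))^{n/2}$, is exactly what is needed to control the prefactor in the interesting regime $d\gtrsim\sqrt{n/(2\pi)}\,s$; in the complementary regime $d\lesssim\sqrt{n/(2\pi)}\,s$ the logarithmic term in the hypothesis is vacuous, and the side condition $r\geq\sqrt{n/(2\pi)}\,s$ together with a short direct computation suffices. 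The main obstacle I foresee is this constant-bookkeeping at the very end: one must handle a small case split on $d$ versus $\sqrt{n}\,s$ (and, within the large-$d$ regime, the sub-case $r\gg d$ where the bound $e^{-\pi y^2/2}$ alone kills the polynomial factor), and then verify that the accumulated constant is exactly $(2e)^{n/2+1}$. No conceptually new ingredient is required beyond Banaszczyk's inequality and the pairing lower bound.
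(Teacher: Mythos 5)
The paper itself does not prove this proposition --- it is cited from \cite{ADS15} (Cor.\ 2.8) and \cite{NSDthesis} (Cor.\ 1.3.11) --- so there is no in-paper argument to compare against. Your outline is, as far as I can tell, exactly the standard argument those references use: write the tail probability as $\rho_s\big((\lat-\vec{t})\setminus rB_n\big)/\rho_s(\lat-\vec{t})$, lower-bound the denominator by $e^{-\pi d^2/s^2}\rho_s(\lat)$ via the pairing trick, upper-bound the numerator by Banaszczyk's shifted tail inequality, and then spend half the Gaussian decay on the polynomial prefactor.

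The ``constant bookkeeping'' you flag as the main obstacle does close, with room to spare. Set $u := \pi y^2/n > 0$ and $v := \pi d^2/(ns^2) \geq 0$, so that $\pi r^2/(ns^2) = u+v$ and $\pi y^2/2 = nu/2$. After dividing your numerator and denominator bounds, the needed inequality reduces to $(u+v)^{n/2} \leq e^{nu/2}$, up to the constant slack. The hypotheses give $u+v > 1/2$ and, when $v > 1/2$, the second displayed hypothesis rearranges to $e^u > 2v$. Using the elementary fact $e^u > 2u$ for all $u$: if $v \leq 1/2$ then either $u+v \leq 1$ (so $(u+v)^{n/2} \leq 1 \leq e^{nu/2}$) or $u > 1/2 \geq v$, giving $u+v < 2u < e^u$; if $v > 1/2$, then $u+v < e^u/2 + e^u/2 = e^u$. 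In every case $(u+v)^{n/2} \leq e^{nu/2}$, which even absorbs the factor of $2$ that appears in the \emph{shifted} form of Banaszczyk's inequality (your display omits it, but it is harmless): the final bound is $2(2e)^{n/2}e^{-\pi y^2/2} < (2e)^{n/2+1}e^{-\pi y^2/2}$. So your proposal is correct; the sub-case $r \gg d$ you anticipate needing is not actually required.
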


The next theorem shows that exponentially many samples from $D_{\lat, s}$ with $s \approx \lambda_1(\lat)/\sqrt{n}$ is sufficient to find a shortest non-zero lattice vector.

\begin{proposition}[{\cite[Proposition 4.3]{ADRS15}}]
	\label{prop:SVPtoDGS}
	For any lattice $\lat \subset \R^n$, and parameter 
	\[
	s := \sqrt{2^{0.198}\pi e/n} \cdot \lambda_1(\lat)
	\;, 
	\]
	we have
	\[
	\Pr_{\vec{X} \sim D_{\lat, s}}[\|\vec{X}\| = \lambda_1(\lat)] \geq 1.38^{-n - o(n)}
	\; .
	\]
\end{proposition}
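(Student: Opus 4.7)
Plan: Let $V_0 := \{\vec{v} \in \lat \setminus \{\vec{0}\} : \|\vec{v}\| = \lambda_1(\lat)\}$ be the set of shortest non-zero vectors and $N := |V_0| \geq 2$. The probability in question equals $N \cdot \rho_s(\vec{v})/\rho_s(\lat)$ for any $\vec{v} \in V_0$. The choice of $s$ gives
\[
\rho_s(\vec{v}) \;=\; \exp(-\pi \lambda_1(\lat)^2/s^2) \;=\; \exp\bigl(-n/(2^{0.198}\,e)\bigr),
\]
and since $1/(2^{0.198} e) < \ln 1.38$, we have $\rho_s(\vec{v}) \geq 1.38^{-n} \cdot e^{\Omega(n)}$ with a small but exponential slack. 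The precise non-round constant $2^{0.198}$ is tuned to produce exactly this slack, which controls how large we can afford $\rho_s(\lat)$ to be.

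I would bound $\rho_s(\lat) = \sum_{\vec{x} \in \lat} \exp(-\pi \|\vec{x}\|^2/s^2)$ by partitioning the sum into shells based on norm. The number of lattice points of norm at most $r$ admits two complementary estimates: a packing bound $(1 + 2r/\lambda_1(\lat))^n$ from disjoint balls of radius $\lambda_1(\lat)/2$ around lattice points, and a volume bound of the form $(2r/\lambda_1(\lat))^n \cdot \poly(n)$ obtained from Minkowski's theorem $\det(\lat)^{1/n} \gtrsim \lambda_1(\lat)\sqrt{\pi e/(2n)}$ combined with the standard formula for the volume of an $n$-ball. Multiplying each shell count by the Gaussian weight $\exp(-r^2 n/(2^{0.198} e\, \lambda_1(\lat)^2))$ and summing over shells should cap $\rho_s(\lat)/\rho_s(\vec{v})$ at $1.38^{n + o(n)}$. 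The innermost, kissing-number shell at $r = \lambda_1(\lat)$ is separately handled by the Kabatyanskii--Levenshtein sphere-packing bound $N \leq 2^{0.402 n + o(n)} < 1.38^{n + o(n)}$, which ensures that $N \cdot \rho_s(\vec{v})$ itself stays safely inside the envelope.

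The main obstacle is matching the sharp constant $1.38$: a naive Laplace estimate from the packing bound alone gives a substantially weaker constant because the saddle point sits near $r \approx \lambda_1(\lat)$, where the Gaussian has not yet decayed appreciably. To close this gap, I expect the proof to combine the packing and volume bounds at a carefully chosen threshold radius, optimized against the specific tuning $2^{0.198}$ of $s$, so that the contributions from shells near $\lambda_1(\lat)$ and shells far out balance just inside the $1.38^{n+o(n)}$ budget. If this direct route still falls short, I would fall back on an analytic refinement---either a Banaszczyk-style tail bound (delicate here because $s$ lies in the \emph{sub-smoothing} regime $s \ll \eta_{1/2}(\lat)$) or the Poisson summation identity relating $\rho_s(\lat)$ to the well-smoothed dual sum $s^n/\det(\lat) \cdot \rho_{1/s}(\lat^*)$---to control the contribution from shells far from the origin.
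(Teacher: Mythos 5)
Your high-level framing is right and matches ADRS15: the probability is $\rho_s(V_0)/\rho_s(\lat) \geq 2\rho_s(\vec{v})/\rho_s(\lat)$, and the tuning gives $\pi\lambda_1^2/s^2 = n/(2^{0.198}e) \approx 0.3207\,n$, which indeed leaves a razor-thin margin over $\ln 1.38 \approx 0.3221$. So far so good. But the part you flag as uncertain is exactly where the argument breaks, and the tools you propose do not fix it.

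First, your ``Minkowski volume bound'' $|\lat \cap rB| \leq (2r/\lambda_1)^n\poly(n)$ is not a theorem. Minkowski's first theorem gives a \emph{lower} bound on $\det(\lat)$, but the point count in a ball is not bounded above by $\mathrm{vol}(rB)/\det(\lat)\cdot\poly(n)$ for general lattices and moderate $r$ (one needs control on the covering radius for that kind of volume comparison, which Minkowski does not provide). Second, even granting it, it is quantitatively hopeless here: plug $(2k)^n$ for the $k$-th shell into the Laplace estimate and you get $\rho_s(\lat)/\rho_s(\vec{v}) \approx e^{1.4n}$, i.e., a probability bound around $4^{-n}$, versus $5^{-n}$ from pure packing --- still nowhere near $1.38^{-n}$. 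Third, invoking Kabatyanskii--Levenshtein only for the innermost shell ($r = \lambda_1$) does not help: the dominant contribution to $\rho_s(\lat)$ in your own saddle-point computation comes from shells with $r$ around one to two times $\lambda_1$, and the point counts \emph{there} are what must be tamed. The fallbacks you mention also don't do it: Banaszczyk controls $\rho_s(\lat \setminus rB)/\rho_s(\lat)$, a \emph{relative} tail, not the absolute size of $\rho_s(\lat)$; and Poisson summation trades $\rho_s(\lat)$ for $\rho_{1/s}(\lat^*)$, which in this regime is no easier to bound.

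The ingredient the actual proof in \cite{ADRS15} uses, and which is missing from your plan, is a Kabatyanskii--Levenshtein--type lattice point counting bound applied at \emph{every} radius, not just the kissing radius. Concretely, any two lattice points of norm $\approx r$ subtend an angle $\geq 2\arcsin(\lambda_1/(2r))$ at the origin, so the linear-programming bound for spherical codes caps $|\lat \cap rB|$ by $2^{c(r/\lambda_1)\,n + o(n)}$ for an explicit exponent $c(\cdot)$ that equals $\approx 0.401$ at $r = \lambda_1$ and is \emph{far} below $\log_2(1 + 2r/\lambda_1)$ for $r$ a small constant multiple of $\lambda_1$. Feeding that sharper shell bound into the Gaussian sum and optimizing the parameter is precisely what produces the odd-looking constant $2^{0.198}$ in $s$ and the final constant $1.38$. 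Without a counting bound of this strength, no combination of packing and Minkowski-style volume estimates gets within a constant factor of the target exponent.
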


The next corollary follows immediately from Proposition~\ref{prop:SVPtoDGS} and Lemma~\ref{lem:banaszczyk_growth}.

\begin{corollary}
	\label{cor:approx_right_parameter}
	For any lattice $\lat \subset \R^n$, and parameter 
	\[
	\sqrt{2^{0.198}\pi e/n} \cdot \lambda_1(\lat) \leq s \leq  1.01 \cdot \sqrt{2^{0.198}\pi e/n} \cdot \lambda_1(\lat)
	\;, 
	\]
	we have
	\[
	\Pr_{\vec{X} \sim D_{\lat, s}}[\|\vec{X}\| = \lambda_1(\lat)] \geq 1.4^{-n - o(n)}
	\; .
	\]
\end{corollary}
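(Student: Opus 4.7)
The plan is to compare the distributions $D_{\lat, s}$ and $D_{\lat, s_0}$, where $s_0 := \sqrt{2^{0.198} \pi e/n} \cdot \lambda_1(\lat)$, and deduce the corollary from Proposition~\ref{prop:SVPtoDGS} by losing only a $1.01^n$ factor in the probability. Write $\vec{v}$ for any shortest non-zero lattice vector. Since the probability of sampling $\vec{v}$ from $D_{\lat, s}$ is $\rho_s(\vec{v})/\rho_s(\lat)$, I will bound the numerator from below and the denominator from above in terms of the corresponding quantities at parameter $s_0$.

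For the numerator, the bound is immediate: because $s \geq s_0$, we have $\rho_s(\vec{v}) = \exp(-\pi \lambda_1(\lat)^2/s^2) \geq \exp(-\pi \lambda_1(\lat)^2/s_0^2) = \rho_{s_0}(\vec{v})$. For the denominator, I would apply Lemma~\ref{lem:banaszczyk_growth} to the rescaled lattice $\lat' := (1/s_0) \lat$ with parameter $s/s_0 \geq 1$. Using the fact that $\rho_{cs_0}(\lat) = \rho_c(\lat')$ (which is immediate from the definition $\rho_s(\vec{x}) = \exp(-\pi\|\vec{x}\|^2/s^2)$), Banaszczyk's bound yields
\[
\rho_s(\lat) \;=\; \rho_{s/s_0}(\lat') \;\leq\; (s/s_0)^n \rho_1(\lat') \;=\; (s/s_0)^n \rho_{s_0}(\lat) \;\leq\; 1.01^n \rho_{s_0}(\lat)
\; .
\]

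Combining, for every shortest vector $\vec{v}$,
\[
\Pr_{\vec{X}\sim D_{\lat,s}}[\vec{X}=\vec{v}] \;=\; \frac{\rho_s(\vec{v})}{\rho_s(\lat)} \;\geq\; 1.01^{-n} \cdot \frac{\rho_{s_0}(\vec{v})}{\rho_{s_0}(\lat)} \;=\; 1.01^{-n}\, \Pr_{\vec{X}_0 \sim D_{\lat, s_0}}[\vec{X}_0 = \vec{v}]
\; .
\]
Summing over all shortest non-zero lattice vectors and invoking Proposition~\ref{prop:SVPtoDGS} at the parameter $s_0$,
\[
\Pr_{\vec{X} \sim D_{\lat, s}}[\|\vec{X}\| = \lambda_1(\lat)] \;\geq\; 1.01^{-n} \cdot 1.38^{-n - o(n)} \;=\; (1.01 \cdot 1.38)^{-n} \cdot 1.38^{-o(n)} \;\geq\; 1.4^{-n - o(n)}
\; ,
\]
where the final inequality uses $1.01 \cdot 1.38 = 1.3938 < 1.4$.

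I do not expect any real obstacle here: the only subtle point is remembering that Lemma~\ref{lem:banaszczyk_growth} is stated with a scaling hypothesis ``$s \geq 1$,'' so one must apply it to the rescaled lattice $(1/s_0)\lat$ rather than to $\lat$ directly. Once this is done, the rest is a one-line calculation showing that the $1.01^n$ slack in the parameter is easily absorbed into the gap between the bases $1.38$ and $1.4$.
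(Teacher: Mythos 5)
Your proof is correct and follows exactly the route the paper intends: the paper states that the corollary "follows immediately from Proposition~\ref{prop:SVPtoDGS} and Lemma~\ref{lem:banaszczyk_growth}," and you have supplied precisely the short calculation this elides, including the necessary rescaling to $(1/s_0)\lat$ so that Lemma~\ref{lem:banaszczyk_growth}'s hypothesis $s \geq 1$ applies.
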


We will also need the following result from~\cite{ADS15}, which is an immediate consequence of the main identity in~\cite{riemann17}. (See also~\cite{NSDthesis}.)

\begin{lemma}[{\cite[Corollary 3.3]{ADS15}}]
	\label{lem:rotation_cor_shifted}
	For any lattice $\lat \subset \R^n$, shift $\vec{t} \in \R^n$, and parameter $s > 0$, we have
	\[
	\max_{\vec{c} \in \lat/(2\lat)} \rho_s(2\lat + \vec{c} - \vec{t})^2 \leq \rho_{s/\sqrt{2}}(\lat)\max_{\vec{c} \in \lat} \rho_{s/\sqrt{2}}(2\lat + \vec{c} - \vec{t})
	\; .
	\]
\end{lemma}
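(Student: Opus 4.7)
The plan is to mimic the proof of Lemma~\ref{lem:collision_identity} but to avoid summing over all cosets $\vec{c} \in \lat/(2\lat)$, so that the final expression retains the coset dependence and can be bounded by a maximum.

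First I would expand $\rho_s(2\lat + \vec{c} - \vec{t})^2$ as a double sum over $\vec{y}_1, \vec{y}_2 \in \lat$ of $\rho_s(2\vec{y}_1 + \vec{c} - \vec{t})\rho_s(2\vec{y}_2 + \vec{c} - \vec{t})$, and then apply the rotation identity~\eqref{eq:rotate} to obtain
\[
\rho_s(2\lat + \vec{c} - \vec{t})^2 = \sum_{\vec{y}_1, \vec{y}_2 \in \lat} \rho_{s/\sqrt{2}}(\vec{y}_1 + \vec{y}_2 + \vec{c} - \vec{t})\rho_{s/\sqrt{2}}(\vec{y}_1 - \vec{y}_2)\; .
\]
Next I would perform the change of variables $\vec{w} := \vec{y}_1 - \vec{y}_2$, keeping $\vec{y}_1$ as the other free variable (exactly as in the proof of Lemma~\ref{lem:collision_identity}). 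This rewrites the sum as
\[
\rho_s(2\lat + \vec{c} - \vec{t})^2 = \sum_{\vec{w} \in \lat} \rho_{s/\sqrt{2}}(\vec{w}) \sum_{\vec{y}_1 \in \lat} \rho_{s/\sqrt{2}}(2\vec{y}_1 - \vec{w} + \vec{c} - \vec{t}) = \sum_{\vec{w} \in \lat} \rho_{s/\sqrt{2}}(\vec{w}) \cdot \rho_{s/\sqrt{2}}(2\lat + \vec{c} - \vec{w} - \vec{t})\; .
\]

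At this point the key step is to observe that $\rho_{s/\sqrt{2}}(2\lat + \vec{c} - \vec{w} - \vec{t})$ is always the Gaussian mass of \emph{some} coset of $2\lat$ shifted by $-\vec{t}$ (namely, the coset $\vec{c} - \vec{w} \bmod 2\lat$), so it is bounded above by $\max_{\vec{d} \in \lat/(2\lat)} \rho_{s/\sqrt{2}}(2\lat + \vec{d} - \vec{t})$ regardless of $\vec{w}$. Pulling this maximum out of the sum leaves $\sum_{\vec{w} \in \lat} \rho_{s/\sqrt{2}}(\vec{w}) = \rho_{s/\sqrt{2}}(\lat)$, which gives the desired bound on $\rho_s(2\lat + \vec{c} - \vec{t})^2$ for an arbitrary $\vec{c}$, and hence on the maximum over $\vec{c}$.

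There is no real obstacle here: the substance is the rotation identity and the change of variables, both of which are already carried out in the proof of Lemma~\ref{lem:collision_identity}. The only new idea is that, since we want a pointwise bound on $\rho_s(2\lat + \vec{c} - \vec{t})^2$ rather than a sum, we do not collapse the $\vec{w}$-sum via the coset identity but instead bound one factor uniformly and leave the other as the bare Gaussian mass $\rho_{s/\sqrt{2}}(\lat)$.
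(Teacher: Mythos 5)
Your proof is correct, and since the paper only cites~\cite[Corollary~3.3]{ADS15} for this lemma without reproducing a proof, there is nothing in the paper itself to compare against line by line. Your derivation is complete and self-contained: the expansion and change of variables $\vec{w} = \vec{y}_1 - \vec{y}_2$ give exactly
\[
\rho_s(2\lat + \vec{c} - \vec{t})^2 = \sum_{\vec{w}\in\lat}\rho_{s/\sqrt 2}(\vec{w})\,\rho_{s/\sqrt 2}\bigl(2\lat+\vec{c}-\vec{w}-\vec{t}\bigr)
\]
(mirroring the corresponding step in the proof of Lemma~\ref{lem:collision_identity}), and since $\vec{c}-\vec{w}\in\lat$, the second factor is always the mass of some coset $2\lat + \vec{d} - \vec{t}$, hence at most $\max_{\vec{d}\in\lat/(2\lat)}\rho_{s/\sqrt 2}(2\lat+\vec{d}-\vec{t})$; summing the remaining first factor over $\vec{w}$ gives $\rho_{s/\sqrt 2}(\lat)$, and taking the maximum over $\vec{c}$ finishes the argument. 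It is worth noting that this route is genuinely more elementary than what the paper alludes to: the text describes this lemma as ``an immediate consequence of the main identity in~\cite{riemann17},'' which is a nontrivial correlation-type inequality for Gaussian masses on shifted lattices, whereas your argument uses nothing beyond the product/rotation identity~\eqref{eq:rotate} and the trivial observation that each summand's coset mass is dominated by the maximal one. So you lose nothing and gain a direct, elementary proof; on the other hand, the~\cite{riemann17} machinery yields stronger structural information (e.g., \emph{which} coset achieves the max) that is not needed here but is used elsewhere in~\cite{ADS15}.
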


From this, we derive the following rather technical-looking inequality, which is implicit in~\cite{ADS15}. (This inequality comes up naturally in the proof of Corollary~\ref{cor:pipeline}. We separate it out here to make that proof cleaner.)

\begin{corollary}
	\label{cor:loss_factor_product}
	For any lattice $\lat \subset \R^n$, shift $\vec{t} \in \R^n$, parameter $s > 0$, and integer $\ell \geq 0$, we have
	\begin{align*}
	&\prod_{i=0}^{\ell-1} 
	\frac{ \rho_{s/2^{(i+1)/2}}(\lat - \vec{t}) \rho_{s/2^{(i+1)/2}}(\lat)}
	{\rho_{s/2^{i/2}}(\lat - \vec{t}) \cdot \max_{\vec{c} \in \lat/(2\lat)} \rho_{s/2^{i/2}}(2\lat + \vec{c} - \vec{t})}\\
	&\qquad\geq \frac{\rho_{s/2^{\ell/2}}(\lat - \vec{t})}{\max_{\vec{c} \in \lat/(2\lat)} \rho_{s/2^{\ell/2}}(2\lat + \vec{c} - \vec{t})} \cdot \frac{\max_{\vec{c} \in \lat/(2\lat)} \rho_{s}(2\lat + \vec{c} - \vec{t})}{\rho_{s}(\lat - \vec{t})}
	\; .
	\end{align*}
\end{corollary}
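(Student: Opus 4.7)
The plan is to exploit the multiplicative structure of the target inequality by telescoping two separate products: one that cancels the $\rho_{s/2^{i/2}}(\lat - \vec{t})$ factors outright, and a second that handles the maxima over cosets via repeated application of Lemma~\ref{lem:rotation_cor_shifted}.

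First I would introduce the shorthand $s_i := s/2^{i/2}$, $M_i := \max_{\vec{c} \in \lat/(2\lat)} \rho_{s_i}(2\lat + \vec{c} - \vec{t})$, and $R_i := \rho_{s_i}(\lat)$, so that the factors $\rho_{s_{i+1}}(\lat - \vec{t})$ in the numerator and $\rho_{s_i}(\lat - \vec{t})$ in the denominator of the LHS telescope to $\rho_{s_\ell}(\lat - \vec{t})/\rho_{s_0}(\lat - \vec{t})$. After dividing this common telescoping ratio out of the RHS as well, the desired inequality reduces to the much cleaner claim
\[
\prod_{i=0}^{\ell-1} \frac{R_{i+1}}{M_i} \;\geq\; \frac{M_0}{M_\ell}.
\]

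Next I would apply Lemma~\ref{lem:rotation_cor_shifted} at each parameter $s_i$: since $s_{i+1} = s_i/\sqrt{2}$, the lemma reads $M_i^2 \leq R_{i+1} M_{i+1}$, which rearranges to $R_{i+1}/M_i \geq M_i/M_{i+1}$. Substituting this lower bound into every factor of the product on the left-hand side yields
\[
\prod_{i=0}^{\ell-1} \frac{R_{i+1}}{M_i} \;\geq\; \prod_{i=0}^{\ell-1} \frac{M_i}{M_{i+1}} \;=\; \frac{M_0}{M_\ell},
\]
where the final equality is a second, now trivial, telescoping. Combining the two reductions gives exactly the stated inequality.

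I do not anticipate a real obstacle: the proof is a pure bookkeeping exercise once one sees that the right rewriting exposes two telescoping products and that Lemma~\ref{lem:rotation_cor_shifted} supplies precisely the per-index bound $R_{i+1}/M_i \geq M_i/M_{i+1}$ needed to drive the second telescoping. The only mild care needed is to ensure that no Gaussian mass in sight is zero so that division is legitimate, which is immediate since $\rho_{s}$ is strictly positive on every nonempty shifted lattice.
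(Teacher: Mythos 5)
Your proof is correct and is essentially the paper's proof. The only difference is organizational: the paper applies Lemma~\ref{lem:rotation_cor_shifted} to each factor at once and telescopes the resulting product in a single step, whereas you first telescope the $\rho_{s/2^{i/2}}(\lat-\vec{t})$ ratios and then telescope the remaining product $\prod R_{i+1}/M_i \geq \prod M_i/M_{i+1}$ separately; both rely on exactly the same per-index bound $M_i^2 \leq R_{i+1}M_{i+1}$ and the same cancellations.
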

\begin{proof}
	From Lemma~\ref{lem:rotation_cor_shifted}, we see that for all $i$,
	\[
	\frac{\rho_{s/2^{(i+1)/2}} (\lat)}{\max_{\vec{c} \in \lat/(2\lat)} \rho_{s/2^{i/2}}(2\lat + \vec{c} - \vec{t})} \ge \frac{\max_{\vec{c} \in \lat/(2\lat)} \rho_{s/2^{i/2}}(2\lat + \vec{c} - \vec{t})}{\max_{\vec{c} \in \lat/(2\lat)} \rho_{s/2^{(i+1)/2}}(2\lat + \vec{c} - \vec{t})}
	\;.\]
	Therefore, the product in the statement of the corollary is at least
	\begin{align*}
	&\prod_{i=0}^{\ell-1} 
	\frac{ \rho_{s/2^{(i+1)/2}}(\lat - \vec{t}) \cdot \max_{\vec{c} \in \lat/(2\lat)} \rho_{s/2^{i/2}}(2\lat + \vec{c} - \vec{t})}
	{\rho_{s/2^{i/2}}(\lat - \vec{t}) \cdot \max_{\vec{c} \in \lat/(2\lat)} \rho_{s/2^{(i+1)/2}}(2\lat + \vec{c} - \vec{t})}\\
	& \qquad = \frac{\rho_{s/2^{\ell/2}}(\lat - \vec{t})}{\max_{\vec{c} \in \lat/(2\lat)} \rho_{s/2^{\ell/2}}(2\lat + \vec{c} - \vec{t})} \cdot \frac{\max_{\vec{c} \in \lat/(2\lat)} \rho_{s}(2\lat + \vec{c} - \vec{t})}{\rho_{s}(\lat - \vec{t})}
	\; ,
	\end{align*}
	where we have used the fact that this is a telescoping product.
\end{proof}

\section{Running Procedure~\ref{proc:pair_and_average} on Gaussian input}

\begin{theorem}
	\label{thm:combiner}
	For any lattice $\lat \subset \R^n$, shift $\vec{t} \in \R^n$, parameter $s > 0$, integer $M$, and confidence parameter $\kappa \geq \Omega(\log M)$,
	if $\vec{X}_1,\ldots, \vec{X}_M$ are sampled independently from $D_{\lat- \vec{t}, s}$ with
	\[
	M \geq 10\kappa^2 \cdot \frac{\rho_s(\lat - \vec{t})}{\max_{\vec{c} \in \lat/(2\lat)} \rho_s(2\lat + \vec{c} - \vec{t})}
	\; ,
	\] 
	 then the output of Procedure~\ref{proc:pair_and_average} applied to the $\vec{X}_i$ will be a mixture of independent Gaussians with parameter $s/\sqrt{2}$ that dominates the distribution of 
	\[
	M' := \left\lceil \frac{M}{32\kappa} \cdot \frac{\rho_{s/\sqrt{2}}(\lat - \vec{t}) \cdot  \rho_{s/\sqrt{2}}(\lat)}{\rho_s(\lat - \vec{t}) \cdot \max_{\vec{d} \in \lat/(2\lat)} \rho_s(2\lat + \vec{d} - \vec{t})} \right\rceil
	\]
	independent samples from $D_{\lat - \vec{t},s/\sqrt{2}}$,
	up to statistical distance
	$\exp(-\Omega(\kappa))$.
\end{theorem}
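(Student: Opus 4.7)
The approach is to reduce to the rejection-sampling analysis of~\cite{ADRS15} via Corollary~\ref{cor:mixtures}, sandwiching the no-rejection output between the Gaussian input and $M'$ i.i.d.\ samples from $D_{\lat - \vec{t}, s/\sqrt{2}}$.

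First I would set up the coset distribution. Since the $\vec{X}_i$ are i.i.d.\ from $D_{\lat - \vec{t}, s}$, they trivially form a mixture of independent Gaussians with parameter $s$ whose coset distribution is $M$ i.i.d.\ draws from $p_{\vec{c}} := \rho_s(2\lat + \vec{c} - \vec{t})/\rho_s(\lat - \vec{t})$ on $\lat/(2\lat)$. Then $\pmax = \max_{\vec{c}}\rho_s(2\lat + \vec{c} - \vec{t})/\rho_s(\lat - \vec{t})$, and Lemma~\ref{lem:collision_identity} gives $\pcol = \rho_{s/\sqrt{2}}(\lat - \vec{t})\rho_{s/\sqrt{2}}(\lat)/\rho_s(\lat - \vec{t})^2$. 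Substituting these into $10\kappa^2/\pmax$ and $\lceil M\pcol/(32\kappa\pmax)\rceil$ recovers exactly the hypothesis and the $M'$ of the statement.

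Next I would analyze an auxiliary algorithm $\mathcal{A}$ that first applies Theorem~\ref{thm:square_sampler}'s rejection procedure on the cosets of the $\vec{X}_i$ and then runs Procedure~\ref{proc:pair_and_average}. Up to statistical distance $\exp(-\Omega(\kappa))$, the rejection step retains pairs whose coset distribution is the i.i.d.\ pair distribution $\mathcal{M}'$ from Corollary~\ref{cor:square_sampler}. Because the rejection only looks at cosets, the retained vectors are still a mixture of independent Gaussians with parameter $s$, so Item~\ref{item:squared} of Corollary~\ref{cor:mixtures} applies pair-by-pair: each averaged pair is exactly a fresh $D_{\lat - \vec{t}, s/\sqrt{2}}$ sample, independently across pairs. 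Hence $\mathcal{A}$'s output is $M'$ i.i.d.\ samples from $D_{\lat - \vec{t}, s/\sqrt{2}}$ up to the stated statistical distance.

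Finally I would transfer this to the no-rejection setting via domination. By Corollary~\ref{cor:square_sampler}, the coset distribution of $(\vec{X}_1,\ldots,\vec{X}_M)$ dominates $\mathcal{M}'$, i.e.\ the coset distribution of $\mathcal{A}$'s retained sublist. Both sides are mixtures of independent Gaussians with parameter $s$, so by Items~\ref{item:mixtures_preserved} and~\ref{item:dominated_preserved} of Corollary~\ref{cor:mixtures}, Procedure~\ref{proc:pair_and_average} sends them to mixtures of Gaussians with parameter $s/\sqrt{2}$ while preserving the domination (at the level of coset distributions, hence of mixtures by Definition~\ref{def:mixed_gaussian}). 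Chaining these facts shows that the no-rejection output dominates $\mathcal{A}$'s output, and therefore dominates $M'$ i.i.d.\ samples from $D_{\lat - \vec{t}, s/\sqrt{2}}$ up to statistical distance $\exp(-\Omega(\kappa))$. The main care is bookkeeping: confirming that Theorem~\ref{thm:square_sampler}'s rejection truly returns pairs suitable for Item~\ref{item:squared} of Corollary~\ref{cor:mixtures} (immediate once we condition on the retained cosets being distributed as $\mathcal{M}'$) and tracking the $\exp(-\Omega(\kappa))$ loss through the coupling. The domination step itself is essentially automatic once this mixture-of-Gaussians framework is in place.
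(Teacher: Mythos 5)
Your proposal is correct and follows essentially the same route as the paper's proof: both set up the coset distribution, use Lemma~\ref{lem:collision_identity} to identify $\pcol$ and $\pmax$ with the quantities in the hypothesis and in $M'$, invoke Theorem~\ref{thm:square_sampler}/Corollary~\ref{cor:square_sampler} to compare against the ``squared'' pair distribution, apply Item~\ref{item:squared} of Corollary~\ref{cor:mixtures} to turn those pairs into $M'$ i.i.d.\ samples from $D_{\lat-\vec{t},s/\sqrt{2}}$, and then transfer the domination through Procedure~\ref{proc:pair_and_average} via Fact~\ref{fact:monotone} and Items~\ref{item:mixtures_preserved}--\ref{item:dominated_preserved}. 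The only cosmetic difference is that you package the comparison as an explicit auxiliary algorithm $\mathcal{A}$ (rejection plus pairing), whereas the paper compares the pair coset distribution $\floor{\mathcal{M}/2}$ to $\mathcal{M}'$ directly; the underlying argument is identical.
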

\begin{proof}
	By Item~\ref{item:mixtures_preserved} of Corollary~\ref{cor:mixtures}, the resulting distribution will in fact be a mixture of independent Gaussians over $\lat - \vec{t}$ with parameter $s/\sqrt{2}$. Notice that, if $\mathcal{M}$ is the coset distribution of $(\vec{X}_1,\ldots, \vec{X}_M)$, then Procedure~\ref{proc:pair_and_average} first maps the $\vec{X}_i$ into the mixture of independent Gaussians over $\lat - \vec{t}$ with parameter $s$ and coset distribution $\floor{\mathcal{M}/2}$ and then takes the averages of the corresponding pairs of these vectors. 
	
	We wish to apply Corollary~\ref{cor:square_sampler} over the coset distribution, with the probabilities $p_i := p_{2\lat  + \vec{c}}$ taken to be the weights of the cosets in the original distribution discrete Gaussian,
	\[
	p_{2\lat + \vec{c}} := \frac{\rho_s(2\lat + \vec{c} - \vec{t})}{\rho_s(\lat - \vec{t})}
	\;. 
	\]
	Notice that, by Lemma~\ref{lem:collision_identity},
	\[
	M' =  \left\lceil M \cdot \frac{\pcol}{32 \kappa \pmax} \right\rceil
	\;, 
	\]
	which is exactly what is needed to apply Corollary~\ref{cor:square_sampler}.
	By the corollary, up to statistical distance $\exp(-\Omega(\kappa))$ this distribution dominates the mixture of independent Gaussians over $\lat - \vec{t}$ with parameter $s$ whose coset distribution is given by $\vec{c}_{2k-1} = \vec{c}_{2k}$ for $1 \leq k \leq M'$, with the odd-indexed cosets $\vec{c}_{2k-1}$ sampled independently from the distribution that assigns to coset $\vec{c} \in \lat/(2\lat)$ probability
	\[
	\frac{p_i}{\pcol} = \frac{\rho_s(2\lat + \vec{c} - \vec{t})^2}{\sum_{\vec{d} \in \lat/(2\lat)}\rho_s(2\lat + \vec{d} - \vec{t})^2}
	\; .
	\]
	Notice that this ``squared'' distribution'' (so-called because the cosets are given weight proportional to their square) is simply $M'$ independent copies of the distribution from Item~\ref{item:squared} of Corollary~\ref{cor:mixtures}. So, if we run Procedure~\ref{proc:pair_and_average} on this ``squared'' distribution, the output will be exactly $M'$ independent samples from $D_{\lat - \vec{t},s/\sqrt{2}}$.
	
	Finally, by Fact~\ref{fact:monotone}, we see that, since the actual pairs dominate these ``squared'' pairs (up to statistical distance $\exp(-\Omega(\kappa))$), the output must dominate $M'$ independent samples from $D_{\lat - \vec{t}, s/\sqrt{2}}$.
\end{proof}
	
\begin{corollary}
	\label{cor:pipeline}
	For any lattice $\lat \subset \R^n$, shift $\vec{t} \in \R^n$, parameter $s > 0$, integer $M \geq 2$, and confidence parameter $\kappa \geq \Omega(\log M)$,
	if $\vec{X}_1,\ldots, \vec{X}_M$ are sampled independently from $D_{\lat- \vec{t}, s}$ with
	\[
	M \geq (10\kappa)^{2\ell} \cdot \frac{\rho_s(\lat - \vec{t})}{\max_{\vec{c} \in \lat/(2\lat)} \rho_s(2\lat + \vec{c} - \vec{t})}
	\; ,
	\] 
	and we apply Procedure~\ref{proc:pair_and_average} repeatedly to the $\vec{X}_i$ a total of $\ell \geq 1$ times, the result will be a mixture of independent Gaussians with parameter $s/2^{\ell/2}$ that dominates the distribution of 
	\[
	M' := \left\lceil  \frac{M}{(32\kappa)^\ell} \cdot \prod_{i=0}^{\ell-1} \frac{\rho_{s/2^{(i+1)/2}}(\lat - \vec{t}) \rho_{s/2^{(i+1)/2}}(\lat)}
	{\rho_{s/2^{i/2}}(\lat - \vec{t}) \cdot \max_{\vec{c} \in \lat/(2\lat)} \rho_{s/2^{i/2}}(2\lat + \vec{c} - \vec{t})} \right\rceil
	\]
	independent samples from $D_{\lat - \vec{t},s/2^{\ell/2}}$,
	up to statistical distance
	$\ell \exp(-\Omega(\kappa))$.
\end{corollary}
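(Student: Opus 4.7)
The plan is to prove this by induction on $\ell$, with the base case $\ell = 1$ being exactly Theorem~\ref{thm:combiner}. At each stage we will apply Theorem~\ref{thm:combiner} once more with parameter $s/2^{i/2}$ in place of $s$, then stitch the steps together using the machinery of mixtures and domination from Section~\ref{sec:average_stupid}.

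For the inductive step, suppose the claim holds for $\ell - 1$. By induction, after $\ell - 1$ applications of Procedure~\ref{proc:pair_and_average}, the current list is a mixture of independent Gaussians with parameter $s/2^{(\ell-1)/2}$ that dominates (up to statistical distance $(\ell-1)\exp(-\Omega(\kappa))$) a list of $M_{\ell-1}$ independent samples from $D_{\lat - \vec{t}, s/2^{(\ell-1)/2}}$, where $M_{\ell-1}$ is given by the product formula with $\ell$ replaced by $\ell - 1$. Applying Theorem~\ref{thm:combiner} to these $M_{\ell-1}$ genuine samples produces, after one more pair-and-average, a mixture that dominates $M_\ell$ independent samples from $D_{\lat - \vec{t}, s/2^{\ell/2}}$, with the additional loss factor for step $\ell$ fitting exactly into the telescoping product. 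To transfer this back to the \emph{actual} mixture (rather than the idealized one), I use Item~\ref{item:dominated_preserved} of Corollary~\ref{cor:mixtures}, which says Procedure~\ref{proc:pair_and_average} preserves domination; combined with transitivity of domination (Fact~\ref{fact:monotone}, part~2), this shows that the real output dominates the idealized output. A final union bound adds one more $\exp(-\Omega(\kappa))$ to the total statistical distance.

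The main obstacle will be verifying that the hypothesis of Theorem~\ref{thm:combiner} is actually met at every intermediate step, i.e.\ that the current sample count exceeds $10\kappa^2$ times the corresponding coset ratio $\rho_{s/2^{i/2}}(\lat - \vec{t}) / \max_{\vec{c}} \rho_{s/2^{i/2}}(2\lat + \vec{c} - \vec{t})$. This is precisely what Corollary~\ref{cor:loss_factor_product} is designed for: combining the telescoping product $\prod_{j=0}^{i-1}\bigl(\cdots\bigr)$ that describes the expected surviving sample count with the inequality from that corollary, one obtains
\[
M_i \ \geq\ \frac{M}{(32\kappa)^i}\cdot \frac{\rho_{s/2^{i/2}}(\lat-\vec{t})}{\max_{\vec{c}}\rho_{s/2^{i/2}}(2\lat+\vec{c}-\vec{t})}\cdot \frac{\max_{\vec{c}}\rho_{s}(2\lat+\vec{c}-\vec{t})}{\rho_{s}(\lat-\vec{t})}.
\]
So it suffices to have $M \geq 10\kappa^2 (32\kappa)^i \cdot \rho_s(\lat-\vec{t})/\max_{\vec{c}}\rho_s(2\lat+\vec{c}-\vec{t})$ for every $i \leq \ell - 1$, and the assumed lower bound $M \geq (10\kappa)^{2\ell}\cdot\rho_s(\lat-\vec{t})/\max_{\vec{c}}\rho_s(2\lat+\vec{c}-\vec{t})$ easily dominates the worst case $i = \ell - 1$ (since $10\kappa^2 (32\kappa)^{\ell-1} \leq (10\kappa)^{2\ell}$ for all $\kappa \geq 1$, $\ell \geq 1$).

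In summary, the proof is a mechanical induction whose engine is Theorem~\ref{thm:combiner}; Fact~\ref{fact:monotone} and Item~\ref{item:dominated_preserved} handle the propagation of domination through mixtures and through Procedure~\ref{proc:pair_and_average}; Corollary~\ref{cor:loss_factor_product} converts the awkward telescoping product of loss factors into the simple condition on $M$ needed at every stage; and a union bound over the $\ell$ invocations of Theorem~\ref{thm:combiner} yields the claimed total error of $\ell \exp(-\Omega(\kappa))$. I do not expect any non-trivial new ideas beyond careful bookkeeping of the loss factors and sample counts.
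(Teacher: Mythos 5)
Your proof is correct and follows the same route as the paper: induction on $\ell$ with Theorem~\ref{thm:combiner} as the engine, Item~\ref{item:dominated_preserved} of Corollary~\ref{cor:mixtures} together with transitivity from Fact~\ref{fact:monotone} to propagate domination through the final pair-and-average step, Corollary~\ref{cor:loss_factor_product} to verify that the sample count at each intermediate stage meets the hypothesis of Theorem~\ref{thm:combiner}, and a union bound over the $\ell$ invocations for the statistical distance. If anything, your explicit invocation of Item~\ref{item:dominated_preserved} makes the domination bookkeeping slightly clearer than the paper's terse ``by Fact~\ref{fact:monotone}, it suffices\ldots''.
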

\begin{proof}
	By Item~\ref{item:mixtures_preserved} of Corollary~\ref{cor:mixtures}, the output will in fact be a mixture of independent Gaussians over $\lat - \vec{t}$ with parameter $s/2^{\ell/2}$. The only question is what the coset distribution is.
	
	To show that the coset distribution is as claimed, the idea is to simply apply Theorem~\ref{thm:combiner} $\ell$ times. In particular, we prove the result via induction on $\ell$. When $\ell = 1$, this is exactly Theorem~\ref{thm:combiner}. For $\ell > 1$, we assume the statement is true for $\ell-1$. In particular, before applying Procedure~\ref{proc:pair_and_average} the $\ell$th time, we have a mixture of independent Gaussians with parameter $s/2^{\ell/2}$ that dominates 
	\begin{align*}
	\widehat{M} &:= \left\lceil  \frac{M}{(32\kappa)^\ell} \cdot 
	\prod_{i=0}^{\ell-2} \frac{\rho_{s/2^{(i+1)/2}}(\lat - \vec{t}) \rho_{s/2^{(i+1)/2}}(\lat)}
	{\rho_{s/2^{i/2}}(\lat - \vec{t}) \cdot \max_{\vec{c} \in \lat/(2\lat)} \rho_{s/2^{i/2}}(2\lat + \vec{c} - \vec{t})} \right\rceil\\
	&\geq  10\kappa^2 \cdot \frac{\rho_s(\lat - \vec{t})}{\max_{\vec{c} \in \lat/(2\lat)} \rho_s(\lat + \vec{c} - \vec{t})} \cdot \prod_{i=0}^{\ell-2} \frac{\rho_{s/2^{(i+1)/2}}(\lat - \vec{t}) \rho_{s/2^{(i+1)/2}}(\lat)}
	{\rho_{s/2^{i/2}}(\lat - \vec{t}) \cdot \max_{\vec{c} \in \lat/(2\lat)} \rho_{s/2^{i/2}}(2\lat + \vec{c} - \vec{t})} 
	\end{align*}
	independent Gaussians up to statistical distance $(\ell -1) \exp(-\Omega(\kappa))$.  
	
	By Fact~\ref{fact:monotone}, it suffices to prove that the output of Procedure~\ref{proc:pair_and_average} on these $\widehat{M}$ samples dominates $M'$ independent samples from $D_{\lat - \vec{t},s/2^{\ell/2}}$ up to statistical distance $\exp(-\Omega(\kappa))$. Indeed, this is exactly what Theorem~\ref{thm:combiner} says, provided that 
	\[
	\widehat{M} \geq 10\kappa^2 \cdot \frac{\rho_{s/2^{(\ell-1)/2}}(\lat - \vec{t})}{\max_{\vec{c} \in \lat/(2\lat)} \rho_{s/2^{(\ell-1)/2}}(2\lat + \vec{c} - \vec{t})}
	\; .
	\]
	And, this inequality follows immediately from Corollary~\ref{cor:loss_factor_product} together with the assumed lower bound on $\widehat{M}$.
\end{proof}
	
\section{The initial distribution}

The following theorem was proven by Ajtai, Kumar, and Sivakumar~\cite{AKS01}, building on work of Schnorr~\cite{Schnorr87}.
\begin{theorem}[\cite{Schnorr87,AKS01}]
	\label{thm:BKZ}
	There is an algorithm that takes as input a lattice $\lat \subset \R^n$ and $u \geq 2$ and outputs an $u^{n/y}$-reduced basis of $\lat$ in time $\exp(O(u)) \cdot \poly(n)$, where we say that a basis $\basis = (\vec{b}_1,\ldots, \vec{b}_n)$ of a lattice $\lat$ is $\gamma$-reduced for some $\gamma \geq 1$ if 
	\begin{enumerate}
		\item $\length{\vec{b}_1} \leq \gamma \cdot \lambda_1(\lat)$; and
		\item $\pi_{\{ \vec{b_1} \}^\perp}(\vec{b}_2), \ldots, \pi_{\{ \vec{b}_1 \}^\perp}(\vec{b}_n)$ is a $\gamma$-reduced basis of $\pi_{\{ \vec{b_1} \}^\perp}(\lat)$.
	\end{enumerate}
\end{theorem}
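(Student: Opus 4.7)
The plan is to prove this by combining an exponential-time exact SVP subroutine in low dimension with an outer block-reduction procedure, following Schnorr's BKZ framework. Concretely, by the Ajtai--Kumar--Sivakumar sieve (or any of the more recent $2^{O(u)}$-time SVP algorithms), we may treat as a black box an oracle that, given any sublattice of dimension at most $u$, returns its shortest non-zero vector in time $\exp(O(u))$.

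The outer algorithm is Schnorr's BKZ-$u$ block reduction. Start from an LLL-reduced basis of $\lat$, obtained in polynomial time. Then repeatedly sweep: for each index $i = 1, 2, \ldots, n$, project $\vec{b}_i, \ldots, \vec{b}_{\min(i+u-1, n)}$ onto $\spn(\vec{b}_1, \ldots, \vec{b}_{i-1})^\perp$, call the SVP oracle on this $\min(u, n-i+1)$-dimensional projected sublattice, lift the returned shortest projected vector to a lattice vector, insert it into the basis, and re-LLL-reduce. Halt when an entire sweep produces no update that strictly decreases $\length{\vec{b}_i^*}$ by a factor bounded below by some fixed $1/\delta > 1$. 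Termination is controlled by the standard potential $\Phi := \sum_{i=1}^{n}(n-i) \log \length{\vec{b}_i^*}$, which decreases by at least $\log \delta$ after each successful update and is bounded in magnitude by a polynomial in the input bit-length; this gives $\poly(n, \log \length{\vec{b}_1})$ sweeps in total, each costing $\exp(O(u)) \cdot \poly(n)$, for an overall running time of $\exp(O(u)) \cdot \poly(n)$.

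For the quality bound, upon termination each $\length{\vec{b}_i^*}$ equals the first minimum of the associated $u$-dimensional projected sublattice. Applying Minkowski's bound on Hermite's constant $\gamma_u \leq O(u)$ inside each block and telescoping the resulting inequalities along consecutive blocks yields the standard chain $\length{\vec{b}_1} \leq \gamma_u^{(n-1)/(u-1)} \lambda_1(\lat) \leq u^{n/u} \lambda_1(\lat)$ after absorbing constants into the exponent. The recursive part of the definition of a $\gamma$-reduced basis then comes for free by induction on $n$: after the outer sweep, the projected basis $\pi_{\{\vec{b}_1\}^\perp}(\vec{b}_2), \ldots, \pi_{\{\vec{b}_1\}^\perp}(\vec{b}_n)$ is itself BKZ-$u$ reduced for the $(n-1)$-dimensional lattice $\pi_{\{\vec{b}_1\}^\perp}(\lat)$, so by the inductive hypothesis it is $u^{(n-1)/u}$-reduced, hence $\gamma$-reduced for our $\gamma = u^{n/u}$.

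The main obstacle is the telescoping quality analysis rather than any single component. Combining the per-block Minkowski inequalities into a tight global bound on $\length{\vec{b}_1}/\lambda_1(\lat)$ requires care at the boundary (where the final block has fewer than $u$ vectors) and a clean induction that threads the recursive definition through the projected lattices; the termination analysis is standard but must be carried out with the precise $\delta$-gap condition so that progress steps and polynomial-sweep counts line up. The AKS oracle and LLL preprocessing are used essentially as black boxes.
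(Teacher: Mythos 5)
This theorem is stated in the paper with a citation to Schnorr and Ajtai--Kumar--Sivakumar and no proof is given, so there is no in-paper argument to compare against. Your reconstruction (exact SVP in dimension $u$ via AKS sieving in time $\exp(O(u))$, plugged into a BKZ-$u$ outer loop, with termination via a multiplicative $\delta$-gap and a potential argument, and quality via Hermite-constant telescoping) is exactly the route those references support, and the recursive structure of the $\gamma$-reduced definition does fall out by induction because after a BKZ-$u$ sweep the projected basis $\pi_{\{\vec{b}_1\}^\perp}(\vec{b}_2),\ldots$ is itself BKZ-$u$ reduced, hence $u^{(n-1)/u}$-reduced, hence $u^{n/u}$-reduced.

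Two small places where you are right to flag care. First, plain BKZ as you describe it is not known to terminate in polynomially many SVP calls; your $\delta$-gap stopping rule patches this, but then the blocks are only $\delta$-approximately solved, which turns the telescoping bound into roughly $(\delta\sqrt{\gamma_u})^{(n-1)/(u-1)}$. Second, $\gamma_u^{(n-1)/(u-1)}$ with $\gamma_u = \Theta(u)$ is not literally bounded by $u^{n/u}$. Both slacks are absorbed by adjusting the block size by a constant factor, and the paper only ever uses this theorem with $u = \Theta(n)$ so that $u^{n/u} = n^{O(1)}$; also note the ``$u^{n/y}$'' in the statement is a typo for $u^{n/u}$, which you read correctly. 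So the proof is essentially right as a sketch of the standard argument, with the two constant-factor caveats above and the understanding that a fully polynomial-call termination bound would more cleanly come from Schnorr's original semi-block reduction or from slide reduction rather than from an unbounded BKZ loop.
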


This next theorem is originally due to~\cite{GPV08}, based on analysis of an algorithm originally studied by Klein~\cite{Klein00}. We present a slightly stronger version due to~\cite{BLPRS13} for convenience.

\begin{theorem}[{\cite[Lemma 2.3]{BLPRS13}}]
	\label{thm:GPV}
	There is a probabilistic polynomial-time algorithm that takes as input a basis $\basis $ for a lattice $\lat \subset \R^n$ with $n \geq 2$, a shift $\vec{t} \in \R^n$, and $\hat{s} > C\sqrt{ \log n} \cdot \length{\gs{\basis}}$ and outputs a vector that is distributed exactly as $D_{\lat - \vec{t}, \hat{s}}$, where $\length{\gs{\basis}} := \max \length{\gs{\vec{b}}_i}$.
\end{theorem}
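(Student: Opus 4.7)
The plan is to analyze Klein's nearest-plane sampler~\cite{Klein00}. First I would compute the Gram--Schmidt vectors $\gs{\vec{b}}_1,\ldots,\gs{\vec{b}}_n$ of the input basis and then sample integers $z_n, z_{n-1},\ldots, z_1$ in reverse order: at step $i$, having already fixed $z_{i+1},\ldots,z_n$, let $c_i \in \R$ be the coefficient of $\gs{\vec{b}}_i$ in the expansion of $\vec{t} - \sum_{j>i} z_j \vec{b}_j$ along the Gram--Schmidt basis, and sample $z_i$ from the one-dimensional discrete Gaussian $D_{\Z - c_i,\, \hat{s}/\length{\gs{\vec{b}}_i}}$. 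The output is then $\sum_i z_i \vec{b}_i - \vec{t} \in \lat - \vec{t}$. Each one-dimensional sample can be drawn in polynomial time by rejection sampling on a window of size $\poly(n)$ around $c_i$, because the hypothesis $\hat{s} > C\sqrt{\log n}\cdot\length{\gs{\basis}}$ guarantees that every sub-parameter $\hat{s}/\length{\gs{\vec{b}}_i}$ exceeds the smoothing parameter $\eta_\varepsilon(\Z)$ of $\Z$ for a negligible $\varepsilon$.

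For correctness the key identity is that orthogonality of the Gram--Schmidt vectors decomposes $\length{\vec{v} - \vec{t}}^2 = \sum_i (z_i - c_i)^2 \length{\gs{\vec{b}}_i}^2$ for every $\vec{v} = \sum_i z_i \vec{b}_i \in \lat$, where each $c_i$ depends only on $z_{i+1},\ldots, z_n$ and on $\vec{t}$. Hence $\rho_{\hat{s}}(\vec{v} - \vec{t})$ factors as a product of one-dimensional Gaussian masses along the $\gs{\vec{b}}_i$, and the probability that Klein's procedure produces $\vec{v} - \vec{t}$ is the telescoping product
\[
\prod_{i=1}^{n} \frac{\rho_{\hat{s}/\length{\gs{\vec{b}}_i}}(z_i - c_i)}{\rho_{\hat{s}/\length{\gs{\vec{b}}_i}}(\Z - c_i)}\; .
\]
The numerators multiply to exactly $\rho_{\hat{s}}(\vec{v}-\vec{t})$. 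The defining property of the smoothing parameter gives $\rho_{\hat{s}/\length{\gs{\vec{b}}_i}}(\Z - c_i) = (1 \pm \varepsilon)\, \rho_{\hat{s}/\length{\gs{\vec{b}}_i}}(\Z)$, so the product of denominators is $\vec{v}$-independent up to a $(1\pm n\varepsilon)$ factor, which shows that the output distribution is $D_{\lat - \vec{t}, \hat{s}}$ up to negligible statistical distance.

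The hard part is the word \emph{exactly} in the theorem statement: the argument above only delivers negligible statistical distance because of the $(1\pm\varepsilon)$ slack in the smoothing estimate. To upgrade it, one should view Klein's procedure as sampling the true \emph{conditional} distribution of the $i$th Gram--Schmidt coordinate of $D_{\lat - \vec{t},\hat{s}}$ given the outcomes of coordinates $i+1,\ldots,n$, rather than the naive one-dimensional Gaussian $D_{\Z - c_i,\,\hat{s}/\length{\gs{\vec{b}}_i}}$. In the regime $\hat{s} > C\sqrt{\log n}\cdot\length{\gs{\basis}}$, the exact conditional and the naive Gaussian have Rényi divergence $1 + o(1)$, and one can sample the exact conditional in expected polynomial time by rejection sampling (with the smoothing bound supplying the polynomial bound on the expected number of trials). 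Once exact one-dimensional sampling is in hand, the telescoping identity above becomes an equality and the claim follows.
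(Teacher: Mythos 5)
The paper does not prove this theorem itself; it imports it as \cite[Lemma~2.3]{BLPRS13}, which strengthens the original \cite{GPV08} analysis of Klein's sampler from ``statistically close to'' to ``exactly'' $D_{\lat - \vec{t},\hat{s}}$. Your description of Klein's algorithm and the telescoping computation of its output density are correct, and you are right to flag that the naive analysis yields only negligible statistical distance. But your proposed fix---sampling, at each step, the \emph{exact} conditional of $z_i$ given $z_{i+1},\ldots,z_n$ by per-coordinate rejection---has a real gap. That exact conditional is proportional to $\rho_{s_i}(z_i - c_i)\cdot\rho_{\hat{s}}\bigl(\lat_{i-1} + \vec{u}_i(z_i)\bigr)$, where $s_i := \hat{s}/\length{\gs{\vec{b}}_i}$, $\lat_{i-1} := \lat(\vec{b}_1,\ldots,\vec{b}_{i-1})$, and $\vec{u}_i(z_i)$ is a shift that genuinely moves with $z_i$ (by the projection of $\vec{b}_i$ onto $\spn\{\vec{b}_1,\ldots,\vec{b}_{i-1}\}$, which typically lies outside $\lat_{i-1}$). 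The correction factor is the Gaussian mass of an $(i-1)$-dimensional lattice under a $z_i$-dependent shift, an exponential sum you cannot evaluate (or even tightly bound per candidate $z_i$) in polynomial time. R\'enyi closeness tells you the two laws are similar; it does not hand you an efficient exact sampler for an unnormalized target whose density you cannot compute.

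The fix actually used is to defer all of the correction to one \emph{global} rejection at the end. Run Klein with the naive one-dimensional Gaussians $D_{\Z - c_i,\, s_i}$ (these \emph{can} be sampled exactly, being one-dimensional), obtaining $\vec{v} = \sum_i z_i \vec{b}_i$, and then accept with probability $\prod_{i=1}^n \rho_{s_i}(\Z - c_i)/\rho_{s_i}(\Z)$. Each factor is a ratio of one-dimensional theta functions, computable to arbitrary precision and at most $1$. Since Klein's output density is exactly $\rho_{\hat{s}}(\vec{v}-\vec{t})/\prod_i \rho_{s_i}(\Z - c_i)$, the accepted output has density proportional to $\rho_{\hat{s}}(\vec{v}-\vec{t})/\prod_i \rho_{s_i}(\Z)$, i.e., exactly $D_{\lat - \vec{t},\hat{s}}$. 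The hypothesis $\hat{s} > C\sqrt{\log n}\cdot\length{\gs{\basis}}$ puts every $s_i$ above $\eta_\eps(\Z)$ with $\eps$ of order $1/n$, so each factor is at least $\tfrac{1-\eps}{1+\eps}$ and the overall acceptance probability is $\Omega(1)$, giving expected polynomial running time. In short: you identified the right difficulty and the right tool (rejection), but applied the rejection at a per-coordinate level where it cannot be implemented; the deferred global rejection is what makes the argument go through.
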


 \begin{proposition}[{\cite[Proposition 4.5]{ADS15}}]
 	\label{prop:shiftedsublattice}
 	There is an algorithm that takes as input a lattice $\lat \subset \R^n $, shift $\vec{t} \in \R^n$, $r > 0$,  and parameter $u \geq 2$, such that if \[
 	r \geq u^{n/u} (1+\sqrt{n} u^{n/u}) \cdot \dist(\vec{t}, \lat)
 	\; ,
 	\] 
 	then the output of the algorithm is $\vec{y} \in \lat$ and a basis $\basis'$ of a (possibly trivial) sublattice $\lat' \subseteq \lat $ such that all vectors from $\lat - \vec{t}$ of length at most $r/u^{n/u} - \dist(\vec{t}, \lat)$ are also contained in $\lat' - \vec{y} - \vec{t}$, and $\length{\gs{\basis}'} \leq r$. The algorithm runs in time $\poly(n) \cdot 2^{O(u)}$.
 \end{proposition}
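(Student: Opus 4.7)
The plan is to prove Proposition~\ref{prop:shiftedsublattice} by combining BKZ basis reduction with a \emph{restricted} Babai nearest-plane procedure, applied only to the ``long tail'' of the reduced basis. The sublattice $\lat'$ will be spanned by the initial short Gram--Schmidt segment of a reduced basis, and $\vec{y}$ will be the partial Babai lift of $\vec{t}$ through the long segment.

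\textbf{Step 1 (Reduce).} I invoke Theorem~\ref{thm:BKZ} with parameter $u$ to obtain a $u^{n/u}$-reduced basis $\basis = (\vec{b}_1, \ldots, \vec{b}_n)$ of $\lat$ in time $2^{O(u)}\poly(n)$. Unwinding the recursive definition of reducedness yields the bound $\|\gs{\vec{b}}_j\| \leq u^{n/u} \cdot \lambda_1(\pi_j(\lat))$ for every $j$, where $\pi_j$ is orthogonal projection onto $\spn(\vec{b}_1,\ldots,\vec{b}_{j-1})^\perp$. This is the crucial structural property of the basis.

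\textbf{Step 2 (Split and align).} Let $k$ be the largest index with $\|\gs{\vec{b}}_k\| \leq r$ (set $k := 0$ if no such index exists). Define $\basis' := (\vec{b}_1, \ldots, \vec{b}_k)$ and $\lat' := \lat(\basis')$, so the requirement $\|\gs{\basis'}\| \leq r$ holds automatically. Then run Babai's nearest-plane rounding \emph{only} on the coordinates $j = n, n-1, \ldots, k+1$, producing integer coefficients $y_{k+1}, \ldots, y_n$, and output the lattice point $\vec{y} := \sum_{j > k} y_j \vec{b}_j \in \lat$ together with $\basis'$.

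\textbf{Step 3 (Correctness).} Fix any $\vec{v} \in \lat$ with $\|\vec{v} - \vec{t}\| \leq R := r/u^{n/u} - \dist(\vec{t},\lat)$, and let $\vec{v}^* \in \lat$ be a closest vector to $\vec{t}$. I would establish the containment of $\vec{v} - \vec{t}$ in the appropriate shift of $\lat'$ through two claims. For (a) \emph{$\vec{v} - \vec{v}^* \in \lat'$}: the triangle inequality gives $\|\vec{v} - \vec{v}^*\| \leq R + \dist(\vec{t},\lat) = r/u^{n/u}$, so if the $\vec{b}_j$-coefficient of $\vec{v} - \vec{v}^*$ were nonzero for some $j > k$, then $\pi_j(\vec{v} - \vec{v}^*) \in \pi_j(\lat)\setminus\{0\}$, giving $\|\vec{v} - \vec{v}^*\| \geq \lambda_1(\pi_j(\lat)) \geq \|\gs{\vec{b}}_j\|/u^{n/u} > r/u^{n/u}$, contradiction. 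For (b) \emph{$\vec{v}^* - \vec{y} \in \lat'$}: writing $\vec{v}^* = \sum_i v^*_i \vec{b}_i$, I would induct downward on $j$ to show $y_j = v^*_j$ for all $j > k$. Granting $y_i = v^*_i$ for $i > j$, the residual target satisfies
\[
\frac{\langle \vec{t}_{j+1}, \gs{\vec{b}}_j\rangle}{\|\gs{\vec{b}}_j\|^2} = v^*_j + \frac{\langle \vec{t} - \vec{v}^*, \gs{\vec{b}}_j\rangle}{\|\gs{\vec{b}}_j\|^2},
\]
and Cauchy--Schwarz bounds the error by $\dist(\vec{t},\lat)/\|\gs{\vec{b}}_j\| < 1/2$, so rounding recovers $v^*_j$. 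Summing (a) and (b) gives $\vec{v} - \vec{y} \in \lat'$, as required; the running time is dominated by BKZ.

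\textbf{Main obstacle.} The delicate step is (b): the sequence of nearest-plane roundings in the long tail must stably recover the coefficients of \emph{some} fixed closest vector throughout all $O(n)$ tail steps, and this must be compatible with the choice of threshold $r$ used to split the basis. Calibrating the hypothesis on $r$ so that every $\|\gs{\vec{b}}_j\|$ for $j > k$ simultaneously dominates both $u^{n/u}\|\vec{v} - \vec{v}^*\|$ (needed in (a), via the loss factor $u^{n/u}$ in the bound $\|\gs{\vec{b}}_j\| \leq u^{n/u}\lambda_1(\pi_j(\lat))$) and a comfortable multiple of $\dist(\vec{t},\lat)$ (needed in (b), with enough slack to absorb the $\ell_2$-error accumulated over the $O(n)$ rounding steps) is precisely what produces the factor $u^{n/u}(1+\sqrt{n}\,u^{n/u})\dist(\vec{t},\lat)$ in the statement.
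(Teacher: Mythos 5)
Your decomposition into $(\vec{v}-\vec{v}^*) + (\vec{v}^*-\vec{y})$ is a nice reorganization, and parts of it are close to the paper's argument, but Step~(b) has a genuine gap tied to a subtle issue with the choice of $k$. You define $k$ as ``the largest index with $\|\gs{\vec{b}}_k\| \leq r$,'' and claim $\|\gs{\basis'}\| \leq r$ ``holds automatically'' --- but Gram--Schmidt norms of a $u^{n/u}$-reduced basis need not be monotone, so with your definition some $\|\gs{\vec{b}}_i\|$ with $i < k$ could exceed $r$, breaking that claim. The paper instead takes $k$ maximal with $\|\gs{\vec{b}}_i\| \leq r$ for \emph{all} $1 \leq i \leq k$, which makes $\|\gs{\basis'}\| \leq r$ automatic but only yields $\|\gs{\vec{b}}_{k+1}\| > r$; for $j > k+1$ the value $\|\gs{\vec{b}}_j\|$ can be arbitrarily small. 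Your Step~(b) needs $\dist(\vec{t},\lat)/\|\gs{\vec{b}}_j\| < 1/2$ at \emph{every} tail index $j > k$ for the downward Babai induction to recover $v_j^*$; with the paper's (correct) choice of $k$ this is only guaranteed at $j = k+1$. Step~(a) has the same issue (you invoke $\|\gs{\vec{b}}_j\| > r$ for an arbitrary nonzero index $j > k$), though (a) is easily repaired by projecting once via $\pi_{k+1}$ and using $\lambda_1(\M) \geq \|\gs{\vec{b}}_{k+1}\|/u^{n/u} > r/u^{n/u}$.

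The reason the paper avoids this difficulty is structural: rather than running Babai directly on the tail of the original basis $\basis$, it makes a \emph{fresh} call to a $\sqrt{n}\, u^{n/u}$-approximate CVP routine on the projected lattice $\M = \pi_{k+1}(\lat)$ with target $\pi_{k+1}(\vec{t})$, and argues purely from $\lambda_1(\M)$. Since $\lambda_1(\M) > r/u^{n/u} \geq (1+\sqrt{n}u^{n/u})\dist(\pi_{k+1}(\vec{t}),\M)$, the closest vector in $\M$ is unique by a factor larger than $\sqrt{n}u^{n/u}$, so the approximate solver must return the exact one; the coefficients $a_i$ it reports give $\vec{y}$. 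This sidesteps any dependence on the individual tail GS lengths. To rescue your approach you would either need a lower bound on $\min_{j>k}\|\gs{\vec{b}}_j\|$ in terms of $\|\gs{\vec{b}}_{k+1}\|$ that the paper's notion of $u^{n/u}$-reducedness does not obviously supply, or replace the tail-Babai step with a fresh reduction-and-round on $\M$ --- at which point you have essentially recovered the paper's proof.
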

 \begin{proof}
 	On input a lattice $\lat \subset \R^n$, $\vec{t} \in \R^n$, and $r > 0$, the algorithm behaves as follows. First, it calls the procedure from  Theorem~\ref{thm:BKZ} to compute a $u^{n/u}$-HKZ basis $\basis  = (\vec{b}_1, \ldots, \vec{b}_n)$ of $\lat$. Let $(\gs{\vec{b}}_1, \ldots, \gs{\vec{b}}_n)$ be the corresponding Gram-Schmidt vectors. Let $k \geq 0$ be maximal such that $\length{\gs{\vec{b}}_i} \le r$ for $1 \le i \le k$, and let $\basis' = (\vec{b}_1, \ldots, \vec{b}_k)$.  Let $\pi_k = \pi_{\{ \vec{b}_1, \ldots, \vec{b}_k \}^\perp}$ and $\M = \pi_k(\lat)$. The algorithm then calls the procedure from Theorem~\ref{thm:BKZ} again with the same $s$ and input $\pi_k(\vec{t})$ and $\M$, receiving as output $\vec{x} = \sum_{i=k+1}^n a_i \pi_k(\vec{b}_i)$ where $a_i \in \Z$, a $\sqrt{n} u^{n/u}$-approximate closest vector to $\pi_k(\vec{t})$ in $\M$. Finally, the algorithm returns $\vec{y} = -\sum_{i=k+1}^n a_i \vec{b}_i$ and  $\basis' = (\vec{b}_1, \ldots, \vec{b}_k)$.
 	
 	The running time is clear, as is the fact that $\length{\gs{\basis'}} \leq r$. It remains to prove that $\lat' - \vec{y} - \vec{t}$ contains all sufficiently short vectors in $\lat - \vec{t}$. If $k = n$, then $\lat' = \lat$ and $\vec{y}$ is irrelevant, so we may assume that $k < n$. Note that, since $\basis$ is a $u^{n/u}$-HKZ basis, $\lambda_1(\M) \geq \length{\gs{\vec{b}}_{k+1}}/u^{n/u} > r/u^{n/u}$.  In particular, $\lambda_1(\M)  > (1+\sqrt{n} \cdot u^{n/u})\cdot \dist(\vec{t}, \lat) \geq (1+\sqrt{n} \cdot u^{n/u})\cdot \dist(\pi_k(\vec{t}), \M)$. So, there is a unique closest vector to $\pi_k(\vec{t})$ in $\M$, and by triangle inequality, the next closest vector is at distance greater than $\sqrt{n} \cdot u^{n/u}\dist(\pi_k(\vec{t}), \M)$. Therefore, the call to the subprocedure from Theorem~\ref{thm:BKZ} will output the exact closest vector $\vec{x} \in \M$ to $\pi_k(\vec{t})$.
 	
 	Let $\vec{w} \in \lat \setminus (\lat' - \vec{y})$ so that $\pi_k(\vec{w}) \neq \pi_k(-\vec{y}) = \vec{x}$. We need to show that $\vec{w} - \vec{t}$ is relatively long. Since $\basis$ is a $s^{n/s}$-HKZ basis, it follows that
 	\[
 	\length{\pi_k(\vec{w}) - \vec{x}} \geq \lambda_1(\M) > r/u^{n/u}
 	\; .
 	\]
 	Applying triangle inequality, we have
 	\begin{align*}
 	\length{\vec{w} - \vec{t}} \geq \length{\pi_k(\vec{w}) - \pi_k(\vec{t})}
 	\geq \length{\pi_k(\vec{w}) - \vec{x}} - \length{ \vec{x} - \pi_k(\vec{t})}
 	> r/u^{n/u}- \dist(\vec{t}, \lat)
 	\; ,
 	\end{align*}
 	as needed.
 \end{proof}

 \begin{corollary}[{\cite[Corollary 4.6]{ADS15}}]
 	\label{cor:start}
 	There is an algorithm that takes as input a lattice $\lat \subset \R^n$ with $n \geq 2$, shift $\vec{t} \in \R^n$, $M \in \N$ (the desired number of output vectors), and parameters $u \geq 2$ and 
 	$\hat{s} > 0$
 	and outputs $\vec{y} \in \lat$, a (possibly trivial) sublattice $\lat' \subseteq \lat$, and $M$ vectors from $\lat' - \vec{y} - \vec{t}$ such that
 	if 
 	\[
 	\hat{s} \geq 10\sqrt{n \log n} \cdot u^{2n/u} \cdot \dist(\vec{t}, \lat)
 	\; ,
 	\]
 	then the output vectors are distributed as $M$ independent samples from $D_{\lat' - \vec{y} - \vec{t}, \hat{s}}$, and $\lat' - \vec{y} - \vec{t}$ contains all vectors in $\lat - \vec{t}$ of length at most $\hat{s}/(10u^{n/u}\sqrt{\log n})$. The algorithm runs in time $\poly(n) \cdot 2^{O(u)} + \poly(n) \cdot M $. (And, if $\vec{t} = \vec0$, then $\vec{y} = \vec0$.)
 \end{corollary}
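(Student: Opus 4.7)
The plan is to combine Proposition~\ref{prop:shiftedsublattice}, which produces a sublattice $\lat'$ together with a shift $\vec{y}$ such that $\lat' - \vec{y} - \vec{t}$ contains all short enough vectors in $\lat - \vec{t}$ and whose output basis $\basis'$ has controlled Gram--Schmidt lengths, with Theorem~\ref{thm:GPV}, which can sample exactly from $D_{\lat' - \vec{y} - \vec{t}, \hat{s}}$ as long as $\hat{s}$ is larger than $C\sqrt{\log n} \cdot \length{\gs{\basis'}}$. The only real work is to pick a single intermediate radius $r$ that simultaneously verifies the hypothesis of Proposition~\ref{prop:shiftedsublattice}, the hypothesis of Theorem~\ref{thm:GPV}, and the containment guarantee in the conclusion of the corollary.

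First I would set $r := \hat{s}/(C'\sqrt{\log n})$ for a constant $C'$ chosen larger than the constant $C$ of Theorem~\ref{thm:GPV} (e.g.\ $C' = 2C$ with the outer constant $10$ in the corollary's hypothesis increased if necessary). Using $u^{n/u}(1 + \sqrt{n}\, u^{n/u}) \leq 2\sqrt{n}\, u^{2n/u}$ together with the assumed lower bound $\hat{s} \geq 10\sqrt{n\log n}\, u^{2n/u}\cdot \dist(\vec{t},\lat)$, one checks that $r \geq u^{n/u}(1 + \sqrt{n}\, u^{n/u})\cdot \dist(\vec{t},\lat)$, so Proposition~\ref{prop:shiftedsublattice} applies and returns $\vec{y}$, a sublattice $\lat' \subseteq \lat$, and a basis $\basis'$ of $\lat'$ with $\length{\gs{\basis'}} \leq r$, satisfying the guarantee that every vector in $\lat - \vec{t}$ of length at most $r/u^{n/u} - \dist(\vec{t},\lat)$ lies in $\lat' - \vec{y} - \vec{t}$. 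A second easy calculation, again using the assumed lower bound on $\hat{s}$, shows $r/u^{n/u} - \dist(\vec{t},\lat) \geq \hat{s}/(10u^{n/u}\sqrt{\log n})$, which is the containment radius claimed in the corollary.

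Next I would invoke Theorem~\ref{thm:GPV} a total of $M$ times, on input the basis $\basis'$, shift $\vec{y} + \vec{t}$ (viewed as a shift of $\lat'$), and parameter $\hat{s}$. The required hypothesis $\hat{s} > C\sqrt{\log n}\cdot \length{\gs{\basis'}}$ follows from $\length{\gs{\basis'}} \leq r = \hat{s}/(C'\sqrt{\log n})$ and $C' > C$, so each call produces an independent sample from $D_{\lat' - \vec{y} - \vec{t},\, \hat{s}}$. The running time is dominated by the single call to Proposition~\ref{prop:shiftedsublattice}, costing $\poly(n)\cdot 2^{O(u)}$, plus $\poly(n)$ for each of the $M$ subsequent Gaussian samples, giving the stated total of $\poly(n)\cdot 2^{O(u)} + \poly(n)\cdot M$. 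For the parenthetical remark, when $\vec{t} = \vec0$ the internal approximate CVP call inside Proposition~\ref{prop:shiftedsublattice}'s algorithm is asked for a closest vector to $\vec0$ in $\M = \pi_k(\lat)$; the unique closest vector is $\vec0$ itself, so all the coefficients $a_i$ vanish and $\vec{y} = \vec0$ is returned.

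There is no substantial conceptual obstacle; the only delicacy is the constant-juggling in the single choice of $r$, which must be small enough to beat the $C\sqrt{\log n}$ factor in Theorem~\ref{thm:GPV} yet large enough that Proposition~\ref{prop:shiftedsublattice}'s containment radius $r/u^{n/u} - \dist(\vec{t},\lat)$ still exceeds the target $\hat{s}/(10u^{n/u}\sqrt{\log n})$. Both constraints are linear in $\hat{s}/\sqrt{\log n}$ and both reduce, via the hypothesis $\hat{s} \geq 10\sqrt{n\log n}\, u^{2n/u}\dist(\vec{t},\lat)$, to comparisons that have comfortable slack, so a single appropriate constant $C'$ suffices.
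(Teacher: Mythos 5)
Your proposal follows exactly the same two-step blueprint as the paper: call Proposition~\ref{prop:shiftedsublattice} once to produce $\vec{y}$, $\lat'$, and a basis $\basis'$ with controlled Gram--Schmidt lengths, then call the sampler from Theorem~\ref{thm:GPV} $M$ times, with the only delicacy being the choice of the intermediate radius $r$. In fact your choice $r := \hat{s}/(C'\sqrt{\log n})$ with $C' > C$ is the correct direction, whereas the paper's proof sets $r := 10\hat{s}/\sqrt{\log n}$ and then writes ``$\length{\gs{\basis}'} \leq r \leq C\hat{s}/\sqrt{\log n}$'' before invoking Theorem~\ref{thm:GPV}; since that theorem requires $\hat{s} > C\sqrt{\log n}\cdot\length{\gs{\basis}}$, i.e.\ $\length{\gs{\basis'}} < \hat{s}/(C\sqrt{\log n})$, the paper's stated bound has the inequality in the wrong direction and the written choice of $r$ does not actually verify the sampler's hypothesis (it appears to be a typo for $r := \hat{s}/(10\sqrt{\log n})$ or similar). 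You correctly identify that the constants are the only delicate point and flag that the outer constant $10$ may need adjusting; one small caveat you do not spell out is that the same tension also constrains the conclusion's containment radius $\hat{s}/(10u^{n/u}\sqrt{\log n})$ (your two constraints simultaneously force roughly $C' \leq 5$, which is fine only because the constant $C$ in Theorem~\ref{thm:GPV} is small, on the order of $1$). Modulo that unstated assumption on $C$, your argument is correct, matches the paper's approach, and is if anything slightly more careful.
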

 \begin{proof}
 	The algorithm first calls the procedure from Proposition~\ref{prop:shiftedsublattice} with input $\lat$, $\vec{t}$, and 
 	\[
 	r := \frac{10\hat{s}}{\sqrt{\log n}} \geq u^{n/u} (1+\sqrt{n} u^{n/u}) \cdot \dist(\vec{t}, \lat)
 	\; ,
 	\] receiving as output $\vec{y} \in \lat$ and a basis $\basis'$ of a sublattice $\lat' \subset \lat $. It then runs the algorithm from Theorem~\ref{thm:GPV} $M$ times with input $\lat'$, $\vec{y} + \vec{t}$, and $\hat{s}$ and outputs the resulting vectors, $\vec{y}$, and $\lat'$.
 	
 	The running time is clear. By Proposition~\ref{prop:shiftedsublattice}, $\lat' - \vec{y} - \vec{t}$ contains all vectors of length at most $r/u^{n/u} - \dist(\vec{t}, \lat) \geq  \hat{s}/(10u^{n/u}\sqrt{\log n})$ in $\lat - \vec{t}$, and $\length{\gs{\basis}'} \leq r \leq C \hat{s}/\sqrt{ \log n} $. So, it follows from Theorem~\ref{thm:GPV} that the output has the correct distribution.
 \end{proof}

\section{Finishing the proof}

\begin{algorithm}
		\SetKwInOut{Input}{Input}
		\SetKwInOut{Output}{Output}
		\underline{SVP} $(\lat)$\\
		\Input{\ \ A lattice $\lat \subset \R^n$}
		\Output{\ \ A vector $\vec{y} \in \lat$ with $\|\vec{y}\| = \lambda_1(\lat)$}
		Use the procedure from Thereom~\ref{thm:LLL} to compute $\widehat{\lambda}$ with $\lambda_1(\lat) \leq \widehat{\lambda} \leq 2^{n/2} \lambda_1(\lat)$.\\
		\For{$i= 1,\ldots, 200n$}{
		Set $\lat' \subseteq \lat$ and $\vec{X}_1,\ldots, \vec{X}_M \in \lat$ to be the output of Corollary~\ref{cor:start} on input $\lat$, $\vec{t} := \vec0$, $u$, and $s_i := 1.01^{-i} \cdot \widehat{\lambda}$.\\
		\For{$j = 1,\ldots, \ell$}
		{
			$(\vec{X}_1,\ldots, \vec{X}_{M'}) \leftarrow \text{Pair\_and\_Average}(\vec{X}_{1},\ldots, \vec{X}_{M})$\\
			$M \leftarrow M'$
		}
		$\vec{Y}_i \leftarrow \argmin_{\vec{X}_j \neq \vec0} \|\vec{X}_j\|$.
		}
		Output $\argmin \|\vec{Y}_i\|$.
	\caption{\label{proc:SVP} The final $2^{n + o(n)}$-time SVP algorithm. Here $M = 2^{n + \Theta(\log^2 n)}$, $u = \Theta(n)$, and $\ell = \Theta(\log n)$.}
\end{algorithm}

\begin{theorem}[SVP algorithm]
	For any lattice $\lat \subset \R^n$, the output of Procedure~\ref{proc:SVP} on input $\lat$ will be a shortest non-zero vector in $\lat$ except with probability at most $\exp(-\Omega(n))$.
\end{theorem}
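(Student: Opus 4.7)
The proof will closely follow the SVP analysis of~\cite{ADRS15}, with Theorem~\ref{thm:meta_thm} and its downstream corollaries in place of the explicit rejection-sampling argument. Let $s^* := \sqrt{2^{0.198}\pi e/n}\cdot\lambda_1(\lat)$ be the target parameter from Corollary~\ref{cor:approx_right_parameter}, at which a single discrete-Gaussian sample from $D_{\lat, s^*}$ hits a shortest nonzero vector with probability at least $1.4^{-n-o(n)}$. The strategy is to identify one outer iteration $i^*$ of the loop for which, after the $\ell$ applications of Procedure~\ref{proc:pair_and_average}, the output contains a shortest nonzero vector of $\lat$ with probability $1 - \exp(-\Omega(n))$; the other $200n - 1$ iterations can only help, since the procedure returns the minimum over all iterations.

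\textbf{Steps.} First, combining $\widehat\lambda \in [\lambda_1(\lat),\ 2^{n/2}\lambda_1(\lat)]$ from Theorem~\ref{thm:LLL} with the fact that the geometric sequence $\{1.01^{-i}\}_{i=1}^{200n}$ covers a multiplicative range $1.01^{200n}$ far exceeding $2^{\Theta(n)}$, I will argue that some $i^* \le 200n$ satisfies $s_{i^*}/2^{\ell/2} \in [s^*, 1.01\,s^*]$; simultaneously, with $u = \Theta(n)$ and $\ell = \Theta(\log n)$ chosen with sufficiently large constants, $s_{i^*}$ will exceed $10 u^{n/u}\sqrt{\log n}\cdot\lambda_1(\lat)$. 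Second, Corollary~\ref{cor:start} on input $(\lat,\ \vec{t} = \vec{0},\ \hat s = s_{i^*})$ then outputs $M$ independent samples from $D_{\lat', s_{i^*}}$ over a sublattice $\lat' \subseteq \lat$ that includes every shortest nonzero vector of $\lat$. Third, Corollary~\ref{cor:pipeline} (applied with $\kappa = \Theta(n)$) says that after $\ell$ rounds of Procedure~\ref{proc:pair_and_average}, the output is, up to statistical distance $\ell\exp(-\Omega(n))$, a mixture of independent Gaussians over $\lat' - \vec{0}$ at parameter $s_{i^*}/2^{\ell/2}$ that dominates $M'$ independent samples from $D_{\lat', s_{i^*}/2^{\ell/2}}$. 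By Definition~\ref{def:dominates} together with Corollary~\ref{cor:approx_right_parameter}, if $M' \geq 1.4^n \cdot \omega(n)$ then the probability that \emph{none} of these samples is a shortest nonzero vector of $\lat'$ (hence of $\lat$) is at most $(1 - 1.4^{-n-o(n)})^{M'} = \exp(-2^{\Omega(n)})$. Adding the statistical distance and union-bounding over all $200n$ iterations yields the stated error bound.

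\textbf{Main obstacle.} The technical crux is the lower bound $M' \geq 1.4^n\cdot\omega(n)$. Applying Corollary~\ref{cor:loss_factor_product} collapses the telescoping product of per-round Gaussian mass ratios in Corollary~\ref{cor:pipeline} down to a single ratio evaluated only at the extremal parameters $s_{i^*}$ and $s_{i^*}/2^{\ell/2}$. The delicate balance is that $\ell$ must be chosen large enough for Klein's sampler (Theorem~\ref{thm:GPV}) to apply to $\lat'$ and for the shortest vectors to be captured in $\lat'$, yet small enough that the telescoped ratio does not become fatally exponentially small. These are precisely the competing constraints navigated in~\cite{ADRS15}, and matching their parameter choices---$M = 2^{n + \Theta(\log^2 n)}$, $u = \Theta(n)$, $\ell = \Theta(\log n)$---lets the various subexponential losses absorb into the $2^{\Theta(\log^2 n)}$ slack in $M$.
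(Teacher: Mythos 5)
Your plan follows the paper's proof almost step for step: select the iteration $i^*$ at which $s_{i^*}/2^{\ell/2}$ falls in the window of Corollary~\ref{cor:approx_right_parameter}, invoke Corollary~\ref{cor:start} to produce exact Gaussian samples over a sublattice $\lat'$ that retains all shortest vectors, push these through Corollary~\ref{cor:pipeline}, and then argue that the resulting dominating sample count $M'$ is large enough that at least one sample is a shortest vector with overwhelming probability. The one substantive difference in the bookkeeping is that you reach for Corollary~\ref{cor:loss_factor_product} to collapse the telescoping product bounding $M'$, whereas the paper exploits that $\vec{t}=\vec0$ makes the maximal coset the central one (so the product telescopes directly and cleanly), arriving at $M' \geq 2^{n/2}$; both are in essence the same telescoping computation, and Corollary~\ref{cor:loss_factor_product} is indeed applicable.

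One slip worth flagging: the target $M' \geq 1.4^n \cdot \omega(n)$ is too weak as stated. With a per-sample hit probability of $1.4^{-n-o(n)}$, such an $M'$ only gives failure probability $\exp\bigl(-\omega(n)\cdot 1.4^{-o(n)}\bigr)$, which need not be $\exp(-\Omega(n))$ (let alone $\exp(-2^{\Omega(n)})$) since the implicit $o(n)$ term in Corollary~\ref{cor:approx_right_parameter} is not under your control. The paper's bound $M' \geq 2^{n/2}$ resolves this precisely because $\sqrt{2}/1.4 > 1$, so $M' \cdot 1.4^{-n-o(n)} = 2^{\Omega(n)}$; your argument should aim for $M'$ of this size (which the parameter choice $M = 2^{n+\Theta(\log^2 n)}$, $\kappa = \Theta(n)$, $\ell = \Theta(\log n)$ does deliver). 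With that correction you have essentially the paper's proof.
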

\begin{proof}
	The running time is clear. Let $\kappa = \Theta(n)$.
	Let $i$ such that $s_i/2^{\ell/2}$ satisfies the inequality in Corollary~\ref{cor:approx_right_parameter}. By Corollary~\ref{cor:start}, the $(\vec{X}_1,\ldots, \vec{X}_M)$ corresponding to this $i$ will be distributed exactly as $D_{\lat', s_i}$ where $\lat' \subseteq \lat$ contains all vectors of length at most $\lambda_1(\lat)$. So, $\lambda_1(\lat') = \lambda_1(\lat)$, and it suffices to argue that we will find a shortest vector in $\lat'$. By Corollary~\ref{cor:pipeline}, the output distribution $(X_1,\ldots, X_M)$ will be a mixture of independent Gaussians over $\lat'$ with parameter $s_i/2^{\ell/2}$ that dominates the distribution of 
	\[
	M' =  \left\lceil  \frac{M}{(32\kappa)^{\ell}} \cdot \prod_{j=0}^{\ell-1} \frac{\rho_{s_i/2^{(j+1)/2}}(\lat')^2}
	{\rho_{s_i/2^{j/2}}(\lat') \cdot \rho_{s_i/2^{(j+2)/2}}(\lat')} \right\rceil
	\]
	independent samples from $D_{\lat', s_i/2^{\ell/2}}$ up to statistical distance $\exp(-\Omega(\kappa))$,
	where we have applied Lemma~\ref{lem:banaszczyk_growth} to show that the coset with maximal mass is the central coset.
	Noting that this product is telescoping, we have
	\[
	M' =  \left\lceil  \frac{M}{(32\kappa)^\ell} \cdot \frac{\rho_{s_i/\sqrt{2}}(\lat')}{\rho_{s_i}(\lat')} \cdot \frac{\rho_{s_i/2^{\ell/2}}(\lat')}{ \rho_{s_i/2^{(\ell + 1)/2}}(\lat')} \right\rceil \geq 2^{n/2}
	\; ,
	\]
	where we have applied Lemma~\ref{lem:banaszczyk_growth}. The result then follows from Corollary~\ref{cor:approx_right_parameter}, together with the fact that $\sqrt{2} > 1.4$.
\end{proof}

\begin{algorithm}
	\SetKwInOut{Input}{Input}
	\SetKwInOut{Output}{Output}
	\underline{CVP} $(\lat, \vec{t})$\\
	\Input{\ \ A lattice $\lat \subset \R^n$ and target $\vec{t} \in \R^n$}
	\Output{\ \ A vector $\vec{y} \in \lat$ with $\|\vec{y} - \vec{t}\| \leq (1+2^{-n/\log^2n})\cdot\dist(\vec{t}, \lat)$}
	Use the procedure from Thereom~\ref{thm:babai} to compute $\hat{d}$ with $\dist(\lat, \vec{t}) \leq \hat{d} \leq 2^{n/2} \dist(\vec{t}, \lat)$.\\
	\For{$i= 1,\ldots, n$}{
		Set $\lat' \subseteq \lat$, $\vec{y} \in \lat$, and $X_1,\ldots, X_M \in \lat' - \vec{y} - \vec{t}$ to be the output of Corollary~\ref{cor:start} on input $\lat$, $\vec{t}$, $u$, and $s_i := 20n^2 \cdot 2^{-i} \cdot \hat{d}$.\\
		\For{$j = 1,\ldots, \ell$}
		{
			$(\vec{X}_1,\ldots, \vec{X}_{M'}) \leftarrow \text{Pair\_and\_Average}(\vec{X}_{1},\ldots, \vec{X}_{M})$\\
			$M \leftarrow M'$
		}
		$\vec{Y}_i \leftarrow \argmin_{\vec{X}_j} \|\vec{X}_j\|$.
	}
	Output $\vec{t} + \argmin \|\vec{Y}_i\|$.
	\caption{\label{proc:CVP} The final $2^{n + o(n)}$-time SVP algorithm. Here $M = 2^{n + \Theta(n/\log n)}$, $u = \Theta(n)$, and $\ell = \Theta(n/\log^2 n)$.}
\end{algorithm}

\begin{theorem}[CVP algorithm]
	For any lattice $\lat \subset \R^n$ and $\vec{t} \in\R^n$, the output of Procedure~\ref{proc:CVP} on input $\lat$ and $\vec{t}$ will a vector $\vec{y} \in \lat$ with $\|\vec{y} - \vec{t}\| \leq (1+\exp(-\Omega(n/\log^2n))) \cdot \dist(\vec{t}, \lat)$, except with probability at most $\exp(-\Omega(n))$.\footnote{It is immediate from the proof that this result can be extended to work for any approximation factor $\gamma$ with $\gamma > 1 + \exp(-o(n/\log n))$, by taking $\ell = o(n/\log n)$ and $M = 2^{n + o(n)}$ to be slightly larger.}
\end{theorem}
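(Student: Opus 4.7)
The plan is to mirror the SVP proof, adapting to the CVP setting. Let $\kappa = \Theta(n)$. Since Babai's algorithm (Theorem~\ref{thm:babai}) returns $\hat{d}$ with $\dist(\vec{t},\lat) \leq \hat{d} \leq 2^{n/2} \dist(\vec{t},\lat)$, and since $s_i = 20 n^2 \cdot 2^{-i} \hat{d}$ decreases by a factor of two at each step over $i \in \{1,\ldots,n\}$, the first step is to pick an index $i^*$ for which $s_{i^*} \geq \Omega(\sqrt{n \log n}) \cdot \dist(\vec{t},\lat)$ (so that Corollary~\ref{cor:start} applies, using $u^{2n/u} = O(1)$ for $u = \Theta(n)$) while simultaneously having $s_{i^*}/2^{\ell/2}$ small enough to yield the desired approximation factor via Proposition~\ref{prop:DGS_approx_CVP}. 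The range over which $s_i$ varies spans a factor of $2^n$, while the admissible window spans a factor of $2^{\Theta(\ell)} = 2^{\Theta(n/\log^2 n)}$, so such an $i^*$ exists.

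For this $i^*$, Corollary~\ref{cor:start} returns $\vec{y} \in \lat$ and a sublattice $\lat' \subseteq \lat$ together with $M = 2^{n + \Theta(n/\log n)}$ i.i.d.\ samples from $D_{\lat' - \vec{y} - \vec{t}, s_{i^*}}$, where $\lat' - \vec{y} - \vec{t}$ contains every vector of $\lat - \vec{t}$ of length at most $s_{i^*}/(10 u^{n/u} \sqrt{\log n}) \geq \dist(\vec{t},\lat)$. In particular, a closest lattice vector to $\vec{t}$ lives in $\lat' - \vec{y}$, so $\dist(\vec{y}+\vec{t},\lat') = \dist(\vec{t},\lat)$, and it suffices to output an approximate closest vector to $\vec{y}+\vec{t}$ inside $\lat'$. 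Next, I apply Corollary~\ref{cor:pipeline} to conclude that after $\ell = \Theta(n/\log^2 n)$ calls to Procedure~\ref{proc:pair_and_average}, the output is a mixture of independent Gaussians over $\lat' - \vec{y} - \vec{t}$ with parameter $s_{i^*}/2^{\ell/2}$ that dominates $M'$ i.i.d.\ samples from $D_{\lat' - \vec{y} - \vec{t}, s_{i^*}/2^{\ell/2}}$, up to statistical distance $\ell \exp(-\Omega(\kappa))$. Using Corollary~\ref{cor:loss_factor_product} to collapse the telescoping product, together with Lemma~\ref{lem:banaszczyk_growth} to lower bound the initial coset ratio $\max_{\vec{c}} \rho_{s_{i^*}}(2\lat' + \vec{c} - \vec{t})/\rho_{s_{i^*}}(\lat' - \vec{t}) \geq 2^{-n}$, I plan to show $M' \geq 2^{\Omega(n/\log n)}$.

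Finally, I apply Proposition~\ref{prop:DGS_approx_CVP} with $s := s_{i^*}/2^{\ell/2}$ and $r := (1+\epsilon) \dist(\vec{t},\lat)$ for $\epsilon = \exp(-\Omega(n/\log^2 n))$. The choice of $i^*$ forces $y := \sqrt{r^2 - \dist(\vec{t},\lat)^2}/s = \Omega(\sqrt{n})$, so a single Gaussian sample lies within distance $r$ of the closest lattice vector with probability at least $1/2$. Taking a union bound over $M' = 2^{\Omega(n/\log n)}$ i.i.d.\ samples, at least one sample lies in this ball except with probability $\exp(-\Omega(n/\log n))$; by Definition~\ref{def:dominates}, the same bound holds for the actual output distribution, from which the final \emph{argmin} step recovers a $(1+\epsilon)$-approximate closest vector. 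The main obstacle will be the arithmetic in the first step: the lower bound on $s_{i^*}$ forced by Corollary~\ref{cor:start} together with the upper bound on $s_{i^*}/2^{\ell/2}$ needed for a $(1 + 2^{-n/\log^2 n})$-approximation leave only a narrow (but non-empty) window, and this tension is exactly why $\ell = \Theta(n/\log^2 n)$ and $M = 2^{n + \Theta(n/\log n)}$ appear in the procedure; the footnote observation that relaxing $\gamma$ allows smaller $\ell$ and $M$ is then an easy consequence of the same calculation.
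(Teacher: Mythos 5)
Your high-level plan (choose a good index $i^*$, apply Corollary~\ref{cor:start}, then Corollary~\ref{cor:pipeline} with Corollary~\ref{cor:loss_factor_product}, then Proposition~\ref{prop:DGS_approx_CVP}) matches the paper's, but the final step is both different from the paper's and, as written, incorrect.

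The paper needs only $M' \geq 1$: since $\ell = \Theta(n/\log^2 n)$, the parameter $s_{i^*}/2^{\ell/2}$ is a factor of $2^{-\Theta(n/\log^2 n)}$ smaller than $\dist(\vec{t},\lat)$, and taking $r = (1+\exp(-\Omega(n/\log^2 n)))\dist(\vec{t},\lat)$ makes $y = \sqrt{r^2 - \dist^2}/s$ itself of order $2^{\Theta(n/\log^2 n)}$ --- not merely $\Omega(\sqrt{n})$. Then the bound $(2e)^{n/2+1}\exp(-\pi y^2/2)$ in Proposition~\ref{prop:DGS_approx_CVP} is doubly exponentially small, so a \emph{single} Gaussian sample is already a $(1+\epsilon)$-approximate closest vector with probability $1 - \exp(-\Omega(n))$. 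No amplification over many samples is needed. Your route instead claims $y = \Omega(\sqrt{n})$ and a per-sample success probability of at least $1/2$; this underestimates $y$ by an exponential factor, and moreover the bare assertion ``$y = \Omega(\sqrt{n})$ implies success probability $\geq 1/2$'' is not valid without pinning down the constant, because of the $(2e)^{n/2+1}$ prefactor in the proposition (the exponential decay has to overcome that prefactor before it gives anything below $1$).

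Your amplification step also has a genuine arithmetic gap. If $M' = 2^{\Omega(n/\log n)}$ independent samples each fail with probability at most $1/2$, the probability they all fail is $2^{-M'}$, a doubly exponential bound; it is not $\exp(-\Omega(n/\log n))$. As stated, $\exp(-\Omega(n/\log n))$ is both incorrectly derived and \emph{weaker} than the target failure probability $\exp(-\Omega(n))$, so the argument does not close. Moreover, establishing $M' \geq 2^{\Omega(n/\log n)}$ requires the hidden constant in $M = 2^{n + \Theta(n/\log n)}$ to strictly exceed the one arising from $(32\kappa)^\ell = 2^{\Theta(n/\log n)}$, a constraint the procedure's parameters do not automatically guarantee; the paper's design only ensures $M' \geq 1$, which is precisely what it needs. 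Finally, your closing paragraph describes the ``window'' for $s_{i^*}$ as narrow, but it is actually wide: there is a multiplicative factor of $2^{\Theta(\ell)}$ of slack between the lower bound from Corollary~\ref{cor:start} and the upper bound needed for Proposition~\ref{prop:DGS_approx_CVP}, which is exactly why the much stronger conclusion (failure probability $\exp(-\Omega(n))$ for a single sample) is available.
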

\begin{proof}
	The running time is clear. Let $\kappa = \Theta(n)$.
	Let $i$ such that 
	\[
	10\sqrt{n \log n} \cdot u^{2n/u} \cdot \dist(\vec{t}, \lat) \leq s_i \leq 20\sqrt{n \log n} \cdot u^{2n/u} \cdot \dist(\vec{t}, \lat)
	\; .
	\] 
	By Corollary~\ref{cor:start}, the $(\vec{X}_1,\ldots, \vec{X}_M)$ corresponding to this $i$ will be distributed exactly as $D_{\lat' - \vec{y} - \vec{t}, s_i}$ where $\lat' - \vec{y} - \vec{t} \subseteq \lat - \vec{t}$ contains all vectors of length at most $\dist(\vec{t}, \lat)$. So, it suffices to argue that we will find a $(1+2^{-n/\log^2 n})$-approximate shortest vector in $\lat' - \vec{y} - \vec{t}$. By Corollary~\ref{cor:pipeline}, the output distribution $(X_1,\ldots, X_M)$ will be a mixture of independent Gaussians over $\lat' - \vec{y} - \vec{t}$ with parameter $s_i/2^{\ell/2}$ that dominates the distribution of 
	\[
	M' =  \left\lceil  \frac{M}{(32\kappa)^\ell} \cdot \prod_{j=0}^{\ell-1} \frac{\rho_{s_i/2^{(j+1)/2}}(\lat' - \vec{y} - \vec{t}) \rho_{s_i/2^{(j+1)/2}}(\lat)}
	{\rho_{s_i/2^{j/2}}(\lat' - \vec{y} - \vec{t}) \cdot \max_{\vec{c} \in \lat'/(2\lat')} \rho_{s_i/2^{j/2}}(2\lat' + \vec{c} - \vec{y} - \vec{t})} \right\rceil \geq 1
	\]
	independent samples from $D_{\lat' - \vec{y} - \vec{t}, s_i/2^{\ell/2}}$ up to statistical distance $\exp(-\Omega(\kappa))$,
	where we have applied Corollary~\ref{cor:loss_factor_product}.
	
	Notice that $s_i/2^{\ell/2} < \exp(-\Omega(n/\log^2 n)) \dist(\vec{t}, \lat)$. The result then follows from Proposition~\ref{prop:DGS_approx_CVP}, which says that, except with probability $\exp(-\Omega(n))$ a sample from $D_{\lat' - \vec{y} - \vec{t}, s_i/2^{\ell/2}}$ will be a $(1+\exp(-\Omega(n/\log^2n)))$-approximate shortest vector in $\lat' - \vec{y} - \vec{t}$. 
\end{proof}

\end{document}